\algnewcommand{\Inputs}[1]{%
  \State \textbf{inputs:}
  \Statex \hspace*{\algorithmicindent}\parbox[t]{.8\linewidth}{\raggedright #1}
}
\algnewcommand{\Initialize}[1]{%
  \State \textbf{initialize:}
  \Statex \hspace*{\algorithmicindent}\parbox[t]{.8\linewidth}{\raggedright #1}
}
\DeclareRobustCommand{\R}{\ensuremath{\mathbb{R}}}
\titleformat{\subsection}[runin]%
       {\normalfont\bfseries}%
       {\thesubsection}%
       {0.5em}%
       {}%
       [.]%
\normalfont\fontsize{12}{15}\bfseries}{\thesection}{1em}{}
\newcommand{\lb}{\left(}
\newcommand{\rb}{\right)}
\newcommand{\ls}{\left[}
\newcommand{\rs}{\right]}
\newcommand{\lc}{\left\{}
\newcommand{\rc}{\right\}}
\newcommand{\pr}{\mathrm{pr}}
\newcommand{\one}{\mathds{1}}
\newcommand{\E}{\mbox{E}}
\newcommand{\Var}{\mbox{Var}}
\newcommand{\mo}{\mu_{\omega}}
\newcommand{\vo}{\sigma^2_{\omega}}
 \newtheorem{theorem}{Theorem}[section]
\newtheorem{lemma}[theorem]{Lemma}
\title{\bf \Large A global-local approach for detecting hotspots in multiple-response regression}
\author{\large \hspace{-0.5cm} H\'el\`ene Ruffieux$^{a, }$\thanks{Corresponding author (\href{mailto:helene.ruffieux@mrc-bsu.cam.ac.uk}{helene.ruffieux@mrc-bsu.cam.ac.uk}).}\;,
Anthony C. Davison$^b$,
J\"{o}rg Hager$^c$, Jamie Inshaw$^{d}$, 
\\ \hspace{-0.5cm}  \large Benjamin P. Fairfax$^{e}$, Sylvia Richardson$^{a,f}$, Leonardo Bottolo$^{g,f,a}$\\
\footnotesize $^a$MRC Biostatistics Unit, University of Cambridge, Cambridge, United Kingdom\\
\footnotesize $^b$Ecole Polytechnique F\'ed\'erale de Lausanne (EPFL), Lausanne, Switzerland\\
\footnotesize $^c$Nestl\'e Research, EPFL Innovation Park, Lausanne, Switzerland\\
\footnotesize $^d$Wellcome Centre for Human Genetics, University of Oxford, Oxford, United Kingdom\\
\footnotesize \hspace{-1cm}  $^e$Department of Oncology, MRC Weatherall Institute for Molecular Medicine, University of Oxford, Oxford, United Kingdom\\
\footnotesize $^f$Alan Turing Institute, London, United Kingdom\\
\footnotesize $^g$Department of Medical Genetics, University of Cambridge, Cambridge, United Kingdom\\
}
\date{\normalsize \today}
\begin{document}
\sloppy
\maketitle

\begin{abstract} \,
We tackle modelling and inference for variable selection in regression problems with many predictors and many responses.  We focus on detecting  \emph{hotspots}, i.e., predictors associated with several responses. Such a task is critical %
in statistical genetics, as %
hotspot genetic variants shape the architecture of the genome by controlling the expression of many genes and may initiate decisive functional mechanisms underlying disease endpoints. %
Existing hierarchical regression approaches designed to model hotspots suffer from two %
limitations: their discrimination of hotspots is sensitive to the choice of top-level scale parameters for the propensity of predictors to be hotspots, and they do not scale %
to large predictor and response vectors, e.g., of dimensions $10^3-10^5$ in %
genetic applications. %
We address these shortcomings by introducing a flexible hierarchical regression framework that is tailored to the detection of hotspots and scalable to the above dimensions. %
Our proposal %
implements 
a fully Bayesian model for hotspots based on %
the horseshoe %
shrinkage prior. Its global-local formulation 
shrinks noise globally and hence accommodates the highly sparse nature of %
genetic analyses, while being robust to individual signals, thus leaving the effects of hotspots unshrunk. %
Inference is carried out using a fast variational algorithm coupled with a novel %
simulated annealing %
procedure that allows %
efficient exploration of multimodal distributions. 
\end{abstract}

\noindent {\bf Key words: }{Annealed variational inference; Hierarchical model; Horseshoe prior; Molecular quantitative trait locus analyses; Multiplicity control; Normal scale mixture; Regulation hotspot; Shrinkage; Statistical genetics; Variable selection.}

\section{Introduction}\label{sec_introduction}

Understanding the genetic architecture of complex human traits is crucial for predicting health risks and developing effective therapies. %
Over the past two decades, thousands of genome-wide association studies have assessed the effects of millions of genetic variants on disease susceptibility. %
Among other important findings, these studies have revealed that most of the genetic variants involved in associations lie in non-coding regions of the genome \citep{ward2012interpreting, tak2015making}, which renders their functional interpretation difficult and suggests studying how they may affect clinical %
traits
through changes in gene regulation. %
This observation stimulated %
much of the current focus in statistical genetics on %
expression quantitative trait locus (eQTL) analyses, which assess how genetic variants control intermediate gene expression phenotypes. %
Genetic variants can act locally, affecting the expression of a nearby gene (\emph{cis}-eQTL) or they can alter expression of remote transcripts (\emph{trans}-eQTL). 
  Understanding by which mechanisms \emph{trans}-regulation can take place, via a local  \emph{cis} gene that acts on a whole network or via other means, is a subject of active debate \citep{%
  westra2013systematic, solovieff2013pleiotropy, brynedal2017large, yao2017dynamic}. 
  In particular, the detection of \emph{pleiotropic} variants, regulating the expression of tens or possibly hundreds of transcripts, is of great %
  interest: %
  such ``\emph{trans}-hotspot'' genetic variants may provide insight into the regulatory landscape of the transcriptome, and hence into the mechanisms shaping the evolution of the human genome. They may also %
 shed light on important functional processes underlying clinical traits and diseases.

  Despite these promises, %
  the locations and abundance
of master regulatory sites on the genome remain largely unknown. %
Indeed, most %
eQTL studies rely on conventional univariate screening, such as provided by MatrixEQTL \citep{shabalin2012matrix}, and have focused on detecting proximal \emph{cis} associations, %
either to limit the multiple testing burden or because the distal \emph{trans} associations uncovered would fail to replicate. Existing joint modelling approaches that directly model the response covariance \citep[e.g.,][]{yin2011sparse, bhadra2013joint} %
only provide partial solutions %
to modelling hotspots. For computational reasons, they %
are typically limited to the analysis of a few clinical phenotypes or require drastic preliminary dimension reduction that often dilutes or even discards weak but relevant signals. %

The present paper aims to provide an effective statistical tool to bridge this gap: %
it describes a joint modelling framework %
that is tailored to the detection of \emph{trans}-regulatory hotspots, %
while scaling to tens of thousands of molecular expression levels. %
The model consists of a series of sparse regressions linked in a hierarchical manner, which allows the borrowing of strength across all responses (molecular expression levels) and candidate predictors (genetic variants), a key benefit of the Bayesian hierarchical framework adopted. %
It provides information beyond pairwise associations of predictors and responses, and yields interpretable posterior measures of the propensity of predictors to be hotspots. These modelling features %
were introduced and discussed in %
\citet{richardson2010bayesian}, \citet{bottolo2011bayesian} and \citet{ruffieux2017efficient}, wherein the gain in statistical power %
over certain existing approaches %
was demonstrated.

This work focuses on realistic molecular quantitative trait locus settings, where a very large number of responses is analysed. It %
characterizes a parameter sensitivity issue, which was not highlighted in previous work and can be especially damaging %
for large response dimensions, and develops a robust solution based on %
a second-stage continuous shrinkage model that %
allows automatic discrimination of hotspots. %
The sensitivity concerns the specification of hyperparameters for a top-level variance parameter controlling hotspot propensity. %
Specifying variance components in hierarchical models is %
often difficult. %
\citet{gelman2006prior} 
discusses the relevance of %
several %
noninformative and weakly informative priors on random effect variances.  In \emph{large $n$} settings, the Bernstein--von Mises theorem suggests that the choice of prior may be unimportant in practice, but in high-dimensional settings, priors may have a strong impact on inferences. When the variance is close to zero, which is %
the case in sparse scenarios such as %
molecular QTL studies, Gelman cautions that badly chosen priors may severely distort posterior inferences. %
This observation is at the heart of %
work on scale-mixture priors such as the %
Strawderman--Berger prior \citep{strawderman1971proper, berger1980robust}, the Student\emph{-t} prior \citep{gelman2008weakly} or the horseshoe prior \citep{carvalho2010horseshoe}. These shrinkage priors differ in %
the modelling %
of 
the scale parameter, %
and all have substantial mass near zero in order to achieve %
good recovery of the overall sparsity pattern, while being sufficiently heavy-tailed to %
 capture strong signals. Fully noninformative priors (e.g., whereby the scale parameter would be assigned a Jeffreys prior) are ruled out, %
as they would fail to regularize. 
We lean on this body of work and overcome %
the sensitivity issue by introducing a fully Bayesian framework for hotspot detection based on  %
the horseshoe prior. Because it entails both global and local scale parameters, our proposal flexibly adapts to the pleiotropic level and the number of responses associated with each genetic variant, and robustly identifies large individual hotspot effects, whatever %
the overall sparsity level.

The detection of hotspots in molecular QTL studies would not be feasible without fast inference procedures, 
yet scalability should not be at the expense of accurate posterior exploration. This %
is particularly important in very high-dimensional settings, where posteriors %
are difficult to explore because they are highly multimodal. Building on previous work, we propose a computationally advantageous variational inference scheme for our global-local framework; the accuracy of such a scheme was validated and benchmarked against MCMC inference in \citet{ruffieux2017efficient}. Here we pay  particular attention to problems with strongly-correlated predictors, which further exacerbate multimodality. Such settings are %
typically encountered in genetics, as genetic variants exhibit local correlation structures along the genome. %
We %
augment the state-space of our algorithm with a simulated annealing procedure which allows it to escape more easily from local modes, and thus increases the chances of converging to the global mode \citep{rose1990deterministic, ueda1998deterministic}. %

The paper is organized as follows. Section \ref{sec_mot} presents the dataset used throughout the paper and provides a data-driven motivation for our work. Section \ref{sec_ps} states the problem in light of %
\citet{richardson2010bayesian}, \citet{bottolo2011bayesian} and \citet{ruffieux2017efficient} and formalizes its consequences for sensitivity and multiplicity control. Section \ref{sec_fram} presents our modelling framework and discusses its properties. Section \ref{sec_inference} describes our annealed variational inference procedure. 
Section~\ref{sec_simulations} assesses the performance of our approach in simulations, and Section~\ref{sec_application} applies it to real eQTL data.
Section~\ref{sec_conclusion} summarizes the results and gives some general discussion. Our approach is implemented in the publicly available R package \texttt{atlasqtl}.

\section{%
Data and motivating example} \label{sec_mot}

We introduce an eQTL study which serves both to demonstrate the need for tailored modelling of hotspots and to illustrate the merits %
of our proposal throughout the paper. This study differs from %
most molecular QTL analyses, as it involves expression from CD14$^+$ monocytes before and after immune stimulation,  %
performed by exposing the monocytes to the inflammation proxies interferon-$\gamma$ (IFN-$\gamma$) or differing durations of lipopolysaccharide (LPS 2h or LPS 24h). The genetic variants are single nucleotide polymorphisms (SNPs) determined using Illumina arrays and the samples were obtained from $432$ healthy European individuals.

Related work \citep{fairfax2014innate, kim2014characterizing, lee2014common} has suggested that gene stimulation may trigger substantial \emph{trans}-regulatory activity, %
creating favourable conditions for the manifestation of hotspot genetic variants. Indeed, while hotspots often exhibit associations with genes in their vicinity, they are evidenced by their capacity to influence (\emph{trans}-act on) many remote genes. %
In addition to monocyte expression, we consider B-cell expression data for the same samples, %
to contrast the hotspot activity for the two cell types. 

To recall the known drawbacks of the basic univariate screening approach when used for detecting of hotspots, 
we regressed each unstimulated monocyte level on each genetic variant from chromosome one. This led to the following observations (Table \ref{tb1} and Appendix~\ref{app_mot}): %
first, as expected, the estimated effect sizes of \emph{trans} associations uncovered at Benjamini--Hochberg false discovery rate of $20\%$ were substantially smaller than those of the \emph{cis} effects. Second, although this screening uncovered about $2.5$ times more \emph{cis} associations than \emph{trans} associations, about one-third of the former were essentially redundant: because of the local correlation structure on the genome (\emph{linkage disequilibrium}), a single transcript was often assessed as under control by several genetic variants at the same locus, yet these genetic variants are likely to be proxies for a single causal variant. %
Such scenarios were much less represented among the uncovered \emph{trans} associations, as they concerned only about $2\%$ of them. Hence the large number of false positive \emph{cis} associations reported by the marginal screening is likely to have hampered the detection of, weaker, \emph{trans} effects. %

{\centering
\begin{table}[t!]
\small
{\begin{center}
\begin{tabular}{lrrr}
  \hline
 & Number & Number after LD pruning  & Magnitude of estimated effects\\ %
  \hline
\emph{Cis} effects & 1,611 & 1,049 & 0.11 (0.10) \\ %
  \emph{Trans} effects & 655 & 641 & 0.04 (0.03)\\ %
   \hline
\end{tabular}
\end{center}}
\caption{\small Detection of \emph{cis} and \emph{trans} associations by univariate screening using a Benjamini--Hochberg false discovery rate threshold of $0.2$. Effects between a transcript and a SNP located less than $2$ megabases (Mb) to it were defined as in \emph{cis} effects; the remaining effects were defined as \emph{trans} effects. Left: number of detected pairwise associations. Middle: number of detected pairwise associations after grouping those between a given transcript and several SNPs in linkage disequilibrium (LD) using $r^2 > 0.5$ and window size $2$ Mb. Right: average magnitude of  regression estimates, and standard deviation in parentheses. 
}\label{tb1}
\end{table}}

There is a broad consensus about the generality of the above remarks when using marginal approaches \citep{gilad2008revealing, mackay2009genetics, nica2013expression}. 
It may be tempting to view them as consequences of the multiplicity burden entailed by molecular QTL problems; false discovery rate techniques with different corrections for \emph{cis} and \emph{trans} effects have indeed been proposed \citep{peterson2016treeqtl} and may alleviate the issue. Rather than pursue this approach, we anticipate and tackle the question upfront, at 
the modelling stage, %
by building a model for hotspots that can directly borrow information across genes.  %
Hierarchical regression models along this line exist, but none of them allow a fully Bayesian treatment of the hotspot propensities that is computationally feasible at the scale required by current eQTL studies.  %
We now show that adequate %
calibration of hotspot sizes is difficult and uncertain if not properly learnt from the data.

\section{Problem statement}\label{sec_ps}

We consider a series of hierarchically related regressions, with $q$ centered responses, $y = \lb y_1, \ldots, y_q\rb$, and $p$ centered candidate predictors, $X =\lb X_1, \ldots, X_p\rb$, for $n$ samples ($n \ll p$), 
\begin{eqnarray}\label{eq_model}
y_t &\mid& \beta_t, \tau_t \sim \mathcal{N}_n\lb X\beta_t, \tau_t^{-1} I_n\rb , \hspace{3.75cm}  t = 1, \ldots, q\,,\nonumber\\
\beta_{st} &\mid& \gamma_{st}, \tau_t, \sigma^2 \sim \gamma_{st}\,\mathcal{N}\lb 0, \sigma^2\,\tau_{t}^{-1}\rb  + (1-\gamma_{st})\,\delta_0 \,, \hspace{0.75cm} s = 1, \ldots, p\,,\\
\gamma_{st} &\mid& \omega_{st} \sim \mathrm{Bernoulli}\lb \omega_{st} \rb,\nonumber
\end{eqnarray}
where $\delta_0$ is the Dirac distribution, and where $\tau_t$ and $\sigma^{-2}$ are assigned Gamma priors. In the molecular QTL setting on which we focus, the predictors represent $p$ genetic variants, typically SNPs, and the responses are $q$ molecular expression levels, for $n$ individuals. The regression parameters, $\beta_{st}$, are specific to each pair of predictor $X_s$ and response $y_t$ and have spike-and-slab priors to induce sparsity \citep{mitchell1988bayesian, george2000variable}. 
Hence the binary latent variables $\gamma_{st}$ take value unity in case of association, and are zero otherwise. %
The global variance of effects, $\sigma^2$, allows information-sharing across responses associated with overlapping sets of predictors. This specification will be complemented with a second-level model on the probabilities of association $\omega_{st}$ in Section \ref{sec_fram}.

Model formulation $(\ref{eq_model})$ and variants thereof have been employed by authors such as \citet{jia2007mapping}, \citet{richardson2010bayesian}, \citet{bottolo2011bayesian} and \citet{ruffieux2017efficient}. Their proposals differ primarily in the prior specification for the probability of association parameter, $\omega_{st}$. \citet{richardson2010bayesian} and \citet{bottolo2011bayesian} decouple the predictor and response effects by setting $\omega_{st} = \omega_s \times \omega_t$, and place prior distributions on each of $\omega_s$ and $\omega_t$, whereas \citet{jia2007mapping} and \citet{ruffieux2017efficient} use the simpler formulation $\omega_{st} \equiv \omega_s$.
A suitable specification of the predictor-specific parameter $\omega_s$ is crucial, as $\omega_s$ controls the propensity of each predictor $X_s$ to be a hotspot, i.e., to be simultaneously associated with several responses. %
As we now explain, the discrimination of hotspots can be very sensitive to the choice of prior distribution for $\omega_{s}$ and this %
sensitivity becomes particularly %
severe in very large response settings, where the detection of hotspots is a key task. %

For the sake of discussion, we  
illustrate our point %
with the formulation of \citet{ruffieux2017efficient}, %
whereby
\begin{equation}\label{eq_beta}\omega_{st} \equiv \omega_s  \overset{\mathrm{iid}}{\sim} \mathrm{Beta}(a, b)\,, \qquad a, b > 0\,,\qquad\quad s = 1, \ldots, p\,;\end{equation}
similar considerations apply to the models of \citet{jia2007mapping}, \citet{bottolo2011bayesian}, and \citet{richardson2010bayesian}. %
We discuss the choice of the %
hyperparameters $a$ and $b$ through the prior expectation and variance for $\omega_s$. The expectation corresponds to the prior base rate of associated pairs, $\mo = \E(\omega_s) = \mathrm{pr}(\gamma_{st} = 1)$. Its value should be small to induce sparsity, typically $\mo \ll 1$ for $p, q \gg n$, and may be fixed using an estimate of the overall signal sparsity. In contrast, there is no 
prior state of knowledge about %
$\vo = \Var(\omega_s)$, and %
its choice turns out to impact the prior size of hotspots when $q$ is large. %
To formalize this, it is helpful to study prior odds ratios, as \citet{scott2010bayes} did when discussing expected model sizes in single-response sparse regression. 
For a given predictor $X_s$, write $\gamma_s(q_s)$ the $q$-variate indicator vector whose first $0 < q_s \leq q$ entries are unity and the  following $q-q_s$ are zero. The prior odds ratio
\begin{equation}\label{eq_por}
\mathrm{POR}(q_s - 1 : q_s)= \frac{\pr\left\{ \gamma_s(q_s-1)\right\}}{\pr\left\{ \gamma_s(q_s)\right\}} = \frac{b + q -q_s}{a + q_s -1}\,,
\end{equation}
quantifies the penalty induced by the prior when moving from $q_s-1$ to $q_s$ responses associated with  $X_s$. 
The penalty increases with the total number of responses in the model (for fixed $a$, $b$ and $q_s$), but it also decreases monotonically as $q_s$ increases, so that it is \emph{a priori} easier to add a response when $X_s$ is already associated with many responses. %
More insight into this phenomenon can be obtained by looking at the quantity%
\begin{equation}\label{eq_ratiopor} \frac{ \mathrm{POR}(0 : 1)}{ \mathrm{POR}(q_s - 1 : q_s)}\,, %
\end{equation}
which compares the cost of adding a further response association with $X_s$ when moving from the null model or from a model with $q_s-1$ associations already. 

In molecular QTL problems, $q_s$ is typically much smaller than $q$, as each SNP is believed to control just a few %
molecular entities. For $q_s \ll q$, $(\ref{eq_ratiopor})$ behaves roughly linearly in $q_s$ with slope $\approx a^{-1} = \vo \left\{\mo^{2} \,(1-\mo) - \mo \vo\right\}^{-1}$.  Hence, large $\vo$ favours large hotspots while small $\vo$ tends to give an association pattern that is more scattered across predictors.
In the latter case, strong shrinkage towards $\mo \ll 1$ may be induced and the resulting hotspot sizes may be underestimated, whereas, in the former case artifactual hotspots may appear when data are insufficiently informative %
to dominate the prior specification. %
Table $\ref{tab1}$ shows that the penalties $(\ref{eq_ratiopor})$ can differ drastically for different choices of $\vo$. %

\begin{table}[ht]
\centering
\small
\begin{tabular}{rrrrr}
  \hline
\hspace{0.8cm}$q_s$ & $5$ & $10$ & $50$ & $100$ \vspace{-0.2cm}\\
 $\vo$ \hspace{0.7cm} \\
  \hline
$10^{-4}$\hspace{0.6cm}\mbox{}& $1.0$ & $1.1$ & $1.5$ & $2.1$ \\ 
$10^{-3}$\hspace{0.6cm}\mbox{}& $1.4$ & $2.0$ & $6.5$ & $12.2$ \\ 
$10^{-2}$ \hspace{0.5cm}\mbox{}& $6.0$ & $12.3$ & $62.4$ & $125.4$ \\ 
   \hline
\end{tabular}
\caption{\small Ratios (\ref{eq_ratiopor}) for a grid of variances $\vo$ and numbers of associated responses $q_s$. The total number of responses is $q = 20,000$ and the base rate is $\mo = 0.1$. %
The penalty varies greatly depending on the chosen value for $\vo$ and increases roughly linearly with~$q_s$.} 
\label{tab1}
\end{table}

To evaluate the extent to which this could impact inference in flat likelihood scenarios, it is helpful to also study the case where $q_s$ is of order $q$, even though this is unlikely to be encountered in our applications. %
When $q_s \sim q$ (i.e., when $q_s/q$ tends to a strictly positive constant as $q \rightarrow \infty$), %
$(\ref{eq_ratiopor})$ is of order $O(q)$, so that, in weakly informative data settings, the sensitivity may lead to %
the manifestation of massive spurious hotspots associated with nearly all responses.
Such undesired ``pile-up'' effects %
highlight the need to adjust for the dimensionality of the response. %

\begin{figure}[t!]
\includegraphics[scale=0.65]{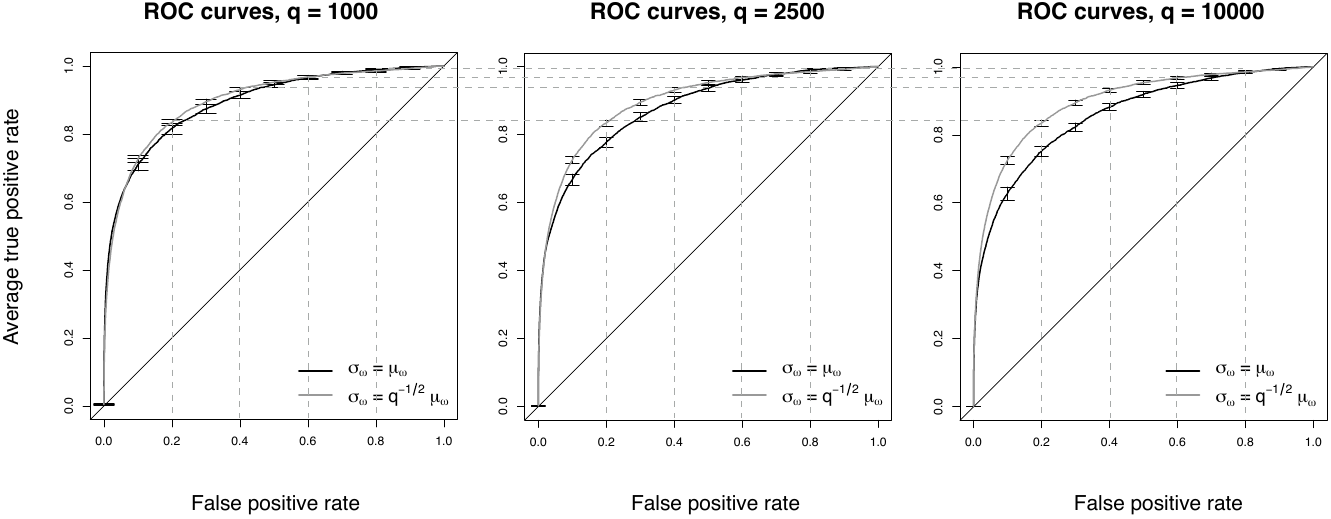}
\caption{\small Variable selection performance with and without multiplicity adjustment, measured by average receiver operating characteristic (ROC) curves with 95\% confidence intervals obtained from $100$ replicates. Three problems are simulated, with an increasing number of response variables, $q = 1,000$ (left), $q= 2,500$ (middle), $q=10,000$ (right), and $p = 100$ candidate predictors for $n = 100$ samples. The pattern of associations is the same for all three scenarios: $50$ responses are chosen randomly among the first $1,000$ responses to be associated with at least one of $10$ predictors; the rest of the responses are drawn from Gaussian noise. For a given response, the proportion of its variance explained by the predictors does not exceed $15 \%$. 
Two implementations of model (\ref{eq_model})--(\ref{eq_beta}) are compared: one uses a fixed choice of variance $\vo = \mo^2$ (black curves); its performance deteriorates as $q$ increases, from left to right. The other uses the proposed adjustment for the total number of responses $q$, i.e., %
$\vo = q^{-1}\mo^2$ (grey curves); its performance remains unchanged as $q$ increases (see grid). The base rate %
is fixed to the simulated proportion of associated predictors, i.e., $\mo = 0.1$.}\label{fig1} %
\end{figure}

The sensitivity of inferences to the hotspot propensity variance relates to the well-known issue of specifying prior distributions for variance components, as $\omega_s$ can be viewed as a random effect. %
While this sensitivity and its related response multiplicity burden are important problems that affect any hierarchically related regression model such as  %
$(\ref{eq_model})$, 
they have %
been neither formalized nor investigated in the literature. In fact, the number of responses presented in %
numerical experiments %
is usually rather small ($10$$-$$1,000$), mainly limited by the heavy computational load of MCMC sampling, %
so that this sensitivity issue typically goes unnoticed. %
Another aspect is that ``pile-up'' effects can be avoided by choosing a %
small hotspot propensity variance, %
at the risk of giving up substantial hotspot selection performance. %
The very sparse nature of molecular QTL analyses also rules out the use of simple empirical Bayes estimates, which typically collapse to the degenerate case $\hat{\sigma}^2_{\mathrm{\omega}} = 0$, see, e.g., \citet{scott2010bayes, van2018learning}. Thus, a %
tailored solution is~needed. %

Our proposal resolves the above issues, based on two considerations. %
First, we argue that ``pile-up'' effects can be prevented by suitably linking the hotspot propensity variance to the number of responses, in effect performing multiplicity adjustment. Indeed, %
choosing $\vo = O\left(q^{-1}\right)$, %
ratio $(\ref{eq_ratiopor})$ is $O(1)$ when $q_s \sim q$. For small values of $\mo$, typically chosen in sparse association problems, this adjustment %
amounts to enforcing small and similar numbers of response associations for all predictors, %
with the degree of shrinkage depending on the number of responses $q$ (in the limiting case $q \rightarrow \infty$, we obtain $\omega_s \equiv \mo$). Figure $\ref{fig1}$ illustrates the degradation of the variable selection performance in moderately informative problems with increasing $q$, and shows how the proposed penalty addresses the issue. %

Second, 
we embed and \emph{relax} this multiplicity adjustment in a fully Bayesian framework involving a second-stage model on the probability of association, $\omega_{st}$, and hence %
infer 
the hotspot propensity variances from the data in a fully automatic way, with %
 no ad-hoc choice or compromise %
 that would bias the hotspot sizes. %

  \section{Global-local modelling framework}\label{sec_fram}

  \subsection{Second-stage probit model on the probability of association} %
 
 As a first step in detailing %
  our proposal, we complement %
 $(\ref{eq_model})$ with a %
 hierarchical probit model on the probability of association, i.e., 
\begin{equation}\label{eq_ext}
\omega_{st} = \Phi(\theta_s + \zeta_t), %
\qquad %
\zeta_t \overset{\mathrm{iid}}{\sim}  \mathcal{N}(n_0, t_0^2), \qquad s = 1, \ldots, p, \; t = 1, \ldots q,
\end{equation}
where $\Phi(\cdot)$ is the standard normal cumulative distribution function, and where we assume for now that $\theta_s \overset{\mathrm{iid}}{\sim} \mathcal{N}(0, s_0^2)$. %
This second-stage model offers a flexible and interpretable representation of the association probability in multi-response settings: it involves a response-specific parameter, $\zeta_t$, which adapts to the sparsity pattern corresponding to each response, %
and a propensity parameter, $\theta_s$, which encodes predictor-specific modulations of the probability of association, as in %
\citet{richardson2010bayesian} and \citet{bottolo2011bayesian}. The hyperparameters $n_0$ and $t_0^2$ are set to match %
 a selected expectation and variance for the prior number of associated predictors per response %
 (see Appendix~\ref{app_hyper}). The variance parameter  $s_0^2$ %
 essentially plays the role of $\vo$, presented in Section \ref{sec_ps}, in influencing the prior odds ratios; in particular, an application of the delta method shows that if $s^2_0 \sim O\left(q^{-1}\right)$ as $q \rightarrow \infty$, then %
 $\Var\left\{\Phi\left(\theta_s\right)\right\} \sim O\left(q^{-1}\right)$. While no closed form can be obtained for prior odds ratios $(\ref{eq_por})$ based on model  $(\ref{eq_ext})$, numerical experiments suggest that %
 $(\ref{eq_ratiopor})$ indeed behaves independently of $q$ when $s_0^2 = q^{-1}$, for $q_s \approx q$ large. Formulation $(\ref{eq_ext})$ sets the stage for introducing our new multiplicity-adjusted hotspot model, which combines the benefits of both global and local control and adaptation.

\subsection{Horseshoe prior on hotspot propensities}\label{sec_hs}

Our proposed specification for the hotspot propensity adds %
flexibility in modelling the scale of $\theta_s$ in (\ref{eq_ext}) by letting %
\begin{equation}\label{eq_hs}
\theta_s \mid \lambda_s, \sigma_0 \sim \mathcal{N}\left(0, \sigma_0^2 \lambda_s^2 \right)\,, \qquad  \lambda_s \overset{\mathrm{iid}}{\sim} \mathrm{C}^+(0, 1)\,, \qquad s = 1, \ldots, p\,,
\end{equation} 
where $\mathrm{C}^+(\cdot, \cdot)$ is a half-Cauchy distribution. 
This corresponds to placing a horseshoe prior \citep{carvalho2010horseshoe} on the hotspot propensities, $\theta_s \mid \sigma_0 \overset{\mathrm{iid}}{\sim} \mathrm{HS}(0, \sigma_0)$. 
The global scale $\sigma_0$ adapts to the overall sparsity pattern, while the Cauchy tails of the predictor-specific scale parameters $\lambda_s$ flexibly capture the hotspot effects. %

The horseshoe prior is a popular example of absolutely continuous shrinkage priors, with newly established theoretical guarantees, %
such as near-minimaxity in estimation \citep{van2017adaptive}. It also belongs to the class of global-local shrinkage priors that have an infinite spike at the origin and regularly-varying tails \citep{polson2010shrink, bhadra2016default}.

\subsection{Multiplicity-adjusted shrinkage profile}\label{sec_multadj}
While the local scale parameters $\lambda_s$ are essential to suitably %
detect the few large signals, the choice of the global scale $\sigma_0$ is no less important, as $\sigma_0$ controls the ability of the model to discriminate signal from noise. \citet{piironen2017sparsity} propose to choose $\sigma_0$ based on specific sparsity assumptions; we extend their considerations to our multi-response setting and further highlight how the dimension of the response needs to be accounted for in order 
 to recover the beneficial shrinkage properties conferred by the horseshoe prior when used in the classical normal means model.
For a given predictor $X_s$, we reparametrize the probit link formulation, 
$$\gamma_{st}\mid \theta_s,  \zeta_t \sim \mathrm{Bernoulli}\left\{\Phi(\theta_s + \zeta_t)\right\}\,, \qquad t = 1, \ldots, q\,,$$
by introducing a $q$-variate auxiliary variable $z_{s}=(z_{s1}, \ldots, z_{sq})$, as 
\begin{equation}\label{eq_reparam_z}\gamma_{st} = \one\{z_{st} > 0\}\,,\qquad z_{st} \mid \theta_s, \zeta_t \sim \mathcal{N}\left(\theta_{s} + \zeta_t, 1\right)\,, \qquad t = 1, \ldots, q\,.\end{equation}
 In this second-stage probit model, %
$z_{st}$ can be understood as data, %
and $\theta$ as a sparse parameter. 
Given the hyperparameters $n_0$ and $t_0^2$ for $\zeta_t$,  we have
$$z_{st} \mid \theta_s \sim \mathcal{N}(n_0 + \theta_s, 1 + t_0^2),$$ so that
$$\E\left(\theta_s \mid z_s, \sigma_0, \lambda_s\right) = (1-\kappa_s)  \frac{1}{q} \sum_{t = 1}^q (z_{st} - n_0) + \kappa_s \times 0 = (1-\kappa_s)\bar{z}'_{s},$$
where $\bar{z}'_{s} =  \bar{z}_s - n_0$ and
$$\kappa_s = \frac{1}{1 + \alpha(\sigma_0) \lambda_s^2}$$
is the \emph{shrinkage factor} for hotspot propensities, with $\alpha(\sigma_0) = q (1 + t_0^2)^{-1} \sigma_0^2$  (Lemma~\ref{sm_lemma_mean} %
of Appendix~\ref{app_mult}). %

In the horseshoe prior literature, with half-Cauchy priors on the local scales %
as well as unit global scale and error variance, this factor has a $\mathrm{Beta}\left(1/2, 1/2\right)$ prior whose shape resembles a horseshoe, hence the name. As this prior density is unbounded at $0$ and $1$, one expects \emph{a priori}, either large effects, with $\kappa_s$ close to zero, or no effects, with $\kappa_s$ close to one. In our case, it can be shown that
$$p(\kappa_s \mid \sigma_0) =\pi^{-1} \alpha(\sigma_0)^{1/2}\, \kappa_s^{-1/2} (1 - \kappa_s)^{-1/2} \ls 1 + \kappa_s \lc \alpha(\sigma_0) - 1\rc\rs^{-1}, \qquad 0 < \kappa_s < 1\,,$$
using $\lambda_s \overset{\mathrm{iid}}{\sim} \mathrm{C}^+(0, 1)$; this prior density reduces to $\mathrm{Beta}\left(1/2, 1/2\right)$ 
when $\alpha(\sigma_0) = 1$, that is, when $\sigma_0^2 \approx q^{-1}$,  as $t_0^2 \ll 1$ under sparse assumptions (Lemma~\ref{sm_lemma_prior_kappa} %
and Figure~\ref{sm_fig_shr} %
of Appendix~\ref{app_mult}). %
This formulation therefore enjoys the shrinkage properties of the horseshoe prior. Critically, using a default choice of $\sigma_0^2 = O(1)$ as $q \rightarrow \infty$ would yield $\E\left(\kappa_s \mid \sigma_0\right) \approx 0$ for $q$ large, so that on average, $\theta_s$ would be %
 unregularized given $z_s$. %
These two choices can be read in light of the discussion in Section \ref{sec_ps}: the latter mirrors the absence of any correction for the dimensionality of the response, possibly creating spurious ``pile-up'' effects, whereas the former satisfies the multiplicity adjustment condition with the proposed scaling factor $q^{-1}$ for $\sigma^2_0$. %

\begin{figure}[t!]
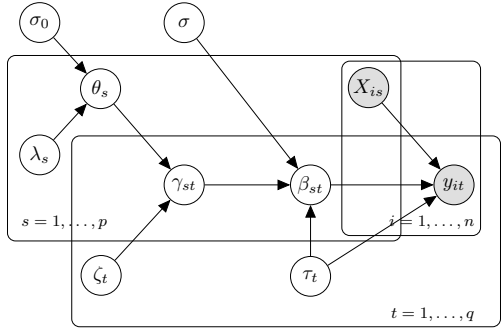

\centering
 \scalebox{0.75}{\tikz{ %
     \node[latent] (lambda) {$\lambda_{s}$} ; %
    \node[latent, right= of lambda, yshift = 1.17cm, xshift = -0.65cm] (theta) {$\theta_{s}$} ; %
         \node[latent, left=of theta, yshift = 1.17cm, xshift = 0.65cm] (sigma0) {$\sigma_{0}$} ; %
    \node[latent, right=of theta, yshift = -1.7cm, xshift = -0.25cm] (gamma) {$\gamma_{st}$} ; %
    \node[latent, right=of gamma, xshift = 0.5cm] (beta) {$\beta_{st}$} ; %
        \node[latent, left=of gamma, yshift=-1.6cm, xshift = 0.26cm] (zeta) {$\zeta_{t}$} ;
     \node[obs, right=of beta, xshift = 0.8cm] (y) {$y_{it}$} ; %
         \node[obs, above=of y, xshift = -1.5cm] (x) {$X_{is}$} ; %
                               \node[latent, below=of beta, yshift =0.1cm] (tau) {$\tau_t$} ; %
                                   \node[latent, above=of gamma, yshift=1.15cm] (sig2inv) {$\sigma$} ; %
    \plate[inner sep=0.1cm, xshift = 0cm] {plate1} {(x) (y)} {$i = 1, \ldots, n$}; %
    \plate[inner sep=0.2cm] {plate2} {(lambda) (theta) (gamma) (beta) (x)} {$s = 1, \ldots, p$\hspace{4.95cm}\mbox{}}; %
    \plate[inner sep=0.3cm, yshift =0.2cm, xshift = 0.15cm] {plate3} {(zeta) (gamma) (tau) (y)} {$t = 1, \ldots, q$}; %
        \edge {lambda} {theta} ; %
            \edge {sigma0} {theta} ; %
    \edge {theta} {gamma} ; %
       \edge {zeta} {gamma} ; %
    \edge {gamma} {beta} ; %
        \edge {sig2inv} {beta} ; %
    \edge {tau} {beta} ; %
    \edge {tau} {y} ; %
     \edge {beta} {y} ; %
     \edge {x} {y} ; %
  }}
  \caption{\small Graphical representation of model (\ref{eq_all}). The shaded nodes are observed, the others are inferred; $\beta_{st}$ is the regression coefficient for association between predictor $X_s$ (SNP) and response $y_t$ (expression level), and $\gamma_{st}$ is the latent binary indicator for the presence or absence of this effect. The probability of association is decoupled into response-specific, $\zeta_t$, and predictor-specific, $\theta_s$, contributions. The latter entails the global-local second-stage model for hotspots.}\label{fig_gm}
\end{figure}
Fixing $\sigma_0^2 = q^{-1}$ would stop the global scale from adapting to the degree of signal sparsity. %
We instead place  a hyperprior on $\sigma_0$ which embeds the penalty. Following \citet{carvalho2010horseshoe}, we choose a half-Cauchy prior,
\begin{equation}\label{eq_global}
\sigma_0 \sim  \mathrm{C}^+(0, q^{-1/2})\,.
\end{equation}
An equivalent parametrization of $(\ref{eq_hs})$ and $(\ref{eq_global})$ is
\begin{equation}\label{eq_rev2}\theta_s \mid \lambda_s, \sigma_0 \sim \mathcal{N}\left(0, q^{-1}\lambda_s^2 \sigma_0^2 \right), \quad \lambda_s \overset{\mathrm{iid}}{\sim} \mathrm{C}^+(0, 1)\,, \quad \sigma_0 \sim \mathrm{C}^+(0, 1)\,,\end{equation}
from which one clearly sees how the multiplicity factor rescales the hotspot propensity variance. For clarity, we gather the complete specification of our global-local hierarchical model; it combines (\ref{eq_model}) and the decomposition of the probability parameter (\ref{eq_ext}) with (\ref{eq_hs}) and (\ref{eq_global}):
\begin{eqnarray}\label{eq_all}
y_t &\mid& \beta_t, \tau_t \sim \mathcal{N}_n\lb X\beta_t, \tau_t^{-1} I_n\rb , \hspace{4.5cm}  t = 1, \ldots, q\,,\nonumber\\
\beta_{st} &\mid& \gamma_{st}, \tau_t, \sigma \sim \gamma_{st}\,\mathcal{N}\lb 0, \sigma^2\,\tau_{t}^{-1}\rb  + (1-\gamma_{st})\,\delta_0 \,, \hspace{1.67cm} s = 1, \ldots, p\,,\\
\gamma_{st} &\mid&\theta_s, \zeta_t \sim \mathrm{Bernoulli}\lc \Phi(\theta_s + \zeta_t) \rc,\qquad  \zeta_t \overset{\mathrm{iid}}{\sim}  \mathcal{N}(n_0, t_0^2),\nonumber\\
\theta_s &\mid& \lambda_s, \sigma_0 \sim \mathcal{N}\left(0, \lambda_s^2 \sigma_0^2 \right), \qquad \lambda_s \overset{\mathrm{iid}}{\sim} \mathrm{C}^+(0, 1)\,, \qquad \sigma_0 \sim  \mathrm{C}^+(0, q^{-1/2})\,,\nonumber
\end{eqnarray}
with Gamma prior distributions for $\tau_t$ and $\sigma^{-2}$; a graphical representation is provided in Figure~\ref{fig_gm}.

\section{Annealed variational inference}\label{sec_inference} %

Joint inference on molecular QTL models is particularly difficult, a serious complication being the high dimensionality of the predictor and the response spaces. %
In our proposal, %
as well as in those of \citet{jia2007mapping}, \citet{richardson2010bayesian}, \citet{bottolo2011bayesian} and \citet{ruffieux2017efficient}, the %
binary latent matrix $\Gamma = \{\gamma_{st}\}$ creates a discrete search space of dimension $2^{p \times q}$, $p, q \gg n$, and the quality of inferences hinges on successful exploration of this space. %
Several sampling schemes have been proposed for spike-and-slab models. Most of them involve drawing each latent component from its marginal posterior distribution, and therefore require costly evaluations of marginal likelihoods at each iteration. %
Mixing problems also arise, %
mainly caused by the difficulty that the sampler has in jumping between the states defined by the spike and the slab components. The resulting sample autocorrelations are high, so %
many iterations are usually needed to collect enough independent samples. 

There are two paths towards scaling up Bayesian inference. The first is to design more efficient Markov Chain Monte Carlo algorithms. While there is a growing literature on the \emph{large $n$} case, with proposals involving partition-based parallelization \citep[e.g.,][]{wang2013parallelizing} or data subsampling \citep[e.g.,][]{bardenet2014towards}, %
research is limited for the high-dimensional case, apart from work on approximating transition kernels \citep{o2004bayesian, bhattacharya2010nonparametric, guhaniyogi2018bayesian}, %
so %
effectively scaling MCMC methods for dimensions such as those involved in molecular QTL analyses is still largely %
out of reach. %
The second path therefore investigates deterministic alternatives to sampling-based approaches. These include expectation-maximization algorithms, expectation-propagation inference and variational inference. Effectively implemented, %
these approaches require only reasonable computing resources. %
A legitimate %
concern, however, is whether fast deterministic inference can be sufficiently accurate for variable selection in genome-wide association studies. In the case of variational inference, \citet{carbonetto2012scalable} and \citet{ruffieux2017efficient} provide positive evidence for accurate posterior exploration, %
notably through extensive comparisons with MCMC inference. %
Here, we build on this previous work and develop an efficient variational inference scheme for our global-local modelling framework. We further improve posterior exploration by coupling our algorithm with a simulated annealing procedure \citep{rose1990deterministic, ueda1998deterministic}. %

Let $v$ be the parameter vector of interest. Variational posterior approximations are obtained by considering a tractable analytical approximation, $q(v)$, to the true posterior distribution, $p(v \mid y)$. The \emph{mean-field} approximation \citep{opper2001advanced} assumes that $q(v)$ factorizes over some partition of $v$, $\{v_j\}_{j=1,\ldots, J}$, i.e.,
\begin{equation}\label{eq_mean_field} q(v) = \prod_{j=1}^J q(v_j)\,, \end{equation}
with no assumption on the functional forms of the $q(v_j)$. One then performs inference by maximizing the following lower bound on the marginal log-likelihood, 
\begin{equation}\label{eq_lb}\mathcal{L}(q) = \int q(v) \log\left\{\frac{p(y, v)}{q(v)}\right\}\mathrm{d}v\,,\end{equation}
 which is a tractable alternative to minimizing the Kullback--Leibler divergence \citep[see, e.g.,][]{blei2017variational},
 \begin{equation}\label{eq_kl}\mathrm{KL}\left(q \middle\| p\right) = - \int q(v) \log\left\{ \frac{p(v\mid y)}{q(v)}\right\} \mathrm{d}v\,. \end{equation}
Quantity (\ref{eq_lb}) is often called ELBO, for \emph{evidence lower bound}, in the machine learning literature.

With each $v_j$ modelled as independent \emph{a posteriori} of the other parameters given the observations and the hyperparameters, mean-field variational inferences (\ref{eq_mean_field}) trade off posterior dependence assumptions and computational complexity. %
For our %
model, independence assumptions between $\beta_{st}$ and $\gamma_{st}$ %
would be particularly problematic: they would make $q(\beta_t)$ a unimodal representation of the marginal distribution $p(\beta_t \mid y)$, and thus a poor proxy for the highly multimodal posterior distribution implied by the spike-and-slab prior on $\beta_t$. Considering model reparametrization (\ref{eq_reparam_z}), %
we instead employ a structured factorization, whereby we model $\beta_{st}, \gamma_{st}$ and $z_{st}$ jointly, i.e., for each fixed $t \in \{1, \ldots, q\}$, we seek a variational distribution of the form
\begin{equation}\label{eq_struct}\prod_{s=1}^p q(\beta_{st}, \gamma_{st}, z_{st}).\end{equation}
This structured factorization %
induces point mass mixture factors and hence retains the multimodal behaviour of the spike-and-slab distribution. 
It is also a faithful representation of the true posterior distribution when predictors are only weakly dependent, 
since the latter factorizes %
as (\ref{eq_struct}) when using an orthogonal design matrix, as pointed out by \citet{carbonetto2012scalable}. %
 
Other fast deterministic inference procedures based on expectation-maximization algorithms have been proposed for spike-and-slab models \citep{rovckova2014emvs}. While these are limited to producing point estimates, variational procedures infer full approximating distributions, and thus also estimate posterior parameter uncertainty. Variational inference is frequently decried %
 as underestimating posterior variances, however, as a result of both the mean-field independence assumptions \citep{wainwright2008graphical} 
 and the optimization of a \emph{reverse} %
Kullback--Leibler divergence; $p(\cdot)$ and $q(\cdot)$ are swapped in $(\ref{eq_kl})$ compared to the standard Kullback--Leibler divergence. As the variational objective function (\ref{eq_lb}) is not concave, underestimated variances may also affect the ability of the algorithm to retain relevant variables. Indeed, in highly multimodal settings such as those induced by strongly correlated predictors, %
the approximation tends to concentrate mass on a local mode corresponding to a single configuration of variables in groups of correlated predictors.  Figure \ref{fig_ill_ann} considers a problem with highly correlated predictors, where the vanilla variational algorithm completely misses one active SNP and instead picks one of its correlated neighbours with high confidence. %
Figure \ref{fig_ill_ann}  also suggests that averaging posterior probabilities across multiple runs %
with different starting values can produce ``diluted'' posterior summaries %
which better reflect the uncertainty of SNP selection in regions of high linkage disequilibrium (i.e., with strong local correlation). But although averaging mitigates the problem, %
it results in increased computational costs, becoming quite substantial for typical molecular QTL problems. 

Simulated annealing directly %
targets the improved exploration of multimodal posteriors. It introduces a so-called \emph{temperature} parameter which indexes a series of \emph{heated} distributions and controls the degree of separation of their modes. The procedure starts with large temperatures that flatten the distribution of interest, thereby sweeping most of its local modes away and facilitating the search for the global optimum. Temperatures are then progressively decreased until the \emph{cold} distribution, corresponding to the original multimodal distribution, is reached.

 Optimization via simulated annealing was first described in \citet{metropolis1953equation} and \citet{kirkpatrick1983optimization} for Metropolis algorithms and %
was then adapted for expectation-maximization by %
\citet{ueda1998deterministic} and for variational inference by \citet{katahira2008deterministic}. %
Variational inference %
lends itself to %
simulated annealing principles. Indeed, the objective function (\ref{eq_lb}) can be rewritten as the sum of two terms: the expected value of the log joint distribution, which encourages the approximation to put mass on configurations of the variables that best explain the data, and an entropy term, which prefers the approximation to be more dispersed. Annealing inflates the entropy by multiplying it by the temperature parameter, 
$$\mathcal{L}_T(q) = \int q(v) \log p(y, v) \mathrm{d}v -   T \int q(v) \log q(v) \mathrm{d}v\,, \qquad T \geq 1;$$
it penalizes the first term (when $T>1$) and gradually relaxes this penalty until the original variational algorithm is obtained (when $T = 1$). 

There is no consensus on the type of temperature schedule to use.  We follow \citet{kirkpatrick1983optimization} in their choice of geometric schedule and use the specific implementation of \citet{gramacy2010importance}, %
$$T_j = (1 + \Delta)^{j-1}, \qquad \Delta = T_J^{1/(J-1)} -1, \qquad j = J, \ldots, 1,$$ 
where $T_J$ is the hottest temperature.  %
Our experiments suggest that initial temperatures between $2$ and $5$, and grids of $10$ to $100$ temperatures, depending on the computational resources at hand, are sufficient for good exploration. A final purpose of Figure \ref{fig_ill_ann} is to illustrate the benefits of annealed variational inference over classical variational inference: %
while selection based on the former may suffer from poorly chosen starting values, selection based on the latter consistently identifies the relevant SNPs across all $1,000$ restarts. %

\begin{figure}[t!]
\centering
\includegraphics[scale=0.46]{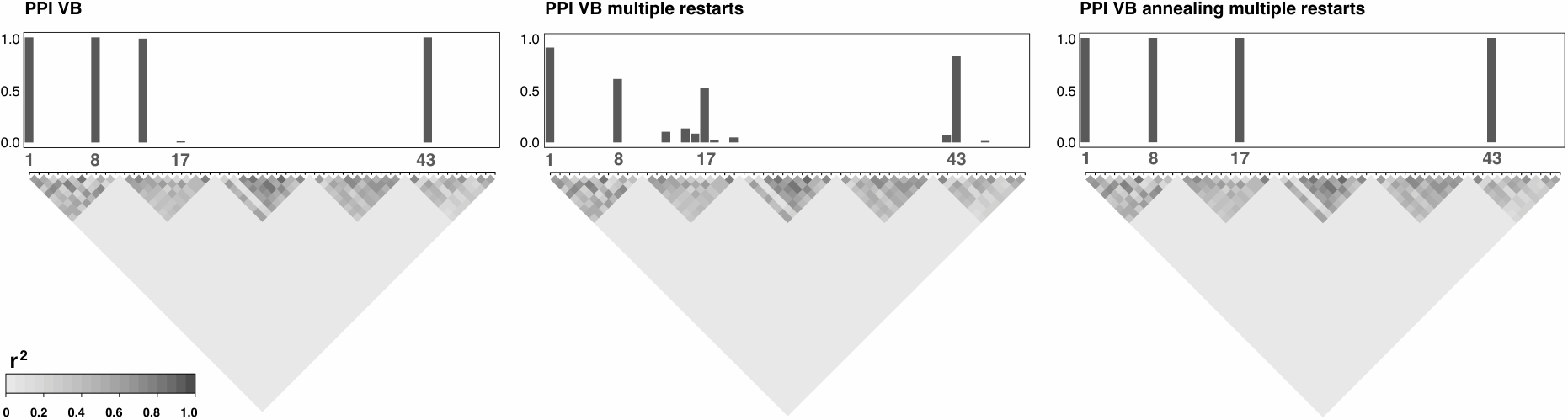}
 \caption{\small Variable selection under high multicollinearity. Problem with a single response and $1,000$ SNPs autocorrelated by blocks as candidate predictors (first $50$ shown). The SNPs simulated as associated with the response explain $30 \%$ of its variance; their positions are marked by the numeric labels.  The bars show the posterior probabilities produced by the variational algorithm, one run (left) and average of $1,000$ runs with different starting values (middle), and by the annealed variational algorithm with initial temperature $T = 5$ and grid of $100$ temperatures, average of $1,000$ runs with different starting values (right).}\label{fig_ill_ann}
\end{figure}

In \emph{large $n$} regimes, the scalability of variational algorithms can often greatly benefit from data-subsampling, which may be implemented  generically in stochastic gradient ascent schemes; this is not the case in high dimensions. %
In this latter regime, we believe that tailored, model-specific, derivations aiming for closed-form updates are important. %
Taking advantage of the conditional conjugacy properties of our model and resorting to suitable reparametrizations, we obtained all the variational updates analytically, albeit using special functions, such as the incomplete gamma and exponential integral functions. In particular, obtaining closed-form updates for the horseshoe's half-Cauchy scale parameters hinged on introducing auxiliary variables \citep[see, e.g.,][]{neville2014mean} to arrive at variational distributions in the Gamma family or involving cheap-to-compute special functions, and this was somewhat complicated by the annealing. The full derivation of the annealed variational updates is in Appendix~\ref{app_vb}. %

We then implemented a block coordinate ascent optimization procedure, where the variational parameters are updated in turn and by blocks for all the responses, exploiting the concavity of $\mathcal{L}(q)$ in each of these blocks. This scheme combined with the rapidly computable %
updates produces a highly effective algorithm. The algorithm returns the variational parameters after convergence, some of which can be directly employed to perform variable selection, e.g., the variational posterior probability of association for each pair, $\mathrm{E}_q\left(\gamma_{st}\right)$, the variational posterior mean of the corresponding regression coefficient, $\mathrm{E}_q\left(\beta_{st}\right)$, and the variational posterior means of the hotspot propensities, $\mathrm{E}_q\left(\theta_s\right)$, where $\mathrm{E}_q\left( \cdot\right)$ is the expectation with respect to the variational posterior distribution $q(\cdot)$ (see Appendix~\ref{app_vb}). %

\section{Simulations}\label{sec_simulations} 

\subsection{Data generation %
for pleiotropic QTL problems}\label{sec_simset}

The numerical experiments presented below are meant to closely reproduce real genetic data scenarios demonstrating pleiotropy, i.e., the control of several outcomes by a single SNP, for which our method is primarily designed. They also broadly illustrate the characteristic features of the method when applied to association studies with a large number of correlated responses.  Simulated data %
mimic molecular QTL data based on general principles of statistical genetics. %
We either extract SNPs from real datasets (for Sections \ref{sec_s2} and \ref{sec_other_appr}) or simulate them as in Hardy--Weinberg equilibrium and autocorrelated by blocks. %
In the latter case, we form the blocks using realisations from multivariate Gaussian latent variables of dimension $50$ and with autocorrelation coefficient drawn uniformly at random in a preselected interval; for simulations of Sections \ref{sec_s1} and \ref{sec_s1_n}, we use $(0.75, 0.95)$. We then use a quantile thresholding rule to code the number of minor alleles as $0$, $1$ or $2$ according to a SNP-specific minor allele frequency drawn from a uniform distribution, $\mathrm{Unif}(0.05, 0.5)$. We also generate block-dependent responses using multivariate normal variables; the blocks consist of $10$ equicorrelated responses, with residual correlation drawn from the interval $(0, 0.25)$ for simulations of Sections \ref{sec_s1} and \ref{sec_s1_n}, and from the interval $(0, 0.5)$ for simulations of Sections \ref{sec_s2} and \ref{sec_other_appr}.

The pleiotropic association pattern is constructed as follows. To model large and functionally inert genomic regions, we partition the SNPs into $N$ chunks of size $200$ and leave $\lfloor N/2 \rfloor$ chunks with no associations. From the remaining chunks, we randomly select labels for the ``active'' predictors, namely, SNPs associated with at least one response. Similarly, we select labels for the ``active'' responses, namely, responses associated with at least one SNP. We then randomly associate each active predictor with one active response, and each active response with one active predictor. For each active SNP $s$, we draw a ``propensity'' parameter $\omega_s$ from a $\mathrm{Beta}(1, 5)$ distribution, and further associate the SNP with other active responses whose labels are sampled with probability $\omega_s$; these SNP-specific propensities $\{\omega_s\}$ therefore create hotspots of different ``sizes''. We effectively generate the associations under an additive dose-effect scheme, whereby each copy of the minor allele results in a uniform and linear increase in risk, and we draw the proportion of a response's variance explained by individual SNPs from a left-skewed Beta distribution to favour the generation of smaller effects. We then rescale these proportions so that the proportion of response variance attributable to genetic variants does not exceed a certain value; the magnitude of SNP effects derives from this value, and the sign of the effects is altered with probability $0.5$. These choices imply an inverse relationship between minor allele frequencies and effect sizes, as expected under natural selection \citep[selection against SNPs with large penetrance is stronger, see, e.g.,][]{park2011distribution}. For a given experiment, we keep the same  association pattern across all replicates, but we regenerate the SNPs (if not real),  effect sizes and responses for each replicate. The remaining settings (e.g., numbers of variables and of samples, proportion of response variance explained by the SNPs) vary, so will be detailed in the text corresponding to each experiment. Data-generation functions are implemented in the R package \texttt{echoseq} available at  \url{https://github.com/hruffieux/echoseq}.

\subsection{Variable selection performance with global-local modelling}\label{sec_s1} %

In this section we evaluate the performance of our proposal for discriminating hotspots and selecting pairs of associated predictor and response variables. 
We simulated a ``reference'' data scenario with hotspots associated with approximately $35$ responses on average and whose cumulated effect sizes are responsible for at most $25\%$ of the variability of each response. We also generated four variants of this scenario: with smaller or larger hotspots (average sizes $\approx 17$ and $85$, respectively), and with weaker or stronger effects (response variance explained by hotspots below $20\%$ and $30\%$, respectively). 
 Each problem involves $p = 1,000$ SNPs and $q = 20,000$ responses (which corresponds the estimated number of protein-coding genes in humans), for $n = 300$ samples. We simulated $20$ hotspots, and, depending on the hotspot size scenario, $100$, $200$ or $500$ responses had at least one association. %

 We benchmarked our global-local model (\ref{eq_all}) against four alternatives. The first three are based on the proposal (\ref{eq_model})--(\ref{eq_beta}) of %
\citet{ruffieux2017efficient}, with three choices of hotspot propensity variance, $\vo$. These choices were made without assuming any prior state of knowledge, as would be faced %
in real data situations: we set the base rate of associated pairs to $\mo = 0.002$, so that two predictors are \emph{a priori} associated with each response, on average. Then, for each model, we set the hotspot propensity scale to a different fraction of this base rate.  The fourth model places a \emph{global} Gamma prior on the hotspot propensity precision and embeds the multiplicity penalty used in our proposal, i.e.,
$$\theta_s \mid \sigma_0 \sim \mathcal{N}\left(0, q^{-1}\sigma_0^2\right), \quad \sigma_0^{-2} \sim \mathrm{Gamma}\left(\frac{1}{2}, \frac{1}{2}\right),$$
which can be reparametrized as %
$$\theta_s \mid \sigma_0 \sim \mathcal{N}\left(0, \sigma_0^2\right), \quad \sigma_0^{-2} \sim \mathrm{Gamma}\left(\frac{1}{2}, \frac{1}{2q}\right).$$
With this choice, the propensity parameter has a Cauchy %
marginal prior distribution, \linebreak $\theta_s~\sim~\mathrm{C}(0, q^{-1/2})$. %
Both the Cauchy and the horseshoe models rely on the base rate level used for the three fixed-variance models to define the prior expected number of predictors associated with each response as $\mathrm{E}_p = \mo \times p = 2$; %
the prior variance for this number is set to $\mathrm{V}_p = 100$, which is large enough to cover a wide range of configurations. 
We use annealed variational inference on all five models; the geometric schedule consists of a grid of $100$ temperatures, with initial temperature $T = 5$.

\begin{figure}[t!]
\begin{center}
\includegraphics[scale=0.5]{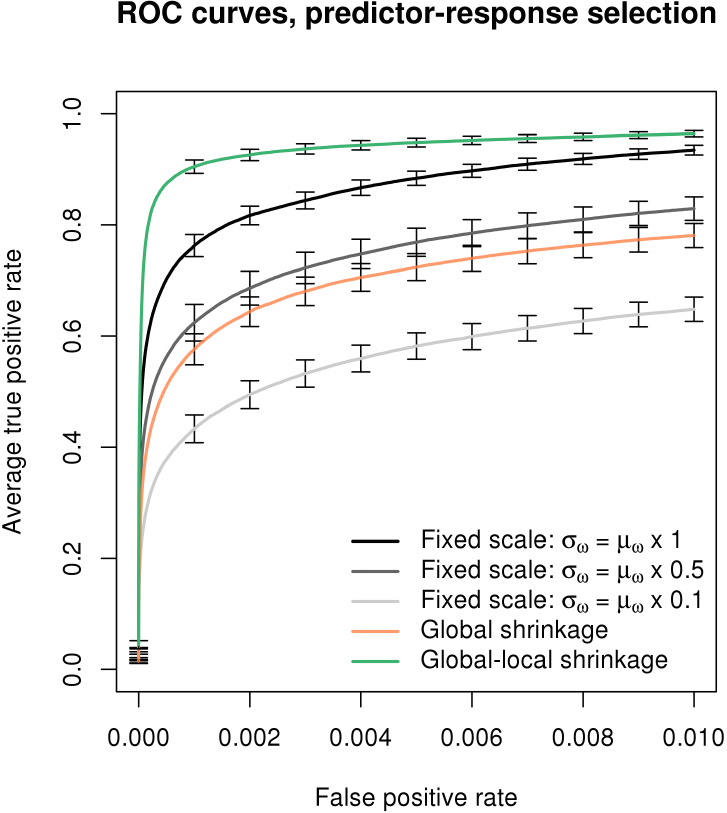}\qquad\qquad
\includegraphics[scale=0.5]{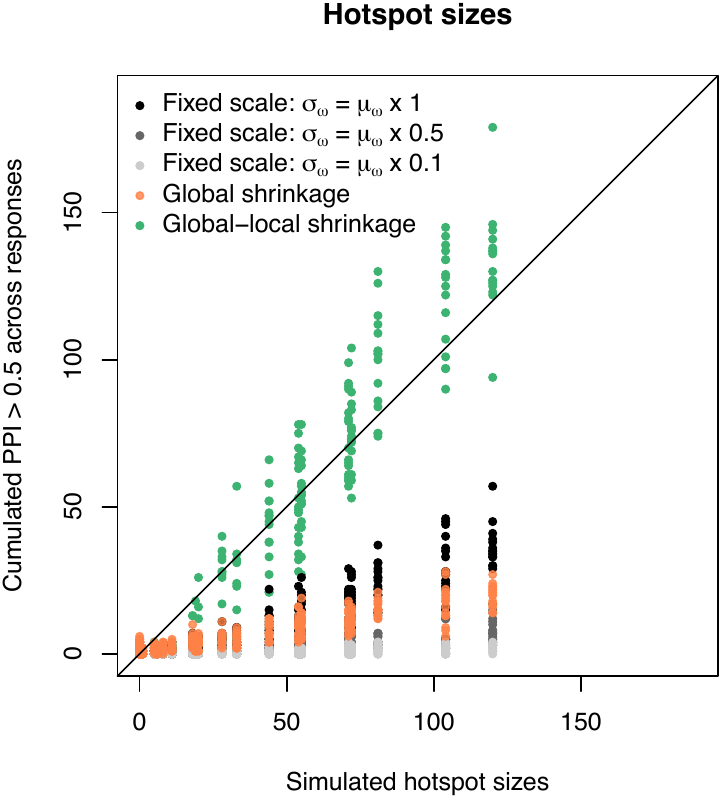}%
\end{center}
\caption{\small Performance of five hotspot modelling approaches, for the ``reference'' data generation case. Left: truncated average ROC curves for predictor-response selection with 95\% confidence intervals obtained from $64$ replicates. %
Right: sizes of recovered hotspots based on the median probability model rule \citep{barbieri2004optimal} applied to the variational posterior probabilities of association, $\mathrm{E}_q\left(\gamma_{st}\right)$; $16$ replicates are superimposed.
The data comprise $p = 1,000$ simulated SNPs with $20$ hotspots, $q = 20,000$ responses, of which $200$ are associated with at least one hotspot, leaving the rest of the responses unassociated. 
The block-autocorrelation coefficients for SNPs were drawn from the interval $(0.75, 0.95)$, and the residual block-equicorrelation coefficients for responses were drawn  from the interval $(0, 0.25)$. At most $25\%$ of each response variance is explained by the hotspots. For the fixed-variance models, we used a base rate $\mo = 0.002$, and scales $\sigma_\omega = \mo \times \{1, 0.5, 0.1\}$, as explained in the text.}\label{fig_s1_ann}
\end{figure}
\begin{table}[ht]
\footnotesize
\centering
\begin{tabular}{rccccc}
  \hline
 & $\sigma_\omega = \mu_\omega \times 0.1$ & $\sigma_\omega = \mu_\omega \times 0.5$ & $\sigma_\omega = \mu_\omega \times 1$ & global & global-local\\ 
  \hline  
  {\bf Pairwise selection\mbox{}} &&&&&\\
Reference %
& 55.5 (1.2) & 74.1 (1.3) & \bf  85.9 (0.7) & 69.5 (1.2) & \bf  93.6 (0.4) \\ 
Smaller hotspots %
 & 56.7 (1.0)& 67.7 (1.3)& \bf  82.6 (0.8) & 74.4 (0.9) & \bf  90.4 (0.6)\\ 
Larger hotspots %
 & 57.7 (1.0) & 84.4 (0.5) &\bf   89.6 (0.3) & 65.3 (0.8) & \bf 96.7 (0.1) \\ 
Weaker hotspots %
& 44.7 (1.0) & 57.5 (1.4) &\bf  77.3 (0.8) & 53.0 (1.2) & \bf 81.5 (1.0) \\
Stronger hotspots %
& 64.0 (1.1) & 82.6 (0.7) & \bf  90.2 (0.4) & 78.6 (0.7) & \bf  96.2 (0.1) \\ 
  {\bf Predictor selection\mbox{}}  &&&&&\\
Reference & 65.8 (2.7)& 68.2 (2.6) & 68.2 (2.2)& \bf 71.7 (2.1)& \bf 74.6 (1.6)\\ 
Smaller hotspots &  53.1 (3.4)& 54.7 (3.3)& 54.9 (3.2) & \bf 61.0 (3.1) & \bf 64.0 (3.0) \\ 
Larger hotspots  & 80.2 (2.9) & \bf  84.3 (2.6) & 84.0 (2.7) & 83.7 (2.4) & \bf 87.1 (2.1) \\
Weaker hotspots  &  52.9 (3.1) & 54.6 (3.2) & 56.0 (2.8) & \bf 59.2 (2.9) & \bf 63.2 (2.5)\\ 
Stronger hotspots  & 75.6 (3.1) & 78.3 (2.8) & 77.4 (2.7) & \bf 81.3 (2.4) & \bf 84.7 (2.1)\\ 
   \hline
\end{tabular}
\caption{\small  Average standardized partial areas under the curve $\times 100$ with false positive threshold $0.01$ for predictor-response selection performance and predictor (hotspot) selection performance. Different hotspot size and effect size scenarios are reported, each based on $64$ replicates; the ``reference'' case is displayed in Figure \ref{fig_s1_ann}. Standard errors are in parentheses and, for each scenario, the best two performances %
are in bold.}\label{tb_s1_range}
\end{table}

Figure \ref{fig_s1_ann} and Table \ref{tb_s1_range}
compare the five models in terms of selection of associated pairs of predictors and responses, selection of predictors (in our case, hotspots) and hotspot size estimation. %
They suggest several comments.

First, they illustrate our motivating statement: %
selection is sensitive to the choice of hotspot propensity variance; the pairwise selection performance of the three models with fixed variances varies greatly. %
The model with small variance strongly shrinks the hotspot sizes, which prevents the detection of many associations. The model with large variance identifies more pairs but %
fails to uncover the smallest hotspots; their estimated signals are overwhelmed by noise as a result of insufficient sparsity being induced (also see Appendix~\ref{app_s1_fig}). %
Moreover, arbitrarily fixing hotspot propensity variances to large values may trigger artifactual ``pile-up'' effects when the data are less informative, as discussed in Section \ref{sec_ps}.

Second, the Cauchy model (global shrinkage only) is often able to discriminate the small hotspot signals from the noise, thanks to its global scale inferred from the data, but is not as good for %
pairwise selection and %
estimation of hotspot sizes; %
because it is mostly informed by SNPs with no simulated associations, the global scale concentrates %
towards zero, which over-penalizes large hotspots, hampering the detection of pairwise associations with these hotspots. This phenomenon is of particular concern when signals are extremely sparse, as is thought to be the case in molecular QTL problems. Degeneracy issues can also arise in empirical Bayes settings, see, e.g., \citet{scott2010bayes, van2018learning} for a discussion, and  \citet{van2016many, van2016conditions, van2017adaptive} for solutions based on prior distributions with support truncated away from zero. One may also attempt to improve the Cauchy specification by acknowledging the presence of genomic regions with diverse degrees of functional plausibility and %
introducing region-specific variance parameters to adapt to these degrees. Although inference may then be %
marginally impacted by the overall signal sparsity, such a formulation raises questions %
on the sensitivity to the chosen genome partition. 

Our proposal performs well for selection of both response-predictor pairs and hotspots. Unlike the fixed-scale models, it can clearly separate the small hotspots from the noise. Moreover, the hotspot sizes are well inferred overall: there is some variability depending on the simulated effects (re-drawn for each replicate), with the very small hotspots often underestimated, but the estimated sizes are much closer to the truth than those of the other models, which all strongly overshrink. %
We obtained the hotspot sizes by thresholding the variational posterior probabilities of association at $0.5$, a threshold which corresponds to the \emph{median probability model} rule described by \citet{barbieri2004optimal} as having optimal prediction performance. %
Hence, the flexibility offered by the horseshoe's heavy-tailed local scale parameters improves on %
global scale parameter formulations, whether the parameter values are fixed or inferred.

\subsection{Null model scenario}\label{sec_s1_n}
We examine the behaviour of our approach on data with neither hotspots nor individual associations. We took the data simulated for the first replicate of the ``reference'' scenario discussed in Section \ref{sec_s1}, but randomly shuffled the response sample labels, thus leaving the response correlation structure untouched. We ran the method on eight such permuted datasets and observed no hotspot using the $0.5$-thresholding rule on the variational posterior probabilities of association: there were at most four associated responses per predictor. %
The %
average proportion of false positive pairwise associations was  $2 \times 10^{-5}$.

\subsection{Annealed variational inference in presence of strong multicollinearity}\label{sec_s2} %

The present numerical experiment focuses on data exhibiting strong predictor and response multicollinearity. %
To best reproduce conditions encountered in molecular QTL studies, %
we used real SNP data from the eQTL study described in Section \ref{sec_mot}. We considered a $1.7$ megabase (Mb) region located $\approx 1$ Mb upstream of the MHC region and comprising $200$ variants for which $n =413$ observations were available. %
We distributed five active SNPs across the blocks and simulated $500$ ``active'' responses. Effects were small, with each response having at most $10\%$ of its variability explained by genetic variation. We added another $19,500$ inactive responses, drawn from Gaussian noise. The residual correlation of the responses spanned larger values than in Section \ref{sec_s1}, with block-correlation coefficients $\rho \in (0, 0.5)$.

\begin{figure}[t!]
\begin{center}
\includegraphics[scale=0.396]{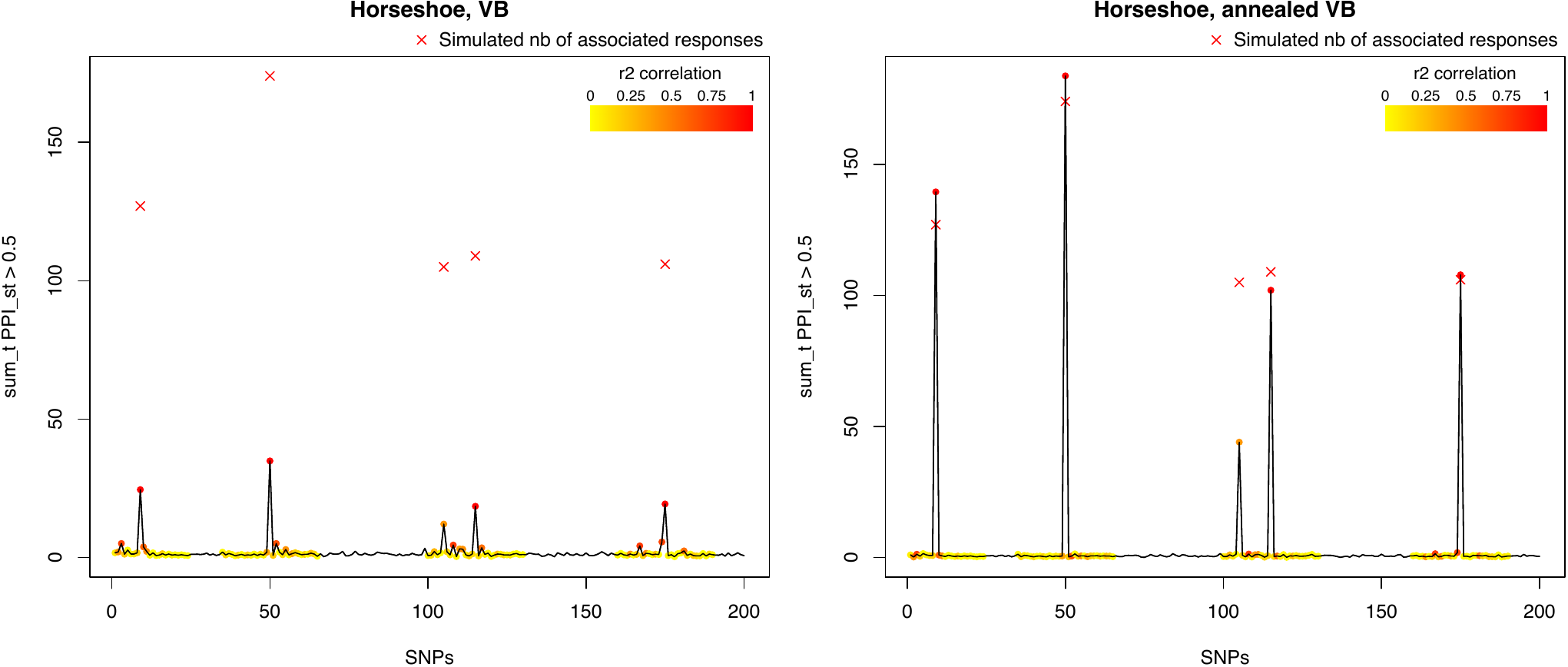}\\
\includegraphics[scale=0.405]{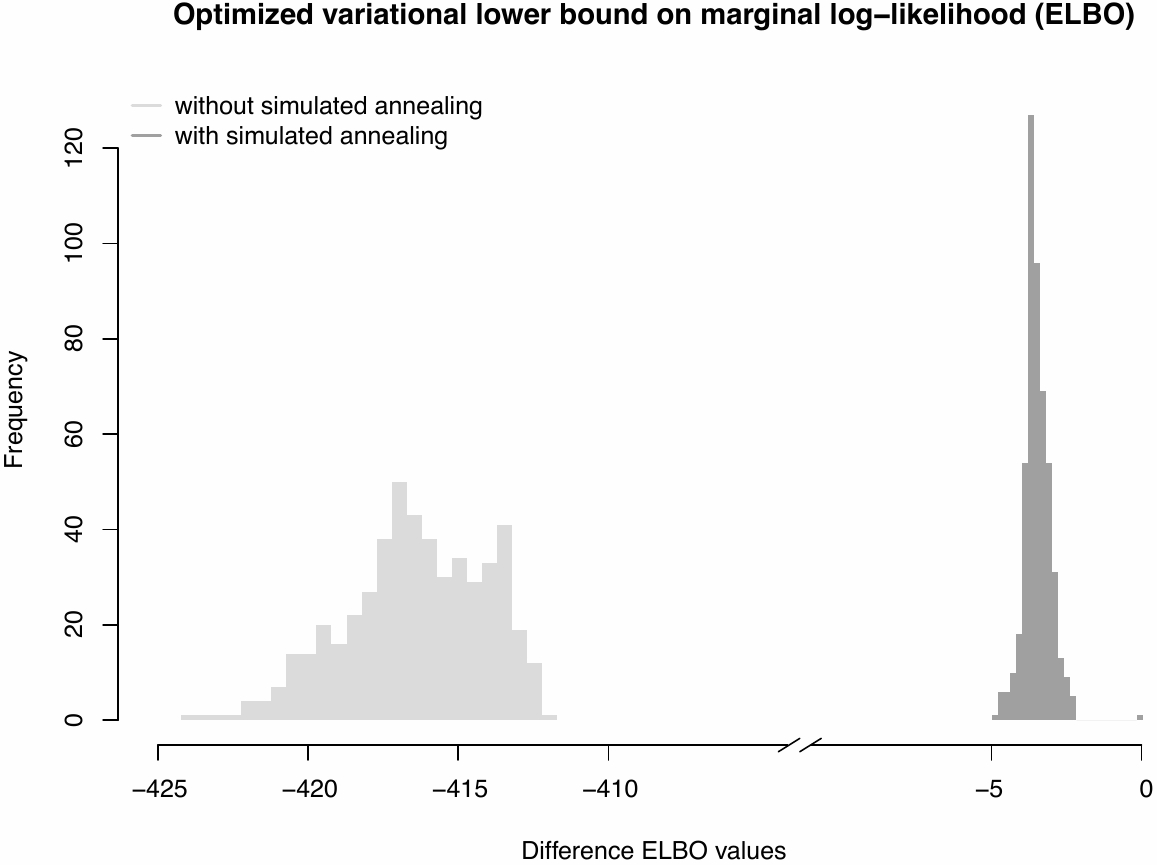}\quad\quad
\includegraphics[scale=0.415]{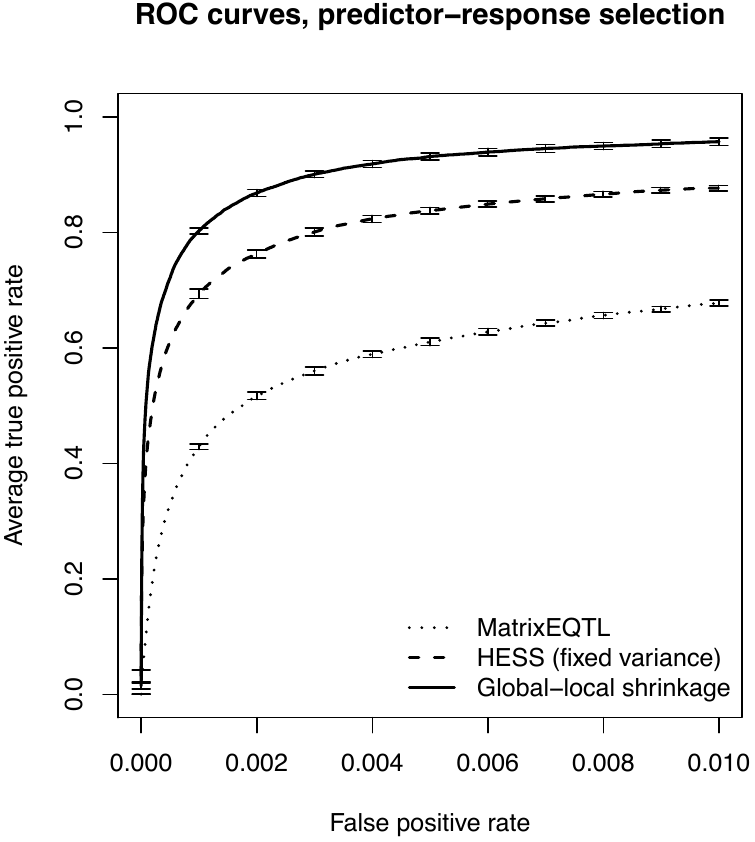}
\end{center}
\caption{\small Performance comparisons between classical and annealed variational inferences, and with competing methods. Problem with responses equicorrelated by blocks with residual correlation $\rho \in (0, 0.5)$; $500$ of them are under genetic control. Candidate predictors are $p = 200$ SNPs from a cohort of European ancestry, for $n = 413$ individuals. Top: Hotspot discrimination achieved by classical (left) and annealed (right) variational inferences, for a problem with $q = 20,000$ responses. The plots show the cumulated number of responses associated per SNP, after thresholding the variational posterior probabilities of association at $0.5$, and averaging over $16$ replicates. The crosses show the simulated sizes of the five hotspots (whose cumulated effects account for at most $10\%$ of the variability of a response). The highlighted %
regions quantify the linkage disequilibrium structure in $r^2$ computed with respect to hotspots $9$, $50$, $115$ and $175$, respectively;  %
hotspot $105$ is correlated with hotspot $115$. Bottom, left: Histograms of optimized lower bound on the marginal log-likelihood (ELBO) with classical and annealed variational inferences, $500$ replicates; the $x$-axis shows the maximum ELBO value subtracted from all other values. Bottom, right: Truncated average ROC curves with 95\% confidence intervals for the MatrixEQTL and HESS methods, and our proposal. The same settings as above are used, except for the number of responses, limited to $q = 10,000$, and the number of hotspots, $15$, whose cumulated effects account for at most $20\%$ of the variability of a response. Both HESS and our proposal have prior expectation $\mathrm{E}_p = 1$ and variance $\mathrm{V}_p = 10$ for the number of SNPs associated with a response;  the value of $\mathrm{E}_p$ is smaller than in Sections \ref{sec_s1} and \ref{sec_s1_n} because there are fewer candidate predictors, and so is the value of $\mathrm{V}_p$, to limit the computational costs of HESS.}\label{fig_s2_ann}
\end{figure}

Figure \ref{fig_s2_ann} %
indicates that the annealed variational algorithm clearly discriminates hotspots. Moreover, when declaring associations using a threshold of $0.5$ on the marginal posterior probabilities, the hotspot sizes were well estimated, except for SNP id $105$. %
In contrast, the non-annealed version of the algorithm struggled to single out the relevant SNPs from their correlated neighbours, especially around SNP id $110$. %
Hence, the behaviour observed in the small experiment of Section \ref{sec_inference} also arises in multiple-response settings.  
We also applied the algorithm with and without annealing on the data from the first replicate, performing $500$ runs each using different starting values. We found that the optimal value reached by the objective function (\ref{eq_lb}) was consistently higher and less variable in the annealed case (Figure \ref{fig_s2_ann}). This was expected, %
as (\ref{eq_lb}) is a lower bound on the marginal log-likelihood, but further suggests that this bound may indeed represent a good proxy for the marginal log-likelihood. %

\subsection{Comparison with other approaches}\label{sec_other_appr}

We conclude this series of simulation experiments by comparing the method with existing approaches. %
We choose %
two competing methods, MatrixEQTL \citep{shabalin2012matrix} and HESS \citep{richardson2010bayesian, bottolo2011bayesian} as representative of two types of approaches: a univariate screening algorithm that tests the SNP-response pairs one by one, and joint hierarchical modelling coupled with parallel chain MCMC inference. 
We restrict the number of simulated responses to $10,000$ in order to ensure a reasonable convergence time for the HESS MCMC run, %
and involve $15$ SNPs in associations. %

We rely on the default settings proposed in the MatrixEQTL and HESS implementations: these correspond, for the former, to using an additive linear model for the genotype effects and  \emph{t}-statistics for significance tests, and for the latter, to running three parallel chains for $22,000$ iterations, discarding the first $2,000$ as burn-in.  Our annealed variational inference procedure was about $30$ times faster than the MCMC inference implemented in HESS, with an average runtime for one replicate of 4 hours and 17 minutes for the former and 5 days and 10 hours for the latter on an Intel Xeon CPU, 2.60 GHz.

As expected, the ROC curves of Figure \ref{fig_s2_ann} indicate that MatrixEQTL performs worse than the two joint approaches. %
It correctly identifies the strong associations but also declares many spurious associations involving SNPs in high linkage disequilibrium. This agrees with the motivating example in Section \ref{sec_mot}; marginal screening often %
provides satisfactory answers when the aim is to highlight \emph{cis} associations at the level of loci, but, because of the multiplicity burden, it often fails to declare weaker effects such as those involved in \emph{trans} associations. 
By borrowing information across all SNPs and responses, HESS achieves much better association recovery. The HESS run is based on a specific choice of hotspot propensity variance, which is hard-coded and not accessible to the user; we expect the performance to vary with other choices of variances, similarly to what was shown in Figure \ref{fig_s1_ann} %
for the approach of \citet{ruffieux2017efficient}. With its global and local variances inferred from the data, our proposal performs best. %
Confronting this performance with MCMC inference further suggests that the independence assumptions underlying the variational mean-field formulation %
do not degrade the quality of variable selection, as shown by \citet{ruffieux2017efficient}. The coupling with simulated annealing results in an excellent selection in our experiments, and in a fraction of the time required by MCMC techniques; this is particularly remarkable in highly multimodal settings. %

\section{A targeted study of hotspot activity with stimulated monocyte expression}\label{sec_application} %

We return to the eQTL data presented in Section \ref{sec_mot}. As discussed there, stimulation of monocytes may boost \emph{trans}-regulatory activity, so %
the analysis of stimulated eQTL data should benefit from a method tailored to the detection of hotspots. In this section, we analyse three genomic regions comprising genes thought to play a central role in the pathogenesis of immune disorders \citep{fairfax2012genetics, fairfax2014innate}:  \emph{NFE2L3} on chromosome $7$,  \emph{IFNB1} on chromosome $9$, and \emph{LYZ} on chromosome~$12$. Each region involves $1,500$ SNPs and spans from $7.5$ to $12$ Mb.  %

The following quality control steps were performed prior to the analyses. For the genotyping, we applied standard filters that exclude SNPs with call rate $< 95\%$, violate the Hardy--Weinberg equilibrium assumption (at nominal $p$-value level $10^{-4}$), or have minor allele frequency $< 0.05$. %
For the transcripts, we considered the top $30\%$ quantile of the interquartile range distributions in each (un)stimulated condition. In order to work with a common set of transcripts across conditions, we then retained the intersection of the transcripts selected in each condition, and checked that no highly varying transcript was dropped in this process. Finally, we discarded samples with unusual transcript values, separately for each condition; the numbers of individuals thus retained were $413$ for unstimulated monocytes, $366$ for IFN-$\gamma$, $260$ for LPS 2h, $321$ for LPS 24h and $275$ for B-cells, and the number of transcripts was $24,461$.

\begin{figure}[t!]
\small
\centering
\includegraphics[scale=0.405]{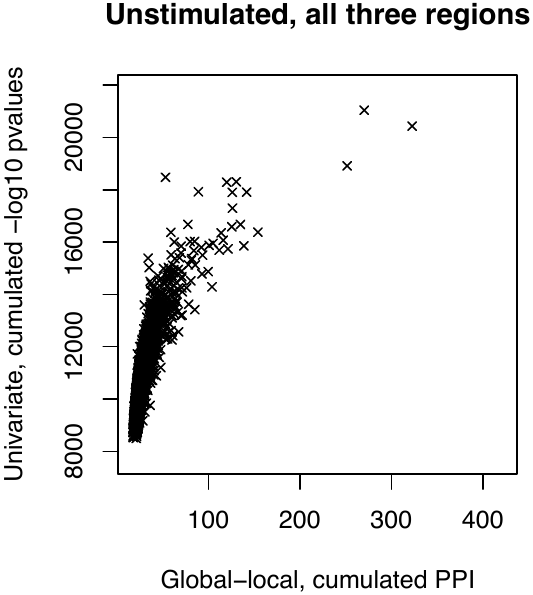}
\includegraphics[scale=0.405]{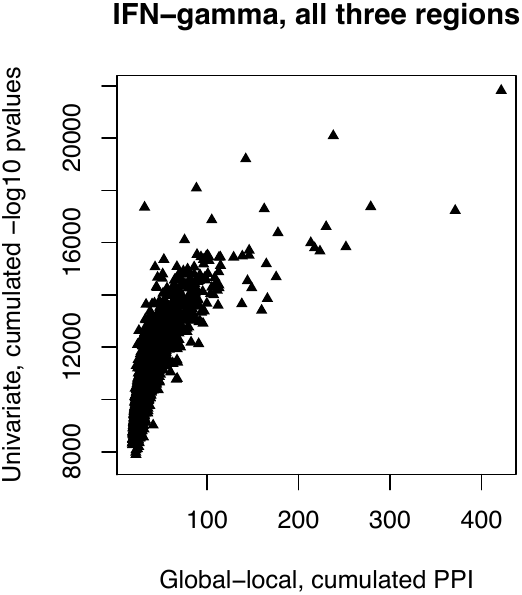}
\includegraphics[scale=0.405]{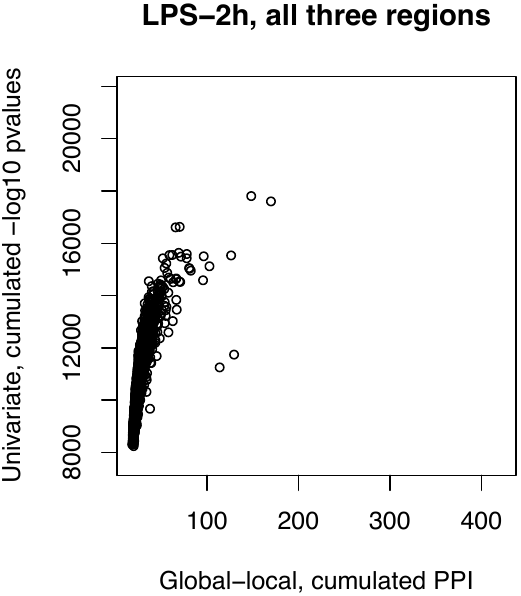}
\includegraphics[scale=0.405]{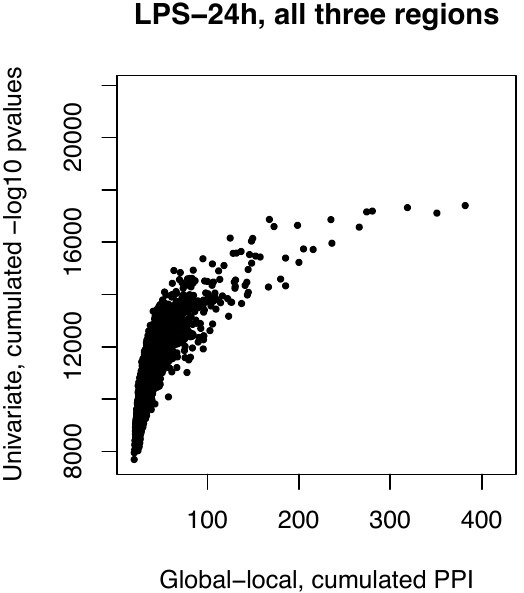}
\includegraphics[scale=0.55]{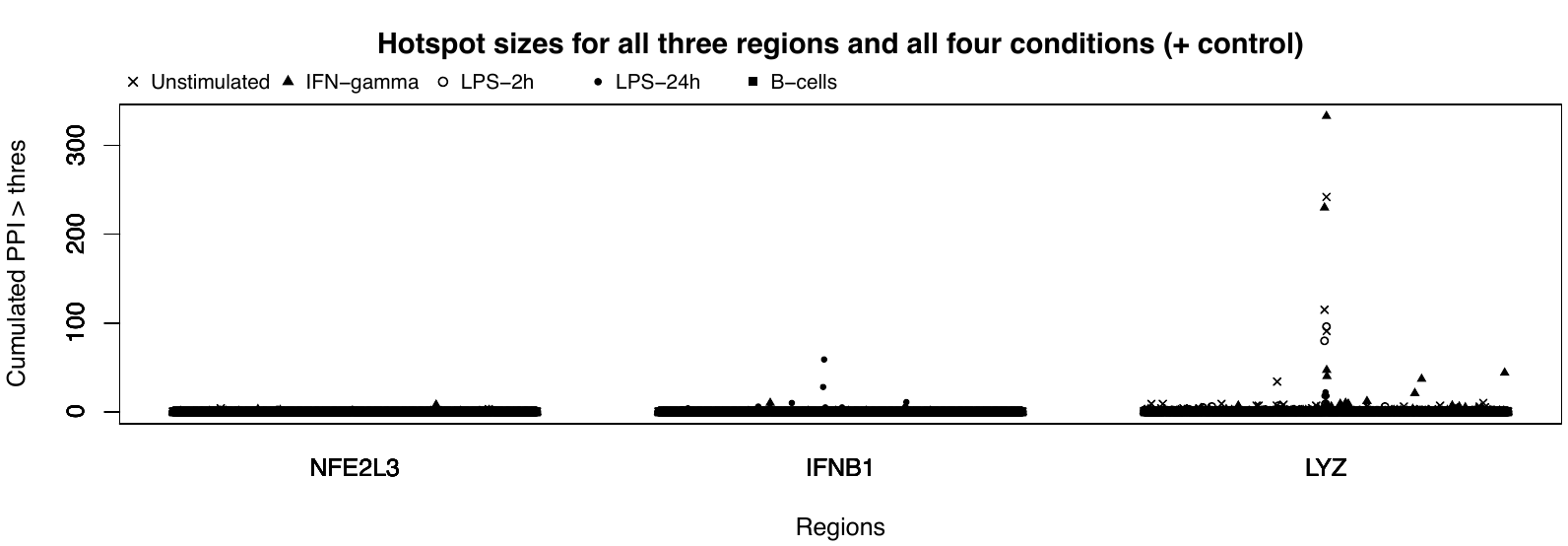}
\caption{\small Hotspots from stimulated eQTL analyses, for the  \emph{NFEL2L3}, \emph{IFNB1} and \emph{LYZ} genomic regions with the four monocyte conditions and the B-cell negative controls. Top: for each condition, raw hotspot evidence for all three regions comprising \emph{NFEL2L3}, \emph{IFNB1} and \emph{LYZ}. Scatterplots with $-\log_{10}$ $p$-values of univariate screening, summed across responses, versus variational posterior probabilities of association obtained by our proposal, summed across responses. Bottom: Hotspot sizes, as declared using a permutation-based FDR of $20\%$.}\label{fig_app_1} %
\end{figure}

We ran our method on each of the three regions, and for all four monocyte conditions, as well as for the B cells, %
resulting in $15$ separate analyses. We employed the same prior base rate of associated pairs as in the simulation of Sections \ref{sec_s1} and \ref{sec_s1_n}, giving a prior average of $\mathrm{E}_p = 0.002 \times p = 3$ SNPs associated with each transcript, and used a variance of $\mathrm{V}_p = 25$. Figure \ref{fig_app_1} compares the evidence for hotspots produced by our proposal and plain univariate screening. It shows the nominal $- \log_{10}$ $p$-values of a univariate screening against the raw posterior probabilities, both summed across responses, and suggests that the two approaches agree on the small or moderate evidence but also that our proposal appears to boost and better distinguish hotspot effects.

In order to derive empirical false discovery rates, we ran a permutation analysis with $30$ replicates for each region and condition, by shuffling the sample labels of the expression matrix; this was computationally feasible thanks to the efficiency of our variational procedure. We then obtained Bayesian false discovery rates %
for a fine grid of thresholds on the variational posterior probabilities of association, and fitted a spline %
in order to derive thresholds corresponding to a false discovery rate of $20\%$.

Figure \ref{fig_app_1} indicates increased \emph{trans}-regulatory activity under stimulation with IFN-$\gamma$ and LPS 24h. This activity was endorsed by the absence of hotspots in the B-cell analysis; indeed, previous studies comparing B cells and monocytes on the three regions suggested that QTL activity was specific to the latter \citep{fairfax2012genetics}, 
so the former may be used as negative controls in our analyses. The degree of activity also varies greatly across the three regions: the  \emph{NFE2L3} region is essentially inactive; its largest hotspot is of size $8$ and appears under IFN-$\gamma$ stimulation, in line with previous observations \citep{fairfax2014innate}. The \emph{IFNB1} region shows more activity under LPS 24h stimulation; this confirms existing work  \citep{fairfax2014innate}, but also reveals more associations with transcripts. The top LPS 24h hotspot in the  \emph{IFNB1} region, rs$3898946$, is an eQTL reported in the GTEx database, for genes \emph{FOCAD} and \emph{MLLT3} in the tibial artery, and for gene \emph{PTPLAD2} in skin tissues; this provides further support for a mechanistic role of this hotspot (to be confirmed in further work). %
The \emph{LYZ} region is known for its high degree of pleiotropy \citep{rotival2011integrating} and is indeed very active in our analyses. %

Although \citet{fairfax2014innate} mostly report stimuli-specific \emph{trans}-regulatory activities, %
our top hotspot hit, rs$6581889$, located only $9$ Kb downstream of the \emph{LYZ} gene, is persistent across all four conditions: it is the largest hotspot in the unstimulated condition with size $242$, in the IFN-$\gamma$ condition with size $333$, and in the LPS 2h condition with size $96$, and it is the second largest hotspot in the LPS 24h condition with size $18$; a Venn diagram showing the transcript overlap across conditions is given in Appendix~\ref{app_real}. %
Hence, the SNP activity was triggered by the  IFN-$\gamma$ stimulation, but was also substantial after $2$ hours and $24$ hours of LPS stimulation. The B-cell data provide a good negative control as they show no activity in the \emph{LYZ} region; the largest number of responses associated with a given SNP is three, and the signal does not co-localize with any hotspot uncovered in monocytes. %
Finally, rs$6581889$ is a 
known \emph{cis} eQTL for \emph{LYZ} and \emph{YEATS4} in multiple tissues, two associations which our analyses confirmed. %

\section{Conclusion}\label{sec_conclusion}

We have introduced a new approach for the efficient detection hotspots in regression problems with tens of thousands of response variables. 
 Our proposal makes novel contributions to both modelling and inference: it introduces a flexible fully Bayesian model for hotspots, and implements an efficient variational inference procedure coupled with simulated annealing. %
It accommodates three essential characteristics of molecular QTL: extreme sparseness of association patterns,  strong multimodality induced by locally correlated genetic variants, and very high dimensions of both the predictor and the response vectors. %

Our simulations indicated that severe sparsity renders ineffective models based only on a global variance for the hotspot propensity. %
 Our global-local model provides sufficient %
 refinement to properly identify the locations and sizes of individual hotspots; it %
 is free of ad-hoc variance choices and automatically adapts to different signal sparsity degrees. %

Collinearity exacerbates posterior multimodality and often causes unstable estimates when obtained by joint inference. As accurate inference is critical to the effective use of the hotspot model in high dimensions, we developed a simulated annealing scheme to improve the exploration of multimodal posterior spaces. In our numerical experiments, the resulting inferences were robust to different algorithm initializations, even on data with marked correlation structures. %
It yielded satisfactory estimates of hotspot sizes in situations where classical variational inference would strongly overshrink.

Our formulation of the first-level model %
 involves two-group mixture priors of the spike-and-slab form for the regression coefficients $ \beta_{st}$ rather than one-group continuous priors, such as the Laplace or the horseshoe
prior, reserving the latter for the second-level modelling of the probability parameters $\omega_{st}$. The relative merits of one-group and two-group shrinkage priors for testing purposes have been a subject of considerable discussion over the past few years %
  \citep[see, e.g., ][]{li2017variable, piironen2017sparsity}. Testing associations for each predictor and response pair can be effectively achieved in a variety of fashions (while permitting some borrowing of information across the responses), including with one-group priors %
on $\beta_{st}$. 
Indeed, good theoretical guarantees for testing and uncertainty quantification are now available for %
the one-group prior framework. 
In particular, \citet{van2014horseshoe, van2016many, van2017adaptive} studied %
posterior concentration %
under the horseshoe prior. \citet{datta2013asymptotic} established optimal asymptotic error rates for a multiple testing decision rule %
based on the horseshoe shrinkage factor and \citet{ghosh2016asymptotic} extended their results to a rich class of one-group priors. %
Interestingly, \citeauthor{datta2013asymptotic} pointed out %
 that %
two-group priors are conceptually very natural for testing tasks, thanks to their noise-signal components, and that the horseshoe decision rule is built on analogies with the two-group model. %
 At the same time, %
 the computational advantages %
 of one-group continuous priors are also often contrasted to the burden caused by the large discrete search space induced by two-group mixture priors. 
 
In our model, we used both a two-group prior and a one-group prior, at different levels. We bypassed the computational burden of the %
two-group formulation by developing a %
variational approach that permits fast inference, even under the spike-and-slab prior of the first level. Crucially, the spike-and-slab formulation directly serves the primary aim of our method: the detection of hotspot effects via a dedicated parameter, $\theta_s$, that allows further borrowing of information across the responses. %
We saw how this leads to a natural %
representation of the hotspot propensities as predictor-specific modulations of the %
probability of association. 
We then made use of the adaptive properties of the global-local horseshoe formulation to 
embed a penalty that adjusts for the response dimension and 
prevents the manifestation of artifactual hotspots when the likelihood is relatively flat; %
 we provided two complementary justifications for its choice.

Several extensions may be considered. First, the illustration on stimulated-monocyte eQTL data suggests extending the model to jointly account for the multiple stimulated states. 
Other types of conditions may benefit from such joint modelling: for instance, molecular QTL data are nowadays often collected for multiple tissues or time points. See \citet{petretto2010new} and \citet{lewin2015mt}, for examples based on the model of \citet{richardson2010bayesian}. Second, 
on the algorithmic side, a natural enhancement would be to embed the annealing temperature as an auxiliary parameter to be inferred.  This would permit adaptive and dynamic control of the temperature schedule and may help to balance the number of temperatures used, and hence the use of resources, with the level of entropy needed for good exploration. \citet{mandt2016variational} have a procedure based on this idea, %
but their proposal requires precomputing an approximation of the joint distribution normalizing constant. Sensible cheap estimates may be envisioned for \emph{large $n$} cases, but are unrealistic for high-dimensional regression.  Obtaining theoretical guarantees for our algorithm would also be beneficial; several recent results on tempered variational approximation for simpler models suggest that desirable convergence properties may be provable for our annealed variational updates \citep{alquier2016properties, alquier2017concentration, yang2017alpha}.

We do not claim that our approach can provide direct conclusive evidence on the functional consequences of the identified hotspots, as this always requires follow-up studies at the level of individual loci. We do argue, however, that it is well suited to highlight promising candidate variants for functional studies, which %
may save substantial investment in prospective research. Our method applies to any type of molecular QTL problem. In particular, it may be used with proteomic and lipidomic expression data, which are gaining in popularity because they may be more closely linked with clinical phenotypes.

\section*{Software}

The software \texttt{atlasqtl} is written in R with C++ subroutines. It is available at \url{https://github.com/hruffieux/atlasqtl}.%

\section*{Acknowledgements}
We are grateful to the editor and the two anonymous referees for their valuable comments that improved the presentation of the paper. 
We thank Armand Valsesia for his helpful comments, and also thank Colin Star, Bruce O’Neel and Jaroslaw Szymczak for giving us access to computing resources.

\section*{Funding}

This research was funded by Nestlé Research (H.R., J.H.), the Alan Turing Institute under the Engineering and Physical Sciences Research Council grant EP/N510129/1 (L.B.),  the MRC grant MR/M013138/1 ``Methods and tools for structural models integrating multiple high-throughput omics data sets in genetic epidemiology'' (L.B.), the UK Medical Research Council programme MRC MC UU 00002/10 (S.R.) and the Alan Turing Institute Fellowship number TU/B/000092 (S.R.).

\newpage
\appendix
\renewcommand\thefigure{\thesection.\arabic{figure}}    
\setcounter{figure}{0}

\section{Complements to motivating example}\label{app_mot}

Figure \ref{fig1_app_mot} illustrates the observations made in Section \ref{sec_mot}  about the drawbacks of univariate screening approaches applied to eQTL data. It indicates that the uncovered distal \emph{trans} effects tend to be smaller than the proximal \emph{cis} effects, and suggests that the identification of \emph{trans} effects is hampered by the redundant \emph{cis} effects detected by the univariate screening at a single locus (here for the transcript \emph{B3GALT6}). %

\begin{figure}[h!]
\small
\includegraphics[scale=0.46]{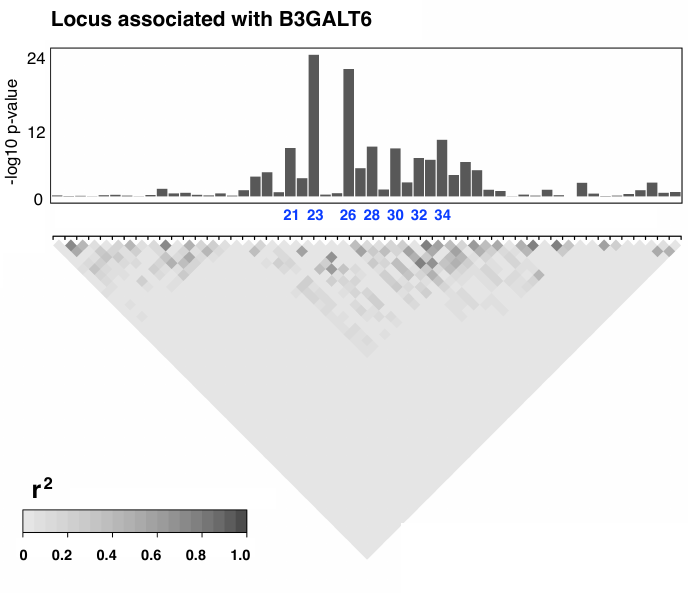}\quad
\includegraphics[scale=0.35]{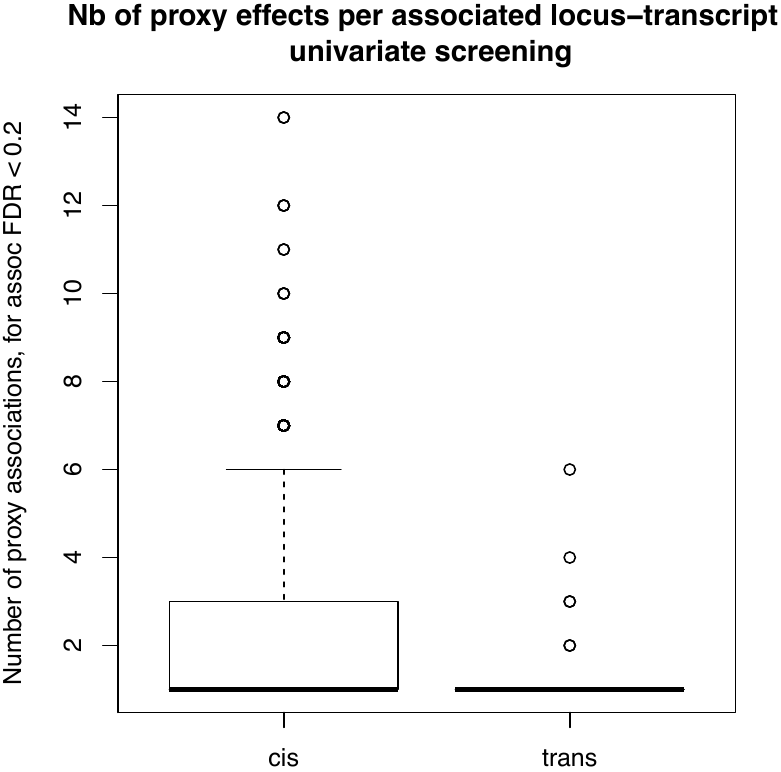}\quad
\includegraphics[scale=0.35]{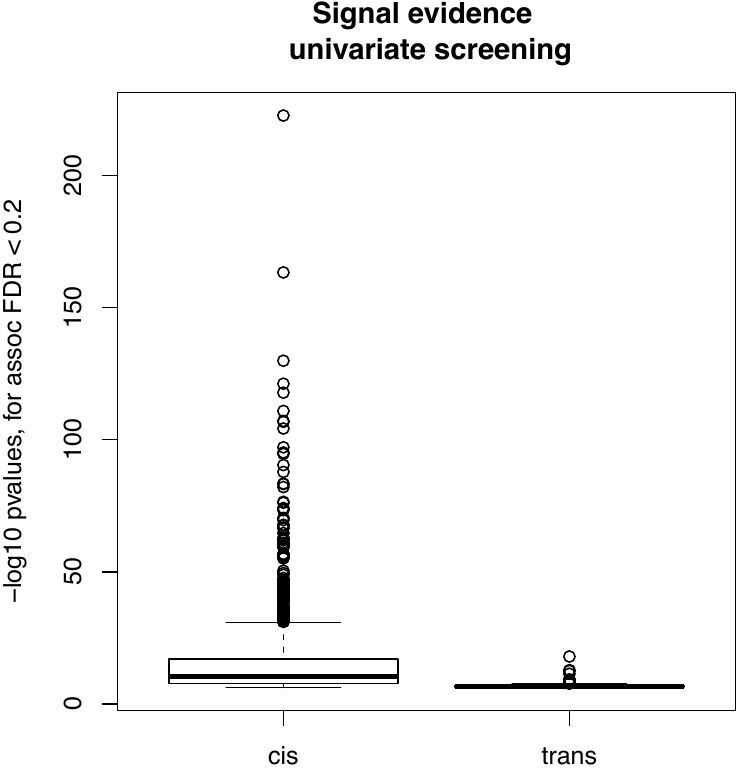}
\caption{\small Detection of \emph{cis} and \emph{trans} associations by a univariate screening approach using a Benjamini--Hochberg FDR of $20\%$. Left: example of linkage disequilibrium plot and Manhattan plot, here for associations with transcript \emph{B3GALT6}. The blue labels indicate seven SNPs \emph{cis}-acting on \emph{B3GALT6} at FDR $20\%$; these effects are likely to be proxies for a single signal in the locus and arise because of the failure of univariate approaches to handle local correlation structures. %
Middle: number of such ``proxy'' associations for \emph{cis} and \emph{trans} effects, based on a linkage disequilibrium threshold of $0.5$ ($r^2$ correlation) and window size $2$ Mb. Right: $-\log_{10} p$-values for the declared effects.}\label{fig1_app_mot} 
\end{figure}

 \section{Hyperparameter specification for top-level priors}\label{app_hyper} %

We describe the hyperparameter settings for the prior distribution of $\zeta_t$. We borrow ideas from \citet{bottolo2011bayesian}, in that we let the response-specific parameter $\zeta_t\overset{\mathrm{iid}}{\sim} \mathcal{N}(n_0, t_0^2)$ control the sparsity level, i.e., the number of predictors associated with each response, and use parameter %
$\theta_s \mid \sigma_0 \overset{\mathrm{iid}}{\sim} \mathrm{Horseshoe}(0, \sigma_0)$, with $\sigma_0 \sim \mathrm{C}^{+}(0, q^{-1/2})$, as a predictor-specific modulator of this level.

We will rely on the following results: for $X \sim \mathcal{N}\left(\mu, \sigma^2\right)$,
\begin{eqnarray}\label{formula1}
\mathrm{E}\lc \Phi\lb X\rb \rc&=&\Phi\lb \frac{\mu}{ \sqrt{1 + \sigma^2}} \rb, \\ 
\label{formula2}\mathrm{E}\lc \Phi\lb X\rb^2 \rc &=& \Phi\lb \frac{\mu}{ \sqrt{1 + \sigma^2}}\rb - 2 \mathrm{T}\lb  \frac{\mu}{ \sqrt{1 + \sigma^2}},  \frac{1}{ \sqrt{1 + 2\sigma^2}} \rb,\end{eqnarray}
where $$\mathrm{T}\lb h, a \rb = \varphi(h) \int_0^a \frac{\varphi(hx)}{1 + x^2}\mathrm{d}x, \qquad a, h \in \R,$$ is Owen's T function \citep{owen1956tables}, with $\varphi(\cdot)$ the standard normal density function.

Equality (\ref{formula1}) can be obtained as follows. Let $Z_1~\sim~\mathcal{N}\lb - \sigma^{-1} \mu, \sigma^{-2}\rb$ and $Z_2 \sim \mathcal{N}\lb 0, 1\rb$ be independent and observe that  
$$\pr\lb Z_1 \leq Z_2 \mid Z_2 = z\rb = \pr\lb Z_1 \leq z\rb = \Phi(\sigma z + \mu), \quad z \in \R,$$
so that
$$\pr\lb Z_1 \leq Z_2 \rb = \int \Phi(\sigma z + \mu) \varphi(z)\mathrm{d}z,$$
which corresponds to the left hand-side of (\ref{formula1}).
But since $Z_1 -Z_2 \sim \mathcal{N}\lb -\sigma^{-1} \mu, \sigma^{-2} + 1\rb$, we also have
$$\pr\lb Z_1 \leq Z_2 \rb = \pr \lb Z_1 - Z_2 \leq 0\rb = \Phi\lb \frac{\mu}{ \sqrt{1 + \sigma^2}} \rb,$$
which gives the result. Equality (\ref{formula2}) can be obtained similarly.

Coming back to the hyperparameter setting, we %
make the simplifying assumption that there is %
no predictor-specific modulation ($\theta_s = 0$) so that, given $\zeta_t$, the prior probability of association between predictor $X_s$ and response $y_t$ is 
$$\mathrm{E}\lb \gamma_{st} \mid \theta_s = 0, \zeta_t \rb = \Phi(\zeta_t).$$ 
We then set $n_0$ and $t_0^2$ by specifying a prior expectation and a prior variance for the number of predictors associated with each response, $p_{\gamma, t} = \sum_{s = 1}^p\gamma_{st}$, $t = 1, \ldots, q$,
\begin{eqnarray*}\mathrm{E}\lb p_{\gamma, t} \mid \theta = 0\rb &=& \mathrm{E}\lc\mathrm{E}\lb p_{\gamma, t} \mid \theta = 0, \zeta_t\rb\rc = p\, \mathrm{E}\lc \Phi\lb \zeta_t\rb \rc\,,\\ 
\mathrm{Var}\lb p_{\gamma, t} \mid \theta = 0\rb &=& \mathrm{Var}\lc\mathrm{E}\lb p_{\gamma, t} \mid \theta = 0, \zeta_t\rb\rc + \mathrm{E}\lc\mathrm{Var}\lb p_{\gamma, t} \mid \theta = 0, \zeta_t\rb\rc\\
&=& p (p - 1) \mathrm{E}\lc \Phi\lb\zeta_t\rb^2\rc + p \mathrm{E}\lc \Phi\lb\zeta_t\rb\rc \ls 1 - p\mathrm{E}\lc \Phi\lb\zeta_t\rb\rc \rs, \end{eqnarray*} %
in which we use (\ref{formula1}) and (\ref{formula2}) with $\mu = n_0$ and $\sigma^2 = t_0^2$. We then solve this system numerically to obtain $n_0$ and $t_0^2$.

We next show that our approximation is asymptotically exact as the total number of responses $q$ tends to infinity, i.e., we have the following lemma:
\begin{lemma}\label{sm_lemma_asy}
Let $p_{\gamma, t} = \sum_{s = 1}^p\gamma_{st}$ the number of predictors associated with response $t = 1, \ldots, q$, we have
\begin{eqnarray}\lim_{q \to \infty} \mathrm{E}\lb p_{\gamma, t}\rb&=&\mathrm{E}\lb p_{\gamma, t} \mid \theta = 0\rb,\label{e_asy}\\ \lim_{q \to \infty}\mathrm{Var}\lb p_{\gamma, t}\rb&=& \mathrm{Var}\lb p_{\gamma, t} \mid \theta = 0\rb .\label{var_asy}\end{eqnarray}
\end{lemma}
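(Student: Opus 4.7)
\medskip\noindent\textbf{Proof plan.} The strategy is to show that, under the hierarchical model (\ref{eq_all}), the hotspot propensities $\theta_s$ collapse to $0$ as $q\to\infty$, and then to exchange limit with expectation via dominated convergence, thereby reducing the moments of $p_{\gamma,t}$ to those assumed under the simplification $\theta=0$. First I would couple the models across $q$: the prior on $\sigma_0$ depends on $q$ only through the scale $q^{-1/2}$, so I write $\sigma_0 = q^{-1/2}W$ for a single $W\sim\mathrm{C}^+(0,1)$, and represent $\theta_s = q^{-1/2}W\lambda_s U_s$ with $\lambda_s\overset{\mathrm{iid}}{\sim}\mathrm{C}^+(0,1)$ and $U_s\overset{\mathrm{iid}}{\sim}\mathcal{N}(0,1)$, everything mutually independent and independent of the $\zeta_t$. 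Since $W$, $\lambda_s$ and $U_s$ are almost surely finite, $\theta_s\to 0$ almost surely (and jointly in $s$) as $q\to\infty$; the hyperparameters $n_0$ and $t_0^2$, pinned down from targets for $\mathrm{E}(p_{\gamma,t}\mid\theta=0)$ and $\mathrm{Var}(p_{\gamma,t}\mid\theta=0)$, are held fixed.

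For (\ref{e_asy}), conditional independence of the $\gamma_{st}$ given $(\theta,\zeta_t)$ gives $\mathrm{E}(p_{\gamma,t}) = p\,\mathrm{E}\{\Phi(\theta_1+\zeta_t)\}$. Continuity of $\Phi$ together with the a.s.\ limit $\theta_1\to 0$ yields $\Phi(\theta_1+\zeta_t)\to\Phi(\zeta_t)$ a.s.; the integrand is bounded by $1$, so dominated convergence gives $\mathrm{E}\{\Phi(\theta_1+\zeta_t)\}\to\mathrm{E}\{\Phi(\zeta_t)\}$. Multiplying by $p$ reproduces exactly $\mathrm{E}(p_{\gamma,t}\mid\theta=0)$, by definition of the latter in Appendix~\ref{app_hyper}.

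For (\ref{var_asy}), using $\gamma_{st}^2=\gamma_{st}$ and conditional independence of $\gamma_{st},\gamma_{s't}$ given $(\theta,\zeta_t)$ for $s\neq s'$, I would write
$$\mathrm{Var}(p_{\gamma,t})=p\,\mathrm{E}\{\Phi(\theta_1+\zeta_t)\}+p(p-1)\,\mathrm{E}\{\Phi(\theta_1+\zeta_t)\Phi(\theta_2+\zeta_t)\}-p^2\,[\mathrm{E}\{\Phi(\theta_1+\zeta_t)\}]^2.$$
The first and third terms are already handled by the previous step. For the middle term, joint a.s.\ convergence $(\theta_1,\theta_2)\to(0,0)$, continuity of $\Phi$ and the $[0,1]$ bound again permit dominated convergence, giving $\mathrm{E}\{\Phi(\theta_1+\zeta_t)\Phi(\theta_2+\zeta_t)\}\to\mathrm{E}\{\Phi(\zeta_t)^2\}$. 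Reassembling yields exactly the formula for $\mathrm{Var}(p_{\gamma,t}\mid\theta=0)$ stated in Appendix~\ref{app_hyper}. The step that requires most care is precisely this last one: the $\theta_s$ are not marginally independent, because they share the global scale $\sigma_0$, so one cannot factor the cross-expectation into a product; the coupling in the first paragraph sidesteps the issue by delivering the required joint almost-sure convergence without any independence claim.
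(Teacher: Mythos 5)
Your proof is correct, but it reaches the limit by a genuinely different route from the paper. The paper makes the same initial reduction, namely that it suffices to show $\mathrm{E}\{\Phi(\theta_s+\zeta_t)\}\to\mathrm{E}\{\Phi(\zeta_t)\}$ together with convergence of the second-moment-type quantity, but it then proceeds computationally: it first integrates out $\theta_s+\zeta_t$ given $(\sigma_0,\lambda_s)$ using the closed-form identities for $\mathrm{E}\{\Phi(X)\}$ and $\mathrm{E}\{\Phi(X)^2\}$ of a Gaussian argument (the latter involving Owen's T function), writes the remaining expectation as an explicit double integral against the two half-Cauchy densities, substitutes $\tilde\sigma_0=\sqrt{q}\,\sigma_0$, and invokes dominated convergence with the explicit integrable bound proportional to $(1+\tilde\sigma_0^2)^{-1}(1+\lambda_s^2)^{-1}$. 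Your coupling $\sigma_0=q^{-1/2}W$, $\theta_s=q^{-1/2}W\lambda_s U_s$ achieves the same end more abstractly: almost-sure convergence $\theta_s\to0$ plus the trivial bound $\Phi\le 1$ lets you dominate directly, with no special functions and no change of variables. Your route buys brevity and, notably, a more careful treatment of the variance: you keep the cross term as $\mathrm{E}\{\Phi(\theta_1+\zeta_t)\Phi(\theta_2+\zeta_t)\}$ for two distinct propensities that are dependent only through the shared $\sigma_0$, which is the exact expression, whereas the paper's displayed formula for $\mathrm{Var}(p_{\gamma,t})$ writes this cross term as $\mathrm{E}\{\Phi(\theta_s+\zeta_t)^2\}$ -- exact only if the two propensities coincided; this is harmless for the limit, since both quantities tend to $\mathrm{E}\{\Phi(\zeta_t)^2\}$, but your decomposition is the rigorous one. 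What the paper's route buys in exchange is the explicit finite-$q$ representation $\mathrm{E}\{\Phi(n_0/\sqrt{1+t_0^2+\sigma_0^2\lambda_s^2})\}$ and its Owen's-T analogue, which lend themselves to quantifying the approximation error at finite $q$ (as in Figure \ref{sm_diff_phi}) rather than only establishing the limit.
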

\begin{proof}
Since
\begin{eqnarray*}
\mathrm{E}(p_{\gamma, t}) %
&=&  p\mathrm{E} \left\{\Phi\left(\theta_s + \zeta_t\right)\right\}, \\%
\mathrm{Var}(p_{\gamma, t}) &=& p(p-1) \mathrm{E} \left\{\Phi\left(\theta_s + \zeta_t\right)^2\right\}  + p\mathrm{E} \left\{\Phi\left(\theta_s + \zeta_t\right)\right\} \left[1-p\mathrm{E} \left\{\Phi\left(\theta_s + \zeta_t\right) \right\} \right], 
\end{eqnarray*}
the lemma is proved %
by showing that 
\begin{eqnarray}\lim_{q \to \infty} \mathrm{E} \left\{\Phi\left(\theta_s + \zeta_t\right)\right\} &=& \mathrm{E}\lc\Phi\lb\zeta_t\rb\rc,\label{phi_asy}\\ 
\lim_{q \to \infty}\mathrm{E} \left\{\Phi\left(\theta_s + \zeta_t\right)^2\right\}&=& \mathrm{E} \left\{\Phi\left(\zeta_t\right)^2\right\},\label{phi2_asy}\end{eqnarray}
for $\theta_s \mid \sigma_0 \overset{\mathrm{iid}}{\sim} \mathrm{Horseshoe}(0, \sigma_0)$, with $\sigma_0 \sim \mathrm{C}^{+}(0, q^{-1/2})$, i.e., 
$$\theta_s \mid \lambda_s, \sigma_0 \sim \mathcal{N}\left(0, \lambda_s^2 \sigma_0^2 \right), \qquad \lambda_s \overset{\mathrm{iid}}{\sim} \mathrm{C}^+(0, 1)\,, \qquad \sigma_0 \sim  \mathrm{C}^+(0, q^{-1/2}), $$
where $\mathrm{C}^+(\cdot, \cdot)$ is the half-Cauchy distribution.
For (\ref{phi_asy}), we have
\begin{eqnarray} \mathrm{E} \left\{\Phi\left(\theta_s + \zeta_t\right)\right\} &=&  \mathrm{E}\left[\mathrm{E} \left\{\Phi\left(\theta_s + \zeta_t\right)\mid \sigma_0, \lambda_s\right\}\right] \nonumber\\ &=&  \mathrm{E}\left\{\Phi\left(\frac{n_0}{\sqrt{1 + t_0^2 + \sigma_0^2\lambda_s^2}}\right)\right\}\nonumber\\
&=& \frac{4}{\pi^2}\sqrt{q}\int_0^\infty \int_0^\infty \Phi\left(\frac{n_0}{\sqrt{1 + t_0^2 + \sigma_0^2\lambda_s^2}}\right) \frac{1}{1 + q\sigma_0^2}\, \frac{1}{1 + \lambda_s^2} \mathrm{d}\sigma_0 \mathrm{d}\lambda_s\nonumber\\
&=& \frac{4}{\pi^2}\int_0^\infty \int_0^\infty \Phi\left(\frac{n_0}{\sqrt{1 + t_0^2 + q^{-1}\tilde{\sigma}_0^2\lambda_s^2}}\right) \frac{1}{1 + \tilde{\sigma}_0^2}\, \frac{1}{1 + \lambda_s^2} \mathrm{d}\tilde{\sigma}_0 \mathrm{d}\lambda_s\nonumber\\
&\overset{q\to \infty}{\longrightarrow}& \Phi\left(\frac{n_0}{\sqrt{1 + t_0^2}}\right)  = \mathrm{E} \left\{\Phi\left(\zeta_t\right)\right\} \label{show_phi_asy}
\end{eqnarray}
where the second equality uses (\ref{formula1}) noting that $\theta_s + \zeta_t \mid \sigma_0, \lambda_s \sim \mathcal{N}(n_0, t_0^2 + \sigma_0^2\lambda_s^2)$, and where the limit is obtained by the dominated convergence theorem, 
as
$$\frac{1}{1 + \tilde{\sigma}_0^2}\, \frac{1}{1 + \lambda_s^2},\quad \tilde{\sigma}_0 > 0, \lambda_s > 0,$$
is an integrable bound of the integrand. %
Similarly, for (\ref{phi2_asy}), we have
\begin{eqnarray*} \mathrm{E} \left\{\Phi\left(\theta_s + \zeta_t\right)^2\right\} &=&  \mathrm{E}\left[\mathrm{E} \left\{\Phi\left(\theta_s + \zeta_t\right)^2\mid \sigma_0, \lambda_s\right\}\right] \\ &=&  \mathrm{E}\left\{\Phi\left(\frac{n_0}{\sqrt{1 + t_0^2 + \sigma_0^2\lambda_s^2}}\right)\right\} - 2 \mathrm{E} \left\{\mathrm{T}\left(\frac{n_0}{\sqrt{1 + t_0^2 + \sigma_0^2\lambda_s^2}}, \frac{1}{\sqrt{1 + 2t_0^2 + 2\sigma_0^2\lambda_s^2}}\right)\right\},
\end{eqnarray*}
using (\ref{formula2}). We have seen in (\ref{show_phi_asy}) that the first term of the sum converges to $\Phi\left(n_0/\sqrt{1 + t_0^2}\right)$, so it remains to show that 
$$\mathrm{E}\left\{\mathrm{T}\left(\frac{n_0}{\sqrt{1 + t_0^2 + \sigma_0^2\lambda_s^2}}, \frac{1}{\sqrt{1 + 2t_0^2 + 2\sigma_0^2\lambda_s^2}}\right)\right\} \overset{q\to \infty}{\longrightarrow} \mathrm{T}\left(\frac{n_0}{\sqrt{1 + t_0^2}}, \frac{1}{\sqrt{1 + 2t_0^2}}\right)$$
to obtain (\ref{phi2_asy}). The left hand side is
$$
\frac{4}{\pi^2}\int_0^{\infty}\int_0^{\infty}\varphi\left(\frac{n_0}{\sqrt{1 + t_0^2 + q^{-1}\tilde{\sigma}_0^2\lambda_s^2}}\right) \int_0^{\frac{1}{\sqrt{1 + 2t_0^2 + 2q^{-1}\tilde{\sigma}_0^2\lambda_s^2}}} \frac{\varphi\left(\frac{n_0}{\sqrt{1 + t_0^2 + q^{-1}\tilde{\sigma}_0^2\lambda_s^2}}x\right)}{1 + x^2}\mathrm{d}x \frac{1}{1 +\tilde{\sigma}_0^2}\, \frac{1}{1 + \lambda_s^2}  
\mathrm{d}\tilde{\sigma}_0\mathrm{d}\lambda_s,
$$
which converges to 
$$
\frac{4}{\pi^2}\int_0^{\infty}\int_0^{\infty}\varphi\left(\frac{n_0}{\sqrt{1 + t_0^2 }}\right) \int_0^{\frac{1}{\sqrt{1 + 2t_0^2}}} \frac{\varphi\left(\frac{n_0}{\sqrt{1 + t_0^2 }}x\right)}{1 + x^2}\mathrm{d}x \frac{1}{1 +\tilde{\sigma}_0^2}\, \frac{1}{1 + \lambda_s^2}
\mathrm{d}\tilde{\sigma}_0\mathrm{d}\lambda_s = \mathrm{T}\left(\frac{n_0}{\sqrt{1 + t_0^2 }}, \frac{1}{\sqrt{1 +2t_0^2}}\right),
$$
since $$\varphi(0)^2 \int_0^1 \frac{1}{1+x^2}\mathrm{d}x\,  \frac{1}{1 +\tilde{\sigma}_0^2}\, \frac{1}{1 + \lambda_s^2},\quad \tilde{\sigma}_0 > 0, \lambda_s > 0,$$ is an integrable bound on the integrand (with respect to $\tilde{\sigma}_0$ and $\lambda_s$).
\end{proof}

\begin{figure}[t!]
\includegraphics[scale=0.43]{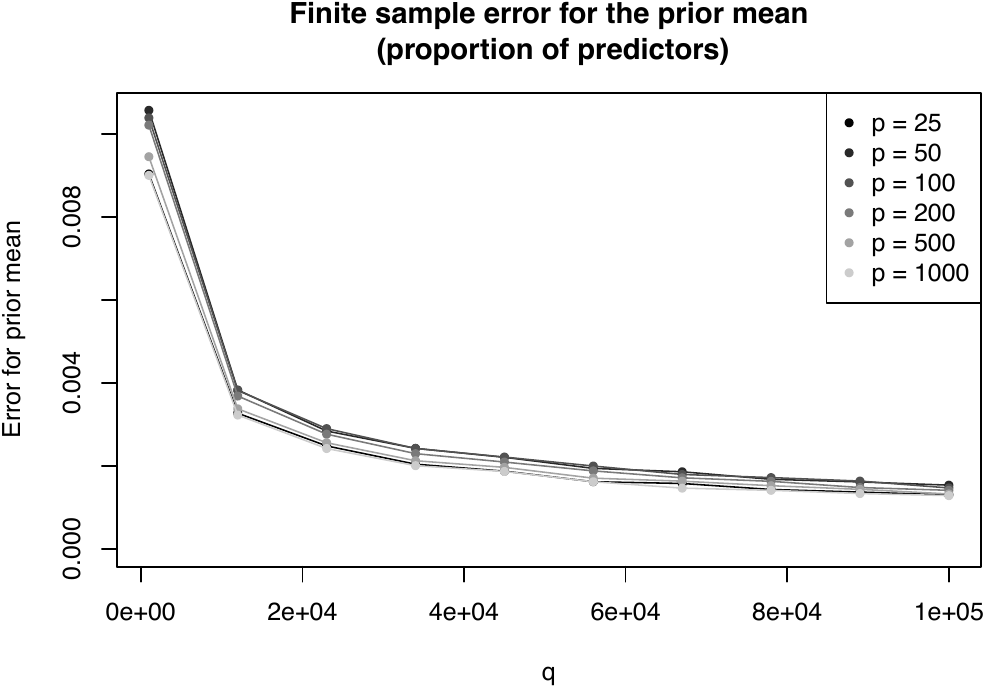}\;\;
\includegraphics[scale=0.43]{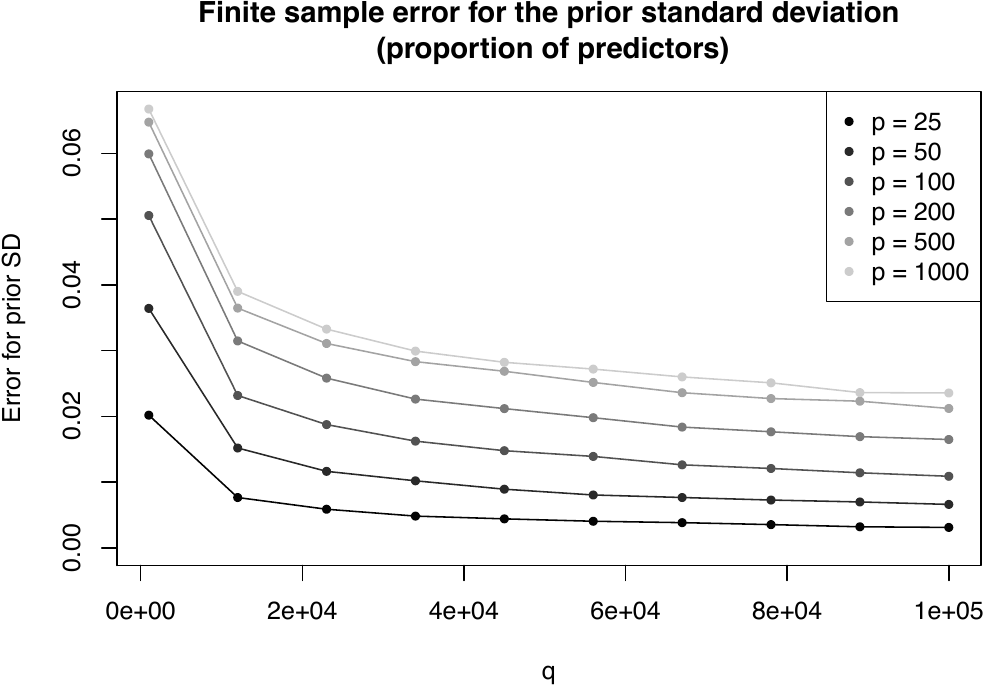}
\caption{\small Finite sample approximation errors
$\vert \mathrm{E}\lb p_{\gamma, t}\rb - \mathrm{E}\lb p_{\gamma, t} \mid \theta = 0\rb \vert$ (left)  and $\vert \mathrm{sd}\lb p_{\gamma, t}\rb -  \mathrm{sd}\lb p_{\gamma, t} \mid \theta = 0\rb\vert$ (right) in terms of proportion of predictors against the response dimension~$q$, for a grid of predictor dimensions~$p$. The hyperparameter settings used are those of the simulation presented in Section \ref{sec_s2}.}\label{sm_diff_phi} 
\end{figure}
\begin{figure}[t!]
\includegraphics[scale=0.46]{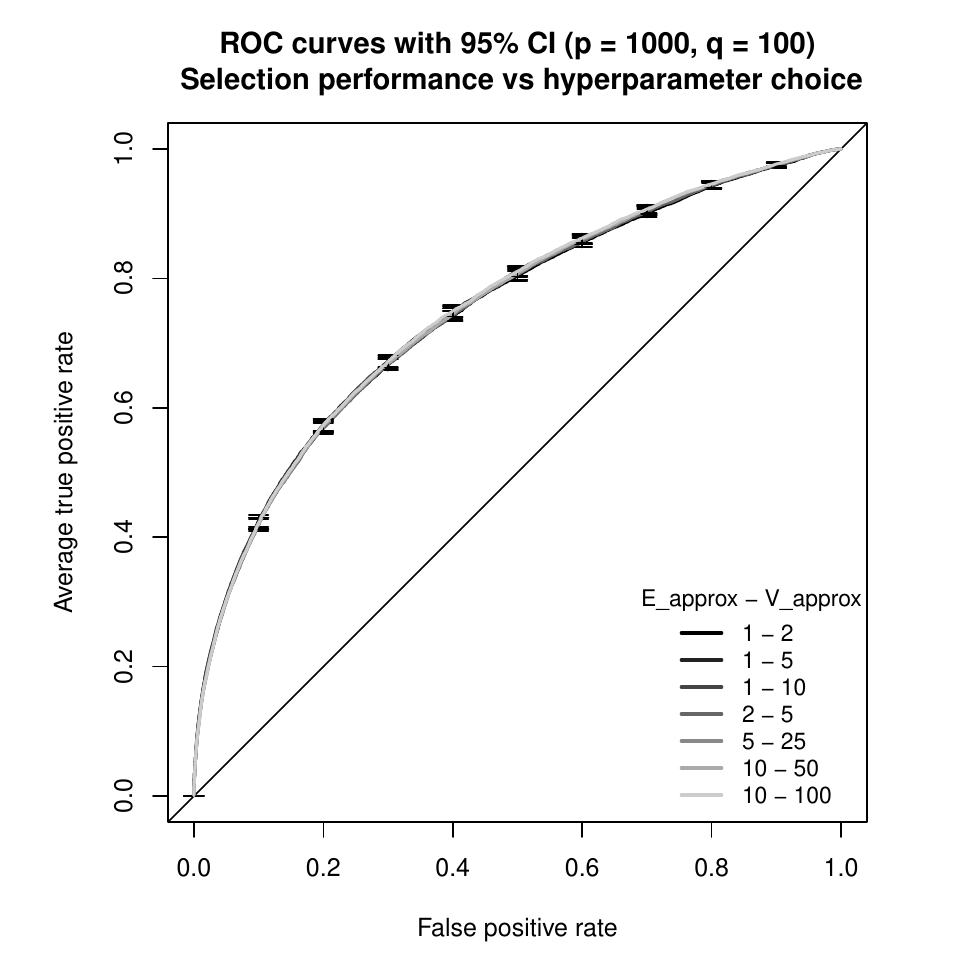}
\includegraphics[scale=0.46]{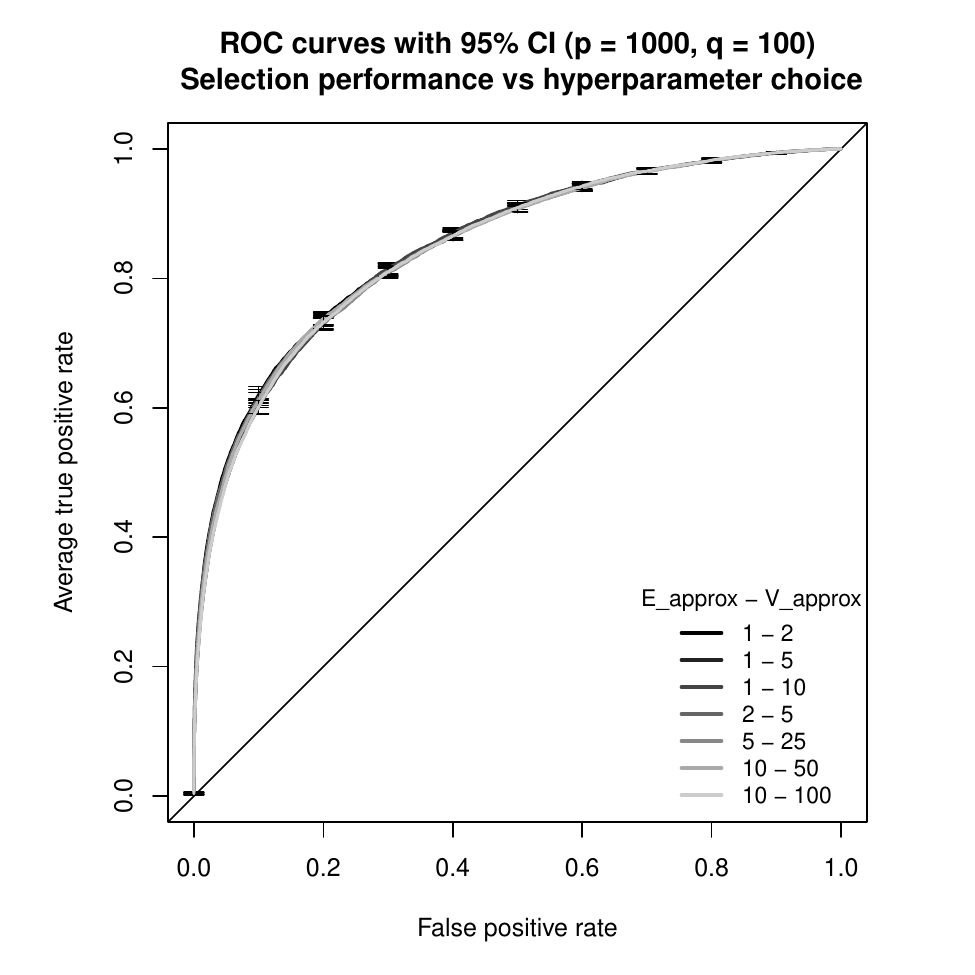}
\caption{\small Hyperparameter sensitivity analysis for a grid of pairs  $\mathrm{E_{approx}} = \mathrm{E}\lb p_{\gamma, t} \mid \theta = 0 \rb$, $\mathrm{V_{approx}} = \mathrm{Var}\lb p_{\gamma, t} \mid \theta = 0\rb$ in a setting where the approximation errors $\vert \mathrm{E}\lb p_{\gamma, t}\rb - \mathrm{E}\lb p_{\gamma, t} \mid \theta = 0\rb \vert$ and $\vert \mathrm{sd}\lb p_{\gamma, t}\rb -  \mathrm{sd}\lb p_{\gamma, t} \mid \theta = 0\rb\vert$ can be substantial, i.e., using $p = 1,000$ and $q = 100$. The response variance explained by the predictors does not exceed $5\%$ (left) and $10\%$ (right). The remaining settings are those of the simulation presented in Section \ref{sec_s2}. All the curves overlap.}\label{sm_roc_sensitivity} 
\end{figure}

Figure \ref{sm_diff_phi} shows the corresponding finite sample approximation errors obtained by Monte Carlo simulation for a grid of predictor and response dimensions. The error on the prior mean proportion of predictors associated with each response is reasonably small; it is roughly constant for all $p$ and increases as $q$ decreases, as expected. The error in the prior standard deviation is larger and leads to inflated variability specifications. Because of this approximation, we warn the reader that specifications for the prior number of predictors associated with each response should not be interpreted strictly. To give a sense of the quality of approximation for user-specific hyperprior elicitations and guide these choices before running the algorithm, we implemented a routine which evaluates the error by Monte Carlo simulation; 
it is provided as the R function \texttt{map\_hyperprior\_elicitation()} in the method's package \texttt{atlasqtl}. %

Importantly however,  we checked by simulation that the impact of these hyperparameter choices on variable selection performance is limited, %
thereby diminishing the practical relevance of the approximation error. %
For instance, Figure \ref{sm_roc_sensitivity} presents a sensitivity analysis for a regime where the approximation can be poor ($p$ large, $q$ small) and indicates that the performance is not affected by different hyperparameter choices with our simulation settings. %
The code for this sensitivity analysis is available at \url{https://github.com/hruffieux/atlasqtl_addendum} and can be readily modified to accommodate a variety of problem parameters (different $p$, $q$, number of samples $n$, proportion of variance explained, number of hotspots, correlation structures) and check %
 the robustness to hyperparameter choices in a problem-specific manner.

\section{Computational details for hotspot propensity shrinkage profile}\label{app_mult}

Some insight into the shrinkage enforced by the horseshoe prior on the hotspot propensities can be gained by examining the effect of the prior specification in terms of the total number of responses $q$  in the second-stage model. 

For a given predictor $X_s$, we introduce the auxiliary variable $z_{s}=(z_{s1}, \ldots, z_{sq})$ to reparametrize the probit link formulation, 
$$\gamma_{st}\mid \theta_s,  \zeta_t \sim \mathrm{Bernoulli}\left\{\Phi(\theta_s + \zeta_t)\right\}\,, \qquad t = 1, \ldots, q\,,$$
as 
$$\gamma_{st} = \one\{z_{st} > 0\}\,,\qquad z_{st} \mid \theta_s, \zeta_t \sim \mathcal{N}\left(\theta_{s} + \zeta_t, 1\right)\,, \qquad t = 1, \ldots, q\,.$$
We next provide the %
proof of the following lemma used in the development of Section \ref{sec_multadj}:
\begin{lemma}\label{sm_lemma_mean}
For $\theta_s \mid \sigma_0, \lambda_s \sim \mathcal{N}\left(0, \sigma_0^2\lambda_s^2\right)$ and $\zeta_t \sim \mathcal{N}(n_0, t_0^2)$, we have 
$$\mathrm{E}\left(\theta_s \mid z_s, \sigma_0, \lambda_s\right) = (1-\kappa_s)  \frac{1}{q} \sum_{t = 1}^q (z_{st} - n_0) + \kappa_s \times 0 = (1-\kappa_s)\bar{z}'_{s},$$
where $\bar{z}'_{s} =  \bar{z}_s - n_0$ and
$$\kappa_s = \frac{1}{1 + \alpha(\sigma_0) \lambda_s^2}$$
is the \emph{shrinkage factor} for hotspot propensities, with $\alpha(\sigma_0) = q (1 + t_0^2)^{-1} \sigma_0^2$.
\end{lemma}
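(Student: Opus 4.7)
The plan is to reduce the problem to a standard Gaussian conjugate mean update after integrating out the response-specific latent variables $\zeta_t$, and then to algebraically match the resulting posterior-mean coefficient with the stated shrinkage factor $\kappa_s$.

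First, I would marginalize $\zeta_t$ out of $z_{st}\mid \theta_s,\zeta_t \sim \mathcal{N}(\theta_s+\zeta_t,1)$ using $\zeta_t \overset{\mathrm{iid}}{\sim}\mathcal{N}(n_0,t_0^2)$; this yields $z_{st}\mid\theta_s \sim \mathcal{N}(n_0+\theta_s,\,1+t_0^2)$, as already noted in Section~\ref{sec_multadj}. Because the $\zeta_t$'s and the unit-variance noise terms are independent across $t$, the $z_{st}$'s are mutually independent conditional on $\theta_s$. It follows that the joint likelihood factorizes and, up to constants that do not depend on $\theta_s$, depends on $z_s$ only through $\bar z_s = q^{-1}\sum_t z_{st}$, so $\bar z_s$ (equivalently $\bar z'_s=\bar z_s-n_0$) is sufficient for $\theta_s$ in the conditional model given $\sigma_0,\lambda_s$.

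Second, I would collapse the likelihood onto this sufficient statistic:
\begin{equation*}
\bar z'_s \mid \theta_s \sim \mathcal{N}\!\left(\theta_s,\; \frac{1+t_0^2}{q}\right),
\end{equation*}
and combine it with the conjugate prior $\theta_s\mid\sigma_0,\lambda_s \sim \mathcal{N}(0,\sigma_0^2\lambda_s^2)$. Standard Gaussian conjugacy gives
\begin{equation*}
\mathrm{E}(\theta_s\mid \bar z'_s,\sigma_0,\lambda_s) \;=\; \frac{\sigma_0^2\lambda_s^2}{\sigma_0^2\lambda_s^2 + (1+t_0^2)/q}\,\bar z'_s.
\end{equation*}

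Third, I would verify the algebraic identity between this coefficient and $1-\kappa_s$. Multiplying numerator and denominator by $q(1+t_0^2)^{-1}$ gives
\begin{equation*}
\frac{\sigma_0^2\lambda_s^2}{\sigma_0^2\lambda_s^2 + (1+t_0^2)/q} \;=\; \frac{\alpha(\sigma_0)\lambda_s^2}{1+\alpha(\sigma_0)\lambda_s^2} \;=\; 1-\kappa_s,
\end{equation*}
with $\alpha(\sigma_0)=q(1+t_0^2)^{-1}\sigma_0^2$, completing the identification and yielding the claimed convex-combination form $(1-\kappa_s)\bar z'_s + \kappa_s\cdot 0$.

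There is no substantive obstacle here; the only point that deserves explicit care is justifying the reduction to $\bar z_s$ as a sufficient statistic (which requires independence of the $\zeta_t$'s across $t$, so that marginalizing them does not introduce cross-$t$ dependence in the conditional likelihood of $z_s$ given $\theta_s$). Once that is in place, the rest is a one-line conjugacy computation and an algebraic rearrangement to recognize $\alpha(\sigma_0)$.
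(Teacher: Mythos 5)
Your proof is correct and follows essentially the same route as the paper: marginalize $\zeta_t$ to get $z_{st}\mid\theta_s\sim\mathcal{N}(n_0+\theta_s,\,1+t_0^2)$, then apply Gaussian conjugacy and identify the posterior-mean coefficient with $1-\kappa_s$. The only cosmetic difference is that you first collapse the conditionally independent observations onto the sufficient statistic $\bar z_s$, whereas the paper completes the square directly on the product $\prod_{t=1}^q p(z_{st}\mid\theta_s)$ — the two computations are identical in substance.
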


\begin{proof}
Since $\zeta_t \sim \mathcal{N}(n_0, t_0^2)$, we have
\begin{eqnarray*}
p(z_{st} \mid \theta_s) &=& \int p(z_{st} \mid \theta_s, \zeta_t)\,  p(\zeta_{t}) \mathrm{d}\zeta_t\\
&\propto& \int \exp\left\{-\frac{1}{2} \left(z_{st} - \theta_s - \zeta_t\right)^2 -\frac{1}{2t_0^2} \left(\zeta_t - n_0\right)^2\right\} \mathrm{d}\zeta_t\\
&\propto& \exp\left\{-\frac{1}{2} \left(z_{st} - \theta_s \right)^2\right\} \int \exp\left\{-\frac{t_0^2 + 1}{2t_0^2} \left(\zeta_{t}^2 - 2 \frac{t_0^2 (z_{st}-\theta_s)+n_0}{1 + t_0^2 }\zeta_t\right)\right\}\mathrm{d}\zeta_t\\
&\propto& \exp\left[-\frac{1}{2} \left(z_{st} - \theta_s \right)^2 + \frac{1+t_0^2}{2t_0^2} \frac{\left\{t_0^2(z_{st}-\theta_s) + n_0\right\}^2}{\left(1 + t_0^2\right)^2}\right]\\
&\propto& \exp\left\{-\frac{1}{2 \left(1 + t_0^2\right)} \left(z_{st} - \theta_s -n_0\right)^2\right\},
\end{eqnarray*}
i.e., $z_{st} \mid \theta_s \sim \mathcal{N}\left(\theta_s + n_0, 1 + t_0^2\right)$. 
Then, since $\theta_s\mid \sigma_0, \lambda_s \sim \mathcal{N}\left(m_0, \sigma_0^2 \lambda_s^2\right)$, we have
\begin{eqnarray*}
p(\theta_s \mid z_s) &\propto& p(\theta_s)\prod_{t = 1}^q p(z_{st}\mid \theta_s) \\
&\propto& \exp\left\{-\frac{1}{2 \left(1+t_0^2\right)} \sum_{t = 1}^q \left(z_{st}-\theta_s - n_0\right)^2 - \frac{1}{2 \sigma^2_0\lambda_s^2}\left(\theta_s-m_0\right)^2\right\}\\
&\propto& \exp\left[-\frac{1}{2} \left(\frac{q}{1+t_0^2}+\frac{1}{\sigma_0^2\lambda_s^2}\right) \left\{ \theta_s^2 - 2 \,m_0(z_s)\,\theta_s\right\}\right],
\end{eqnarray*}
so $\theta_s \mid z_{s} \sim \mathcal{N}\left(m_0(z_{s}), \left\{q (1+t_0^2)^{-1}+ \sigma_0^{-2}\lambda_s^{-2}\right\}^{-1}\right),$
where 
\begin{eqnarray*}
m_0(z_s) &=&\left(\frac{q}{1+t_0^2}+\frac{1}{\sigma_0^2\lambda_s^2}\right)^{-1}\left\{\frac{1}{1 + t_0^2}\sum_{t = 1}^q \left(z_{st}-n_0\right) + \frac{1}{\sigma_0^2\lambda_s^2} m_0\right\}\\
&=& \frac{\sigma_0^2\lambda_s^2}{q\sigma_0^2\lambda_s^2 + 1 + t_0^2}\sum_{t = 1}^q \left(z_{st}-n_0\right) + \frac{1+t_0^2}{q\sigma_0^2\lambda_s^2 + 1 + t_0^2} m_0\\
&=& \frac{\alpha(\sigma_0)\lambda_s^2}{\alpha(\sigma_0)\lambda_s^2 + 1} \frac{1}{q}\sum_{t = 1}^q \left(z_{st} - n_0\right) +  \frac{1}{\alpha(\sigma_0)\lambda_s^2 + 1} m_0\\
&=&\left(1-\kappa_s\right)\, \frac{1}{q}\sum_{t = 1}^q \left(z_{st} - n_0\right) +  \kappa_s m_0,\\
\end{eqnarray*}
with $$\kappa_s = \frac{1}{\alpha(\sigma_0) \lambda_s^2 + 1}, \qquad \alpha(\sigma_0) = q (1 + t_0^2)^{-1} \sigma_0^2.$$
Parameter $\kappa_s$ is a ``shrinkage factor'' which represents the amount of shrinkage applied on $\theta_s$ towards $m_0$ from  $\frac{1}{q}\sum_{t = 1}^q \left(z_{st} - n_0\right)$; in our horseshoe prior specification, we have $m_0 = 0$, hence the result.
\end{proof}

\begin{figure}[t!]
\centering
\includegraphics[scale=0.4]{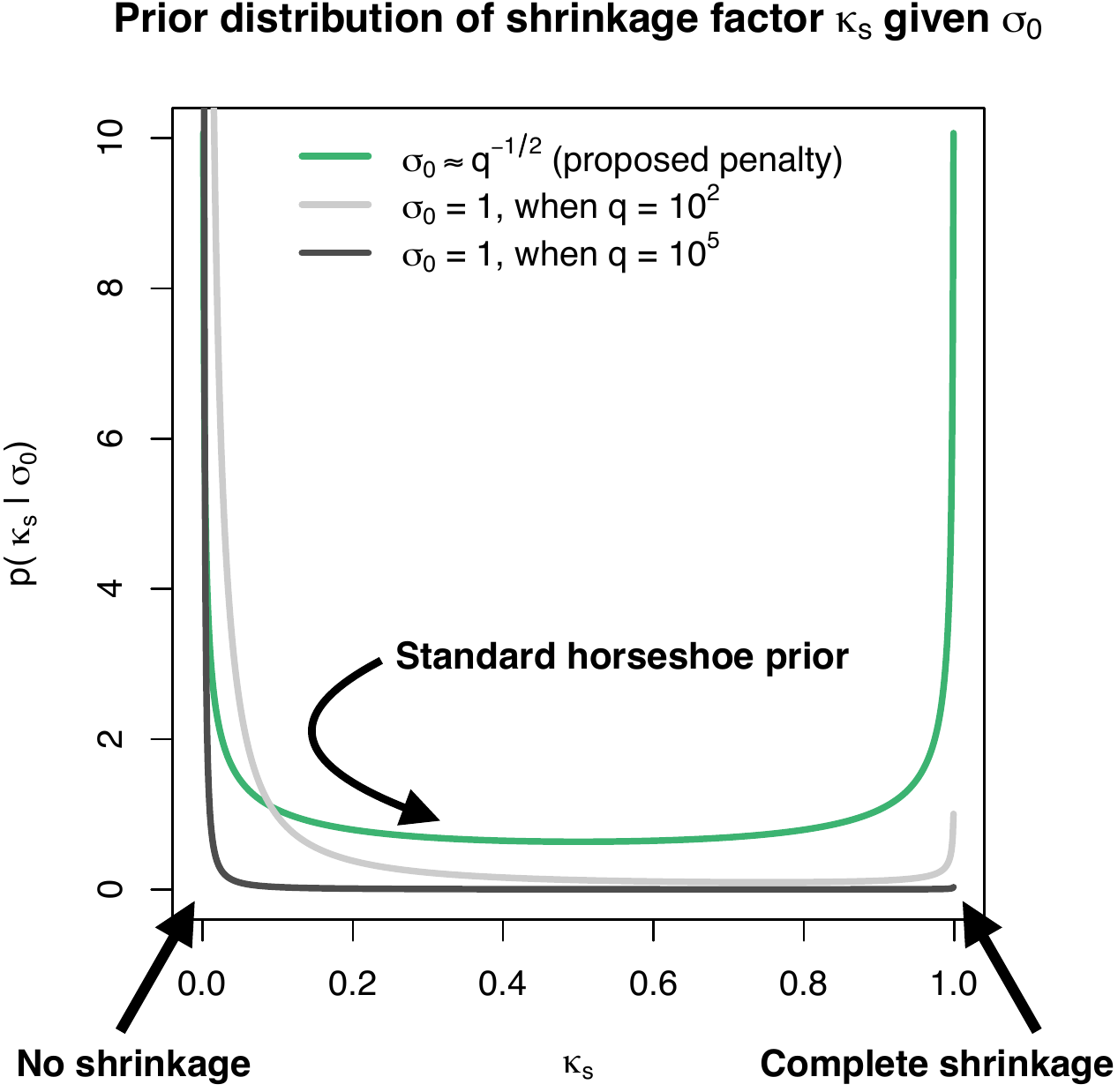}
\caption{\small Prior distribution for the global-local shrinkage factor parameter. Cases without response-multiplicity adjustment (grey): weaker shrinkage enforced as $q$ gets large. Case with the response-multiplicity adjustment (green): standard horseshoe $\mathrm{Beta}(1/2, 1/2)$ shrinkage factor recovered.}\label{sm_fig_shr} %
\end{figure}

\begin{lemma}\label{sm_lemma_prior_kappa}
For $\lambda_s \sim C^{+}(0,1)$, the prior distribution of the shrinkage factor $\kappa_s$ given $\sigma_0$ is
\begin{equation}\label{sm_prior_kappa}p(\kappa_s \mid \sigma_0) =\pi^{-1} \alpha(\sigma_0)^{1/2}\, \kappa_s^{-1/2} (1 - \kappa_s)^{-1/2} \ls 1 + \kappa_s \lc \alpha(\sigma_0) - 1\rc\rs^{-1}, \qquad 0 < \kappa_s < 1\,,\end{equation}
where $\alpha(\sigma_0) = q (1 + t_0^2)^{-1} \sigma_0^2$. 
\end{lemma}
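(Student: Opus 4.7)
The plan is to derive the density of $\kappa_s = \{1+\alpha(\sigma_0)\lambda_s^2\}^{-1}$ by a direct univariate change of variables from the half-Cauchy prior on $\lambda_s$, conditional on $\sigma_0$ (which fixes $\alpha(\sigma_0)$). Since $\lambda_s > 0$ maps monotonically and smoothly onto $\kappa_s \in (0,1)$, the transformation is a bijection and the standard formula $p(\kappa_s\mid\sigma_0) = p(\lambda_s)\,\lvert \mathrm{d}\lambda_s/\mathrm{d}\kappa_s\rvert$ applies, with $p(\lambda_s) = 2/\{\pi(1+\lambda_s^2)\}$ the half-Cauchy density on the positive half-line.

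First I would invert the map: write $\alpha = \alpha(\sigma_0)$ for brevity, so that $\lambda_s^2 = (1-\kappa_s)/(\alpha\kappa_s)$ and $\lambda_s = \{(1-\kappa_s)/(\alpha\kappa_s)\}^{1/2}$ for $\kappa_s \in (0,1)$. Second I would compute the Jacobian by differentiating: taking logs and differentiating, or directly, yields
\[
\left\lvert\frac{\mathrm{d}\lambda_s}{\mathrm{d}\kappa_s}\right\rvert = \frac{1}{2\sqrt{\alpha}\,\kappa_s^{3/2}(1-\kappa_s)^{1/2}}\,.
\]
Third I would simplify the $(1+\lambda_s^2)^{-1}$ factor appearing in the half-Cauchy density: substituting $\lambda_s^2 = (1-\kappa_s)/(\alpha\kappa_s)$ gives
\[
1+\lambda_s^2 = \frac{\alpha\kappa_s + 1 - \kappa_s}{\alpha\kappa_s} = \frac{1+\kappa_s\{\alpha-1\}}{\alpha\kappa_s}\,,
\]
so $(1+\lambda_s^2)^{-1} = \alpha\kappa_s\,[1+\kappa_s\{\alpha-1\}]^{-1}$.

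Finally I would multiply the three pieces together. The factor of $2$ from the half-Cauchy cancels the $1/2$ from the Jacobian, an $\alpha$ cancels with $\sqrt{\alpha}$ in the denominator of the Jacobian to leave $\alpha^{1/2}$, and one power of $\kappa_s$ cancels $\kappa_s^{3/2}$ down to $\kappa_s^{-1/2}$. Collecting terms gives exactly
\[
p(\kappa_s\mid\sigma_0) = \pi^{-1}\alpha(\sigma_0)^{1/2}\,\kappa_s^{-1/2}(1-\kappa_s)^{-1/2}\bigl[1+\kappa_s\{\alpha(\sigma_0)-1\}\bigr]^{-1}\,,
\]
as claimed. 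There is no real obstacle: the argument is a one-line change of variables, and the only care needed is bookkeeping to verify that the $(1-\kappa_s)^{-1/2}\kappa_s^{-1/2}$ kernel and the rational factor come out with the correct constants. A useful consistency check at the end is that when $\alpha(\sigma_0) = 1$ the rational factor collapses to $1$ and the density reduces to the $\mathrm{Beta}(1/2,1/2)$ horseshoe kernel, matching the discussion in Section \ref{sec_multadj}.
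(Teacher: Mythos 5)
Your proof is correct and takes essentially the same approach as the paper: both are a single monotone change of variables from the half-Cauchy prior to $\kappa_s$, the only cosmetic difference being that the paper first passes through the density of $\lambda_s^2$ (a beta prime with both shape parameters $1/2$) while you differentiate with respect to $\lambda_s$ directly. Your Jacobian, the simplification of $(1+\lambda_s^2)^{-1}$, and the final bookkeeping all check out, and the $\mathrm{Beta}(1/2,1/2)$ sanity check at $\alpha(\sigma_0)=1$ matches the paper's discussion.
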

\begin{proof}
Since $\kappa_s  = 1/\left\{\alpha(\sigma_0) \lambda_s^2 + 1\right\}$, the result is obtained by a simple change of variable. Since $\lambda_s$ has a standard half-Cauchy prior, $\lambda_s^2$ has a beta prime distribution with shape parameters both equal to $1/2$,
$$p_{\lambda^2_s}(x) = \pi^{-1} x^{-1/2} (1+x)^{-1}, \qquad x > 0\,.$$
Introducing $g(\kappa_s) := \alpha(\sigma_0)^{-1} \left(\kappa_s^{-1}-1\right) = \lambda_s^2$, we then have
\begin{eqnarray*}
p_{\kappa_s}(\kappa_s) &=& p_{\lambda^2_s}\left(g\left(\kappa_s\right)\right) \vert g'\left(\kappa_s\right)\vert\\
&=& \pi^{-1} \alpha(\sigma_0)^{1/2} \left(\kappa_s^{-1}-1\right)^{-1/2} \left\{1+\alpha(\sigma_0)^{-1}\left(\kappa_s^{-1}-1\right)\right\}^{-1} \alpha(\sigma_0)^{-1} \kappa_s^{-2}\\
&=& \pi^{-1} \alpha(\sigma_0)^{1/2}\, \kappa_s^{-1/2} (1 - \kappa_s)^{-1/2} \ls 1 + \kappa_s \lc \alpha(\sigma_0) - 1\rc\rs^{-1},
\end{eqnarray*}
for $0 < \kappa_s < 1$.
\end{proof}
Figure \ref{sm_fig_shr} shows (\ref{sm_prior_kappa}) for different choices of $\sigma_0$ and response dimensions; in particular, taking $\sigma_0 = q^{-1/2}$ gives $\alpha(\sigma_0) \approx 1$ as $t_0 \ll 1$ in sparse settings, and (\ref{sm_prior_kappa}) approaches the standard horseshoe shrinkage factor prior, $\mathrm{Beta}(1/2, 1/2)$.

\section{Derivation of the annealed variational algorithm}\label{app_vb}

Recall the full model specification. For $q$ centered responses, $y = \lb y_1, \ldots, y_q\rb$, and $p$ centered predictors, $X =\lb x_1, \ldots, x_p\rb$, for $n$ samples ($n \ll p$),
\begin{eqnarray}\label{app_eq_model}
y_t &\mid& \beta_t, \tau_t \sim \mathcal{N}_n\lb X\beta_t, \tau_t^{-1} I_n\rb,\hspace{3.73cm} \tau_t \overset{\mathrm{ind}}{\sim} \mathrm{Gamma}(\eta_t, \kappa_t)\,,\hspace{0.3cm}  t = 1, \ldots, q\,, \nonumber\\
\beta_{st} &\mid& \gamma_{st}, \tau_t, \sigma \sim \gamma_{st}\,\mathcal{N}\lb 0, \sigma^2\,\tau_{t}^{-1}\rb  + (1-\gamma_{st})\,\delta_0\,, \hspace{0.65cm} \sigma^{-2} \sim \mathrm{Gamma}(\nu, \rho)\,,  \hspace{0.6cm} s = 1, \ldots, p\,,\nonumber\\
\gamma_{st}&\mid& \theta_s,  \zeta_t \sim \mathrm{Bernoulli}\left\{\Phi(\theta_s + \zeta_t)\right\},\hspace{2.9cm} \zeta_t \overset{\mathrm{iid}}{\sim}  \mathcal{N}(n_0, t_0^2)\,,\\
\theta_s &\mid& \lambda_s, \sigma_0 \sim \mathcal{N}\left(0, q^{-1}\sigma_0^2 \lambda_s^2 \right)\,,\hspace{0.6cm} \sigma_0 \sim \mathrm{C}^+(0, 1)\,, \hspace{0.65cm}  \lambda_s \overset{\mathrm{iid}}{\sim} \mathrm{C}^+(0, 1)\,, \nonumber 
\end{eqnarray}
where $\delta_0$ is the Dirac distribution, $\Phi(\cdot)$ is the standard normal cumulative distribution function, and $\mathrm{C}^+(\cdot, \cdot)$ is a half-Cauchy distribution.

In order to obtain closed-form updates for our variational algorithm, we apply two reparametrizations. We first rewrite the probit-link level using the classical representation
$$\gamma_{st} = \one\{z_{st} > 0\}\,,\qquad z_{st} \mid \theta_s, \zeta_t \sim \mathcal{N}\left(\theta_{s} + \zeta_t, 1\right),$$
where $z_{st}$ is an auxiliary variable.
We then consider the following formulation for the scale parameters of the horseshoe prior,
$$\sigma_0^{-2} \mid \xi \sim \mathrm{Gamma}\left(\frac{1}{2}, \xi^{-1}\right), \hspace{0.6cm} \xi^{-1} \sim \mathrm{Gamma}\left(\frac{1}{2}, 1\right), \hspace{0.6cm} p\left(\lambda_s^{-2}\right) = \pi^{-1}\left(1 + \lambda_s^2\right)^{-1} \lambda_s^3\,.$$
This parametrization introduces the auxiliary variable, $\xi$; it was first proposed by \citet{neville2014mean}. 
For completeness, we reformulate two lemmas which 
establish the equivalence with the original formulation in (\ref{app_eq_model}). %

\begin{lemma}
If $a$ is a random variable such that
$$a \mid \xi\sim \mathrm{Gamma}\left(\frac{1}{2}, \xi^{-1}\right), \hspace{0.6cm} \xi^{-1} \sim \mathrm{Gamma}\left(\frac{1}{2}, A^{-2}\right), \quad A>0,$$
then 
$a^{-1/2} \sim \mathrm{C}^+(0, A)$.
\end{lemma}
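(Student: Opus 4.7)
The plan is to marginalize out the auxiliary variable $\xi$ to obtain the density of $a$ in closed form, and then apply a one-variable change of variables to obtain the density of $b = a^{-1/2}$, which should match that of a standard half-Cauchy (on scale $A$). Since both the Gamma prior on $a \mid \xi$ and the Gamma prior on $\xi^{-1}$ have shape parameter $1/2$, the $\xi$-marginal integral will reduce to a single exponential and so is elementary; the whole computation is just bookkeeping.

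Concretely, first I would set $u = \xi^{-1}$ and rewrite the hierarchy as $a \mid u \sim \mathrm{Gamma}(1/2, u)$ and $u \sim \mathrm{Gamma}(1/2, A^{-2})$ (shape--rate). The joint density, up to a multiplicative constant that does not depend on $a$ or $u$, is
\begin{equation*}
p(a, u) \propto u^{1/2} a^{-1/2} e^{-ua}\, u^{-1/2} e^{-u/A^2} = a^{-1/2} \exp\{-u(a + A^{-2})\},
\end{equation*}
so integrating over $u > 0$ gives $p(a) \propto a^{-1/2} (a + A^{-2})^{-1}$ on $(0, \infty)$. This is a standard Beta-prime-type kernel for $a$.

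Next, with $b = a^{-1/2}$, so that $a = b^{-2}$ and $|da/db| = 2 b^{-3}$, the change-of-variables formula yields
\begin{equation*}
p(b) \propto b \cdot (b^{-2} + A^{-2})^{-1} \cdot b^{-3} \propto \frac{A^{2}}{A^{2} + b^{2}}, \qquad b > 0,
\end{equation*}
after multiplying numerator and denominator of $(b^{-2} + A^{-2})^{-1}$ by $A^{2} b^{2}$. Comparing with the half-Cauchy density, $p_{\mathrm{C}^+(0,A)}(b) = 2 A / \{\pi(A^{2} + b^{2})\}$ for $b > 0$, the kernels agree; since both are probability densities on $(0,\infty)$, the normalizing constants must match as well, establishing $a^{-1/2} \sim \mathrm{C}^+(0, A)$.

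There is no real obstacle: the only thing to be careful about is the shape--rate parameterization (so that $a \mid \xi$ has density proportional to $\xi^{-1/2} a^{-1/2} e^{-a/\xi}$ and $\xi^{-1}$ has density proportional to $(\xi^{-1})^{-1/2} e^{-\xi^{-1}/A^2}$) and the Jacobian in the final step. One could alternatively recognise the marginal of $a$ as a scaled Beta-prime and invoke a known identity for its reciprocal square root, but the direct computation above is shorter and self-contained.
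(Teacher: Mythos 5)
Your proof is correct: the marginalization over $\xi$ collapses to an elementary exponential integral giving $p(a)\propto a^{-1/2}(a+A^{-2})^{-1}$, and the change of variables $b=a^{-1/2}$ then yields the half-Cauchy kernel, with the shape--rate reading of the Gamma distributions matching the paper's convention. The paper itself dismisses this lemma as ``straightforward'' and gives no proof, so your computation simply supplies the routine verification that was omitted.
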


\begin{lemma}
If $a$ is a random variable such that
$$p\left(a\right) = \pi^{-1}\left(1 + a\right)^{-1} a^{-1/2},\quad a>0,$$
then 
$a^{-1/2} \sim \mathrm{C}^+(0, 1)$.
\end{lemma}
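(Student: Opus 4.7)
This is a straightforward one-dimensional change of variables argument, so the plan is to simply verify that the density $p(a) = \pi^{-1}(1+a)^{-1}a^{-1/2}$ pushes forward under the map $b = a^{-1/2}$ to the standard half-Cauchy density $p(b) = 2\pi^{-1}(1+b^2)^{-1}$, $b>0$.

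Concretely, the plan is: first, note that the map $g:(0,\infty)\to(0,\infty)$, $a\mapsto b=a^{-1/2}$, is a smooth bijection with inverse $g^{-1}(b) = b^{-2}$ and Jacobian $|dg^{-1}/db| = 2b^{-3}$. Second, apply the standard transformation formula $p_b(b) = p_a(g^{-1}(b)) \cdot |dg^{-1}/db|$ to obtain
\begin{equation*}
p_b(b) \;=\; \pi^{-1}\bigl(1+b^{-2}\bigr)^{-1}\,(b^{-2})^{-1/2} \cdot 2b^{-3} \;=\; \pi^{-1}\,\frac{b^2}{1+b^2}\,b\,\cdot 2b^{-3} \;=\; \frac{2}{\pi(1+b^2)},
\end{equation*}
for $b>0$, which is exactly the density of $\mathrm{C}^+(0,1)$.

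There is no real obstacle here: the only thing to be careful about is keeping track of the exponents and the Jacobian factor, in particular rewriting $(1+b^{-2})^{-1} = b^2/(1+b^2)$ so that the $b^2$ in the numerator cancels the $b^{-2}$ coming from $(b^{-2})^{-1/2}\cdot 2b^{-3} = 2b^{-2}$. A brief sanity check is to verify that $p(a)$ is indeed a density, i.e., that $\int_0^\infty \pi^{-1}(1+a)^{-1}a^{-1/2}\,da = 1$, which follows from the substitution $u=\sqrt{a}$ giving $\int_0^\infty 2\pi^{-1}(1+u^2)^{-1}\,du = 1$; this is essentially the same computation in reverse and confirms consistency.
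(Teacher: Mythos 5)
Your change-of-variables computation is correct and is exactly the standard argument; the paper itself omits the details, dismissing both lemmas with ``Both results are straightforward,'' so your proposal simply fills in the verification the authors had in mind. The Jacobian, the simplification $(1+b^{-2})^{-1}(b^{-2})^{-1/2}\cdot 2b^{-3} = 2(1+b^2)^{-1}$, and the resulting half-Cauchy density are all handled correctly.
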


\begin{proof}
Both results are straightforward.
\end{proof}

We now provide the updates for all the heated variational parameters. %
Let $T> 1$ be the current temperature from the annealing schedule, let $v$ be the parameter vector of interest, $p(v \mid y)$, true posterior distribution, and $q_T(v)$, the heated mean-field variational approximation.  We maximize the lower bound on the marginal log-likelihood,
$$\mathcal{L}_T(q) = \int q_T(v) \log p(v, y) \mathrm{d}v -   T \int q_T(v) \log q_T(v) \mathrm{d}v\,.$$
We derive the form of the heated variational distribution $q_T(v_j)$ by observing that 
\begin{eqnarray}
\mathcal{L}_T(q) &=& \E_j\ls \E_{-j} \lc \log p(v, y)\rc - T \log q_T(v_j)\rs + \mathrm{cst}\nonumber\\\nonumber\\
& = & \E_j\ls \log \lc \frac{\exp\lc \E_{-j} \log p(v, y)\rc}{q_T(v_j)^T} \rc\rs + \mathrm{cst}\nonumber\\\nonumber\\
& = & T\, \E_j\ls \log \lc \frac{p_{T, -j}(v_j, y) }{q_T(v_j)} \rc\rs + \mathrm{cst},\label{sm_fin}
\end{eqnarray}
where we introduced the distribution $p_{T, -j}(v_j, y) \propto \exp\lc T^{-1}\E_{-j} \log p(v, y)\rc$, and where $\mathrm{E}_{j}(\cdot)$ denotes the expectation with respect to the distribution $q_T(v_j)$, $\mathrm{E}_{-j}(\cdot)$,  the expectation with respect to the distributions $q_T(v_k)$, for all the variables $v_k$ ($k\neq j$), and $\mathrm{cst}$ is constant with respect to $v_j$. The expectation in (\ref{sm_fin}) corresponds to the negative Kullback--Leibler divergence between $q_T(v_j)$ and the $p_{T, -j}(v_j, y)$; $\mathcal{L}_T(q)$ is therefore maximal when $q_T(v_j) = p_{T, -j}(v_j, y)$, i.e., when
\begin{equation}\label{sm_upd}\log q_T(v_j) = T^{-1}\mathrm{E}_{-j} \{\log p(y, v)\} + \mathrm{cst}\,,\qquad\qquad j=1, \ldots, J\,.\end{equation}
For ease of reading, we hereafter drop the subscript $T$ in $q_T(\cdot)$, and write $c = T^{-1}$ and $v^{(r)}_j$ for the $r^{th}$ moment with respect to the approximate posterior distribution $q(v_j)$. We find that, 
$$
q(\beta_{st}, \gamma_{st}, z_{st}) = q(\beta_{st} \mid z_{st}) q(z_{st} \mid \gamma_{st}) q(\gamma_{st})\,,\qquad s = 1, \ldots, p,\; t = 1, \ldots, q
$$ 
with 
\begin{eqnarray*}\label{guess}
\beta_{st} \mid z_{st} > 0, y\; &\sim& \mathcal{N}\left(\mu_{\beta, st}, \sigma^2_{\beta, st}\right)\,,\quad
\beta_{st} \mid z_{st} \leq 0, y\; \sim \delta_0\,,\quad\\
z_{st} \mid \gamma_{st} = \delta, y\; &\sim& \mathcal{TN}\lb \theta_s^{(1)} + \zeta_t^{(1)}, c^{-1}; \lc 0 < (-1)^{1-\delta}z_{st}\rc\rb, \\
\gamma_{st} \mid y &\sim& \mathrm{Bernoulli}\left(\gamma^{(1)}_{st}\right)\,,
\end{eqnarray*}
where   $X \sim \mathcal{TN}\lb \mu, \sigma^2; \lc a < x < b\rc\rb$ denotes a truncated normal variable, 
$$\sigma^{-2}_{\beta, st} = c\, \tau_t^{(1)}\lc \Vert X_{s}\Vert^2 + \left(\sigma^{-2}\right)^{(1)} \rc\,,\qquad
\mu_{\beta, st} = c\, \sigma_{\beta, st}^2\tau_t^{(1)}X_{s}^T\left(y_t-\sum_{j=1, j\neq s}^p\gamma_{jt}^{(1)}\mu_{\beta, jt}X_{j}\right)\,,$$
and 
\begin{eqnarray*}
\frac{1}{\gamma^{(1)}_{st}}&=& 1 +  \exp\ls -c \lc\frac{1}{2} \left(\log \sigma^{-2} \right)^{(1)} + \frac{1}{2} \left(\log \tau_t \right)^{(1)} + \frac{1}{2}\mu^2_{\beta, st}\sigma^{-2}_{\beta, st} + \log \sigma_{\beta, st} \right.\right. \\
&&\left.\left.\hspace{2.2cm}- \log\lc 1- \Phi\lb \theta_s^{(1)} + \zeta_t^{(1)}\rb\rc + \log \Phi\lb \theta_s^{(1)} + \zeta_t^{(1)}\rb\rc \rs.
\end{eqnarray*}
Writing $\alpha_{st} = \theta_s + \zeta_t$, the first moment of $z_{st}$ given $\gamma_{st}$ is
\begin{align*}
\mathrm{E}_q \lb z_{st} \mid \gamma_{st} \rb = \alpha_{st}^{(1)} + c^{-1/2}M\lb c^{1/2}\alpha_{st}^{(1)}, \gamma_{st}\rb ,
\end{align*}
where 
$$M\lb  u, \gamma\rb = \lb-1\rb^{1-\gamma}\frac{\varphi\lb u\rb}{\Phi\lb u\rb^{\gamma}\ls 1-\Phi\lb u\rb\rs^{1-\gamma}}, \qquad u \in \R, \; \gamma = 0, 1, $$ is the inverse Mills ratio and $\mathrm{E}_q\left( \cdot\right)$ is the expectation with respect to the variational distribution $q(\cdot)$. We therefore have
\begin{eqnarray*}z_{st}^{(1)} &=& \gamma_{st}^{(1)} \lb \alpha_{st}^{(1)}+ c^{-1/2}M( c^{1/2}\alpha_{st}^{(1)},1)\rb + (1-\gamma_{st}^{(1)}) \lb \alpha_{st}^{(1)} + c^{-1/2} M(c^{1/2} \alpha_{st}^{(1)},0)\rb \\
&=& c^{-1/2} \gamma_{st}^{(1)} \lc M(c^{1/2} \alpha_{st}^{(1)},1) - M(c^{1/2} \alpha_{st}^{(1)},0)\rc + \alpha_{st}^{(1)} + c^{-1/2} M(c^{1/2} \alpha_{st}^{(1)},0).
\end{eqnarray*}
The second moment of $z_{st}$ given $\gamma_{st}$ is
\begin{align*}
\mathrm{E}_q \lb z_{st}^2 \mid \gamma_{st} \rb  &= c^{-1}  + \lb\alpha_{st}^{(1)} \rb^2 - c^{-1/2} \alpha_{st}^{(1)} M\lb c^{1/2}  \alpha_{st}^{(1)},\gamma_{st}\rb + 2 c^{-1/2} \alpha_{st}^{(1)}M\lb c^{1/2} \alpha_{st}^{(1)},\gamma_{st}\rb \\ &= c^{-1}  + \alpha_{st}^{(1)}\mathrm{E}_q \lb z_{st} \mid \gamma_{st} \rb , 
\end{align*}
which implies that
\begin{align*}z_{st}^{(2)} &= c^{-1} \gamma_{st}^{(1)} +  \gamma_{st}^{(1)} \alpha_{st}^{(1)}\mathrm{E}_q \lb z_{st} \mid \gamma_{st} = 1\rb + c^{-1} (1-  \gamma_{st}^{(1)}) +  (1-\gamma_{st}^{(1)}) \alpha_{st}^{(1)} \mathrm{E}_q \lb z_{st} \mid \gamma_{st} =0\rb\\
&= c^{-1} + \alpha_{st}^{(1)} z_{st}^{(1)},
\end{align*}
and finally its entropy is
\begin{align}\label{entr}
H(z_{st} \mid \gamma_{st}) &= \log\ls \sqrt{\frac{2 \pi e}{c}}\;\Phi\lb c^{1/2}  \alpha_{st}^{(1)}\rb^{\gamma_{st}}\lc 1-\Phi\lb c^{1/2}  \alpha_{st}^{(1)}\rb\rc^{1-\gamma_{st}}\rs  -\frac{1}{2} c^{1/2} \alpha_{st}^{(1)} M\lb c^{1/2}  \alpha_{st}^{(1)}, \gamma_{st}\rb. 
\end{align}

Then, we find
\begin{eqnarray*}
\sigma^{-2} \mid y \sim \mathrm{Gamma}\lb \nu_\sigma,\rho_\sigma\rb,\qquad \lb \sigma^{-2}\rb^{(1)} = \nu_\sigma / \rho_\sigma,
\end{eqnarray*}
with
\begin{align*}
\nu_\sigma=c \lb \nu+\frac{1}{2}\sum_{t=1}^{q}\sum_{s=1}^{p}\gamma_{st}^{(1)}\rb - c + 1, \qquad \rho_\sigma=c \lc\rho+\frac{1}{2}\sum_{t=1}^{q}\sum_{s=1}^{p}\gamma_{st}^{(1)}\lb\mu_{\beta,st}^{2}+\sigma_{\beta,st}^{2} \rb\tau_t^{(1)}\rc.
\end{align*}
The residual precision parameters have
$$\tau_t \mid y \sim \mathrm{Gamma}\left(\eta_{\tau,t}, \kappa_{\tau,t}\right),\qquad
\tau_t^{(1)}= \eta_{\tau,t} /\kappa_{\tau,t}\,,
$$
where
\begin{eqnarray*}
\eta_{\tau,t} &=& c\lb \eta_t+\frac{n}{2}+\frac{1}{2}\sum_{s=1}^p\gamma_{st}^{(1)}\rb - c + 1\,,\\
\kappa_{\tau,t} &=& c \left[ \kappa_t+\frac{1}{2} \Vert y_{t}\Vert ^2- y_{t}^T\sum_{s=1}^p \mu_{\beta, st} \gamma^{(1)}_{st} X_{s} + \sum_{s=1}^{p-1}\mu_{\beta, st}\gamma^{(1)}_{st} X_{s}^T\sum_{j=s+1}^p\mu_{\beta, jt}\gamma^{(1)}_{jt}X_{j}\right.\\
&&\left.\quad  + \frac{1}{2}\sum_{s=1}^p  \gamma_{st}^{(1)}\left(\sigma^2_{\beta,st} + \mu^2_{\beta, st} \right) \left\{ \Vert X_{s}\Vert^2+\left(\sigma^{-2}\right)^{(1)} \right\}\right]\,.
\end{eqnarray*}
We then have
$$\zeta_t \mid y\sim \mathcal{N}\left(\mu_{\zeta, t}, \sigma^2_{\zeta,t}\right),$$
with
$$\sigma^{-2}_{\zeta, t} = c\lb p + t_0^{-2}\rb, \qquad \mu_{\zeta, t} = c\,\sigma^2_{\zeta, t}\left( \sum_{s = 1}^p z_{st}^{(1)} - \sum_{s = 1}^p \theta_s^{(1)} + t_0^{-2} n_0 \right).$$
Similarly, we have
$$ \theta_s \mid y\sim \mathcal{N}\left(\mu_{\theta,s}, \sigma^2_{\theta,s}\right),$$
with
$$\sigma^{-2}_{\theta,s} = c\, q \lc 1 + \left(\sigma_0^{-2}\right)^{(1)} \left(\lambda_s^{-2}\right)^{(1)}\rc, \qquad \mu_{\theta,s} = c\,\sigma^2_{\theta,s}\left(\sum_{t = 1}^q z_{st}^{(1)} - \sum_{t = 1}^q \zeta_{t}^{(1)}\right).$$
The global precision parameters have variational distributions
$$ \sigma_0^{-2} \mid y\sim \mathrm{Gamma}\left(\nu_{\sigma_0}, \rho_{\sigma_0}\right),\qquad \lb \sigma_0^{-2}\rb^{(1)} = \nu_{\sigma_0} / \rho_{\sigma_0},$$
with
$$\nu_{\sigma_0} = \frac{c}{2} \left(p - 1\right) +1, %
\qquad \rho_{\sigma_0} = c \lc \left(\xi^{-1}\right)^{(1)} + \frac{q}{2} \sum_{s = 1}^p\left(\lambda_s^{-2}\right)^{(1)} \left(\mu_{\theta,s}^2 + \sigma_{\theta,s}^2\right)\rc,$$
and
$$ \xi^{-1} \mid y\sim \mathrm{Gamma}\left(\nu_{\xi}, \rho_{\xi}\right),\qquad \lb \xi^{-1}\rb^{(1)} = \nu_{\xi} / \rho_{\xi},$$
with
$$\nu_{\xi} = 1, \qquad \rho_{\xi} = c \lc 1 + \left(\sigma_0^{-2}\right)^{(1)}\rc.$$
Finally, the updates for the local precision parameters are given by the following lemma. 
\begin{lemma}
Let $0 < c \leq 1$. Then 
\begin{equation}\label{sm_lam_c}\lb\lambda_s^{-2}\rb^{(1)} = \frac{\Gamma(-c+2, L_s)}{ L_s\Gamma(-c+1, L_s)} -1\,,\end{equation}
where
$$L_s  = \frac{c\, q}{2} \left(\sigma_0^{-2}\right)^{(1)} \left(\mu_{\theta,s}^2 + \sigma_{\theta, s}^2\right),$$
and $\Gamma(\cdot, \cdot)$ is the incomplete Gamma function.
For $c = 1$, (\ref{sm_lam_c}) reduces to
$$\lb\lambda_s^{-2}\rb^{(1)} = \frac{1}{L_s \exp(L_s)\mathrm{E}_1(L_s)} -1\,,$$
where $E_1(\cdot)$ is the exponential integral function of order 1.
\end{lemma}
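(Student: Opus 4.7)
The plan is to apply the general mean-field annealed update (\ref{sm_upd}) to $a_s := \lambda_s^{-2}$, identify the variational density up to a proportionality constant, and then compute its first moment by reducing the resulting integral to the upper incomplete Gamma function.

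\textbf{Step 1: Collect the terms in $a_s$.} The log-joint $\log p(y,v)$ contains $\lambda_s$ only through (i) the Gaussian conditional $\theta_s\mid\lambda_s,\sigma_0\sim\mathcal{N}(0,q^{-1}\sigma_0^2\lambda_s^2)$, which contributes $-\tfrac{1}{2}\log(q^{-1}\sigma_0^2\lambda_s^2)-\tfrac{q\theta_s^2 a_s}{2\sigma_0^2}$, and (ii) the reparametrized prior $p(a_s)\propto(1+a_s)^{-1}a_s^{-1/2}$, whose log contributes $-\log(1+a_s)-\tfrac{1}{2}\log a_s$. The $\tfrac12\log a_s$ term from (i) and the $-\tfrac12\log a_s$ term from (ii) cancel, so taking the mean-field expectation with respect to the other factors and using $\E_q(\theta_s^2)=\mu_{\theta,s}^2+\sigma_{\theta,s}^2$ gives, up to an additive constant,
\begin{equation*}
\log q(a_s) \;=\; -L_s\, a_s \;-\; c\,\log(1+a_s),\qquad L_s=\tfrac{cq}{2}(\sigma_0^{-2})^{(1)}(\mu_{\theta,s}^2+\sigma_{\theta,s}^2).
\end{equation*}
Hence $q(a_s)\propto(1+a_s)^{-c}\exp(-L_s a_s)$ on $(0,\infty)$.

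\textbf{Step 2: Normalizing constant.} Set $u=1+a_s$ then $v=L_s u$ to obtain
\begin{equation*}
Z_s=\int_0^{\infty}(1+a)^{-c}e^{-L_s a}\,da = e^{L_s}L_s^{c-1}\Gamma(-c+1,L_s),
\end{equation*}
where $\Gamma(s,x)=\int_x^{\infty}t^{s-1}e^{-t}\,dt$ denotes the upper incomplete Gamma function. (Convergence at $0$ needs $c<1$; for $c=1$ the integrand is integrable on $(0,\infty)$ and the formula is recovered by analytic continuation with $\Gamma(0,L_s)=E_1(L_s)$.)

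\textbf{Step 3: First moment.} Write $a=(1+a)-1$ to split
\begin{equation*}
\int_0^{\infty}a(1+a)^{-c}e^{-L_s a}\,da = \int_0^{\infty}(1+a)^{-c+1}e^{-L_s a}\,da - Z_s,
\end{equation*}
and apply the same substitution to the first piece to get $e^{L_s}L_s^{c-2}\Gamma(-c+2,L_s)$. Dividing by $Z_s$ yields
\begin{equation*}
(\lambda_s^{-2})^{(1)} = \frac{\Gamma(-c+2,L_s)}{L_s\,\Gamma(-c+1,L_s)} - 1,
\end{equation*}
which is (\ref{sm_lam_c}).

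\textbf{Step 4: The non-annealed case.} When $c=1$, evaluate $\Gamma(1,L_s)=e^{-L_s}$ and $\Gamma(0,L_s)=E_1(L_s)$, giving $(\lambda_s^{-2})^{(1)}=e^{-L_s}/[L_s E_1(L_s)]-1=1/[L_s e^{L_s}E_1(L_s)]-1$, as claimed.

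\textbf{Anticipated difficulty.} Everything reduces to standard manipulations; the one point that requires care is the cancellation of the $\log a_s$ terms between the Gaussian likelihood and the beta-prime prior, because a missed factor of $c$ would instead leave a residual $\frac{1-c}{2}\log a_s$ and promote the density to a completely different family (confluent hypergeometric rather than the present one). Verifying that the annealing temperature multiplies both contributions identically, so that the cancellation survives for all $c\in(0,1]$, is the only conceptual step; the rest is integration.
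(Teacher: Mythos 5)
Your proof is correct and follows essentially the same route as the paper's: identify $q(a_s)\propto(1+a_s)^{-c}\exp(-L_s a_s)$ from the annealed update, then reduce the normalizing constant and the first moment to upper incomplete Gamma functions via the substitution $u=1+a_s$ and the split $a_s=(1+a_s)-1$; your Step 1 merely makes explicit the cancellation of the $\log a_s$ terms that the paper leaves implicit in ``using (\ref{sm_upd}), one finds.'' The only quibble is the parenthetical in your Step 2: the integral converges at $0$ for every $c$, and $\Gamma(1-c,L_s)$ is well defined for all $c$ because its lower limit is $L_s>0$ rather than $0$, so neither the restriction $c<1$ nor any analytic continuation is needed there.
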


\begin{proof}
Write $a_s = \lambda_s^{-2}$ for simplicity. Using (\ref{sm_upd}), one finds
$$q\lb a_s \rb %
\propto (1 + a_s)^{-c} \exp\lb - L_s a_s\rb, \quad a_s >0\,.$$
One then needs to compute
$$ a_s^{(1)} = \frac{\displaystyle\int_0^\infty a_s(1 + a_s)^{-c} \exp\lb - L_s a_s\rb \mathrm{d}a_s}{\displaystyle\int_0^\infty (1 + a_s)^{-c} \exp\lb - L_s a_s\rb \mathrm{d}a_s}\,.$$
The denominator is obtained as 
\begin{eqnarray*}
\int_0^\infty (1 + a_s)^{-c} \exp\lb - L_s a_s\rb \mathrm{d}a_s &=& \exp\lb L_s\rb \int_0^\infty (1 + a_s)^{-c} \exp\lc - L_s (1 + a_s)\rc \mathrm{d}a_s \\
&=&  \exp\lb L_s\rb L_s^{c-1} \,\Gamma(-c+1, L_s)
\end{eqnarray*}
with $\Gamma(s, x) =  \int_{x}^\infty t^{s-1} \mathrm{e}^{-t}\mathrm{d}t$.
The numerator can be decomposed as
$$
\int_0^\infty a_s(1 + a_s)^{-c} \exp\lb - L_s a_s\rb \mathrm{d}a_s = \int_0^\infty (1 + a_s)^{1-c} \exp\lc - L_s a_s\rc \mathrm{d}a_s - \int_0^\infty (1 + a_s)^{-c} \exp\lc - L_s a_s\rc \mathrm{d}a_s $$
The second term is the denominator computed above (changing the sign). The first term can be computed in a similar fashion as
$$ \int_0^\infty (1 + a_s)^{1-c} \exp\lc - L_s a_s\rc \mathrm{d}a_s  =  \exp\lb L_s\rb L_s^{c-2} \,\Gamma(-c+2, L_s)\,.$$
The first part of the lemma follows immediately. The second part is trivially obtained by noting that 
$\Gamma(1, L_s) = e^{-L_s}$ and $\Gamma(0, L_s) = \displaystyle \int_{L_s}^{\infty} t^{-1}e^{-t} \mathrm{d}t= \mathrm{E}_1\left(L_s\right)$.
\end{proof}
To avoid overflow/underflow issues and ensure numerical stability, we implemented these updates using the \emph{log-sum-exp} formulation \citep{calafiore2014optimization} where appropriate and we used an iterative scheme based on continued fractions for evaluating $\exp(x)\mathrm{E}_1(x)$, $x>0$, similarly as described in \citet{neville2014mean}. 

We now provide the details for the variational lower bound, $\mathcal{L}(q)$, of the marginal log-likelihood, $\log p(y)$; $\mathcal{L}(q)$ is evaluated to monitor convergence at each iteration, once the final temperature $T = 1$ has been reached:
\begin{eqnarray*}
\mathcal{L}(q)&=& \int q(v)\log\left\{\frac{p(y, v)}{q(v)}\right\}\mathrm{d}v \\
&=& \sum_{t = 1}^q \mathcal{L}_y \lb y_{t}\mid \beta_{t},\gamma_{t}, \tau_t\rb + 
\sum_{s=1}^{p}\sum_{t=1}^{q}\mathcal{L}_{\beta, \gamma}\lb \beta_{st},\gamma_{st}, z_{st}\mid \sigma^{-2}, \tau_t, \theta_s, \zeta_t\rb + \sum_{t=1}^{q}\mathcal{L}_\zeta\lb \zeta_{t}\rb  \\ 
&& + \sum_{s=1}^{p}\mathcal{L}_\theta\lb \theta_{s}\rb  + \mathcal{L}_{\sigma_0}\lb \sigma_0^{-2} \rb  + \mathcal{L}_\xi\lb \xi^{-1} \rb + \sum_{s = 1}^p \mathcal{L}_\lambda\lb \lambda_s^{-2} \rb
+ \mathcal{L}_\sigma\lb \sigma^{-2}\rb +\sum_{t=1}^{q}\mathcal{L}_\tau\lb \tau_t \rb, 
\end{eqnarray*}
where
\begin{eqnarray*}
\mathcal{L}_y\left(y_t \mid \beta_t, \tau_t\right) &=&\mathrm{E}_q\left\{\log p(y_t \mid \beta_t, \tau_t)\right\} \\
&=&-\frac{n}{2} \log(2\pi)+\frac{n}{2}\mathrm{E}\left(\log \tau_t \right)-\tau_t^{(1)}\left\{\kappa_{\tau,t}-\frac{1}{2}\sum_{s=1}^p\gamma_{st}^{(1)} \left(\sigma_{\beta, st}^2+\mu_{\beta, st}^2\right)\left(\sigma^{-2}\right)^{(1)}-\kappa_t\right\}\,,\hspace{0.1cm}
\end{eqnarray*}
\begin{eqnarray*}
\mathcal{L}_{\beta, \gamma}\lb \beta_{st},\gamma_{st}, z_{st}\mid \sigma^{-2}, \tau_t, \theta_s, \zeta_t\rb  &=& \mathrm{E}_{q}\log p\lb \beta_{st}\mid \gamma_{st}, \sigma^{-2}, \tau_t\rb +\mathrm{E}_{q}\log p\lb \gamma_{st}\mid z_{st}\rb \\
&& +\, \mathrm{E}_{q}\log p\lb z_{st}\mid \theta_s, \zeta_t\rb -\mathrm{E}_{q}\log q\lb \beta_{st},\gamma_{st}, z_{st}\rb \\
&=&\mathcal{L}_{\beta}\lb \beta_{st} \mid \gamma_{st}, z_{st}, \sigma^{-2}, \tau_t \rb + \mathcal{L}_{\gamma}\lb \gamma_{st}, z_{st}\mid \theta_s, \zeta_t\rb,\hspace{2cm}
\end{eqnarray*}
with
\begin{eqnarray*}
\mathcal{L}_\beta\lb \beta_{st} \mid \gamma_{st}, z_{st}, \sigma^{-2}, \tau_t \rb  &=&  \mathrm{E}_{q}\log p\lb \beta_{st}\mid \gamma_{st}, \sigma^{-2}, \tau_t\rb -\mathrm{E}_{q}\log q\lb \beta_{st}\mid z_{st}\rb\\
&=& \frac{1}{2}\gamma_{st}^{(1)}\lc \mathrm{E}_{q}\lb\log \sigma^{-2}\rb +\mathrm{E}_{q}\lb\log \tau_t\rb-\lb \mu_{\beta,st}^{2}+\sigma_{\beta,st}^{2}\rb\lb \sigma^{-2}\rb ^{(1)} \tau_t^{(1)}\rc \\ 
&&+\, \frac{1}{2} \gamma_{st}^{(1)}\lb \log\sigma_{\beta, st}^{2} + 1\rb,\hspace{0.8cm}
\end{eqnarray*}
\begin{eqnarray*}
\mathcal{L}_{\gamma}\lb \gamma_{st}, z_{st}\mid \theta_s, \zeta_t\rb &=&  \mathrm{E}_{q}\log p\lb \gamma_{st}\mid z_{st}\rb + \mathrm{E}_{q}\log p\lb z_{st} \mid \theta_s, \zeta_t\rb -\mathrm{E}_{q}\log q\lb z_{st}\mid \gamma_{st}\rb-\mathrm{E}_{q}\log q\lb \gamma_{st}\rb\\
& = & \mathrm{E}_{q}\ls \gamma_{st} \log \one\{z_{st}>0\} +\lb1-\gamma_{st}\rb\log\one\{z_{st}\leq0\} -\frac{1}{2}\log(2\pi) - \frac{1}{2} (z_{st} - \theta_s - \zeta_t)^2 \rs\\
&&\,+ H \lb z_{st} \mid \gamma_{st} = 1\rb \gamma_{st}^{(1)} + H \lb z_{st} \mid \gamma_{st} = 0\rb \lb 1 - \gamma_{st}^{(1)}\rb -\gamma_{st}^{(1)} \log \gamma_{st}^{(1)}  \\
&&\, -\lb 1- \gamma_{st}^{(1)}\rb \log\lb 1-  \gamma_{st}^{(1)}\rb\\
&=&  \lb 1- \gamma_{st}^{(1)}\rb\log\lc 1- \Phi\lb \alpha_{st}^{(1)}\rb\rc + \gamma_{st}^{(1)} \log \Phi\lb \alpha_{st}^{(1)}\rb - \frac{1}{2} \sigma_{\theta, s}^2 - \frac{1}{2} \sigma_{\zeta, t}^2-\gamma_{st}^{(1)} \log \gamma_{st}^{(1)} \\
&&\,  -\lb 1- \gamma_{st}^{(1)}\rb \log\lb 1-  \gamma_{st}^{(1)}\rb,
\end{eqnarray*}
where %
$H(\cdot)$ is the entropy (\ref{entr}) (which cancels out).
We then find
\begin{eqnarray*}
\mathcal{L}_\zeta\lb \zeta_t \rb &=& \mathrm{E}_q\left\{\log p(\zeta_t)\right\} - \mathrm{E}_q\left\{\log q(\zeta_t)\right\} \\
 &=&\frac{1}{2} \lc - \log t_0^2 +  \log\left(\sigma^2_{\zeta,t}\right) - t_0^{-2}\left(\mu_{\zeta,t} - n_0\right)^2 -  t_0^{-2}\sigma^2_{\zeta,t} + 1\rc,\hspace{4cm}
\end{eqnarray*}
\begin{eqnarray*}
\mathcal{L}_\theta\lb \theta_s \rb &=& \mathrm{E}_q\left\{\log p(\theta_s)\right\} - \mathrm{E}_q\left\{\log q(\theta_s)\right\} \\
&=&\frac{1}{2} \lc \lb\log \sigma_0^{-2}\rb^{(1)} + \log(q)+  \lb\log \lambda_s^{-2}\rb^{(1)} +  \log\left(\sigma^2_{\theta, s}\right)    - q\lb\sigma_0^{-2}\rb^{(1)} \lb\lambda_s^{-2}\rb^{(1)}\lb \mu_{\theta,s}^2  + \sigma^2_{\theta,s} \rb  + 1\rc,
\end{eqnarray*}
\begin{eqnarray*}
\mathcal{L}_{\sigma_0}\lb \sigma_0^{-2} \rb &=& \mathrm{E}_q\left\{\log p(\sigma_0^{-2})\right\} - \mathrm{E}_q\left\{\log q(\sigma_0^{-2})\right\} \\
&=& \lb \frac{1}{2}-\nu_{\sigma_0}\rb\lb\log\sigma_0^{-2}\rb^{(1)}-\lc \lb\xi^{-1}\rb^{(1)} -\rho_{\sigma_0}\rc\lb\sigma_0^{-2}\rb^{(1)}+ \frac{1}{2}\lc\log\xi^{-1}\rb^{(1)} -\nu_{\sigma_0}\log\rho_{\sigma_0}\\&&-\frac{1}{2}\log\pi +\log\Gamma\lb \nu_{\sigma_0}\rb ,\hspace{1cm}
\end{eqnarray*}
\begin{eqnarray*}
\mathcal{L}_\xi\lb \xi^{-1} \rb &=& \mathrm{E}_q\left\{\log p( \xi^{-1} )\right\} - \mathrm{E}_q\left\{\log q( \xi^{-1})\right\} \\
&=& \lb \frac{1}{2}-\nu_{\xi}\rb\lb\log\xi^{-1}\rb^{(1)}-\lb 1 -\rho_{\xi}\rb\lb\xi^{-1}\rb^{(1)} -\nu_{\xi}\log\rho_{\xi} -\frac{1}{2}\log\pi +\log\Gamma\lb \nu_{\xi}\rb, \hspace{1cm}%
\end{eqnarray*}
\begin{eqnarray*}
\mathcal{L}_\lambda\lb \lambda_s^{-2} \rb &=& \mathrm{E}_q\left\{\log p(\lambda_s^{-2} )\right\} - \mathrm{E}_q\left\{\log q(\lambda_s^{-2})\right\} \\
&=& -\log \pi - \frac{1}{2} \lb\log\lambda_s^{-2}\rb^{(1)} + L_s \lc \lb\lambda_s^{-2}\rb^{(1)} + 1 \rc  + \log \mathrm{E}_1 \lb L_s \rb,\hspace{4cm}
\end{eqnarray*}
\begin{eqnarray*}
\mathcal{L}_\sigma\lb \sigma^{-2}\rb  &=& \mathrm{E}_{q}\log p\lb\sigma^{-2}\rb-\mathrm{E}_{q}\log q\lb\sigma^{-2}\rb \\ 
&=&\lb \nu-\nu_\sigma\rb\lb\log\sigma^{-2}\rb^{(1)}-\lb \rho-\rho_\sigma\rb\lb\sigma^{-2}\rb^{(1)}+\nu\log\rho-\nu_\sigma\log\rho_\sigma-\log\Gamma\lb \nu\rb +\log\Gamma\lb \nu_\sigma\rb ,
\end{eqnarray*}
\begin{eqnarray*}
\mathcal{L}_\tau\lb \tau_t \rb &=& \mathrm{E}_q\left\{\log p(\tau_{t})\right\} - \mathrm{E}_q\left\{\log q(\tau_{t})\right\} \\
 &=&\left(\eta_t-\eta_{\tau, t}\right)\left(\log \tau_t \right)^{(1)}-\left(\kappa_t-\kappa_{\tau, t}\right)\tau_t^{(1)}+\eta_t\log \kappa_t - \eta_{\tau, t}\log \kappa_{\tau, t} -\log \Gamma(\eta_t) +\log \Gamma(\eta_{\tau, t}) \,.
\end{eqnarray*}
All variational updates and terms composing the variational objective function can be computed in closed form, albeit using special functions. In particular, for $X~\sim~\mathrm{Gamma}(a, b)$, we have 
$$\mathrm{E}\lb \log X\rb =\Psi\lb a\rb -\log\left( b\right),$$
where $\Psi(\cdot)$ stands for the digamma function. To optimize computational efficiency, updates are made by blocks, in a vectorized fashion, for all responses; convergence under such a scheme is guaranteed by the concavity of the objective function in each of the subvectors composing the blocks. Moreover, the variational objective function, $\mathcal{L}(q)$, is guaranteed to increase monotonically at every iteration, which provides a useful check against mistakes in the computations or the implementation. 
The algorithm returns all variational parameters after convergence; in particular, the variational posterior quantities
$$\mathrm{E}_q\left(\gamma_{st}\right) = \gamma_{st}^{(1)}, \qquad \mathrm{E}_q\left(\beta_{st}\right) = \gamma_{st}^{(1)} \mu_{\beta, st}, \qquad \mathrm{E}_q\left(\theta_s\right) = \mu_{\theta,s}, \qquad s = 1, \ldots, p, \; t = 1, \ldots, q,$$ can be directly employed to perform predictor-response and hotspot selection.
\newpage
\section{Complements to simulation experiments}
We provide additional performance illustrations for the simulation experiments presented in Section \ref{sec_simulations}. 

\subsection{\,Simulation study 1: performance with global-local modelling}\label{app_s1_fig}

Figure \ref{app_fig_s1_ann} compares hotspot selection performance for the five models discussed in Section \ref{sec_s1} on the reference scenario. It also compares the cumulated posterior probabilities for the small simulated hotspots for the fixed-variance model with largest variance and our global-local proposal.   
\begin{figure}[h!]
\begin{center}
\includegraphics[scale=0.5]{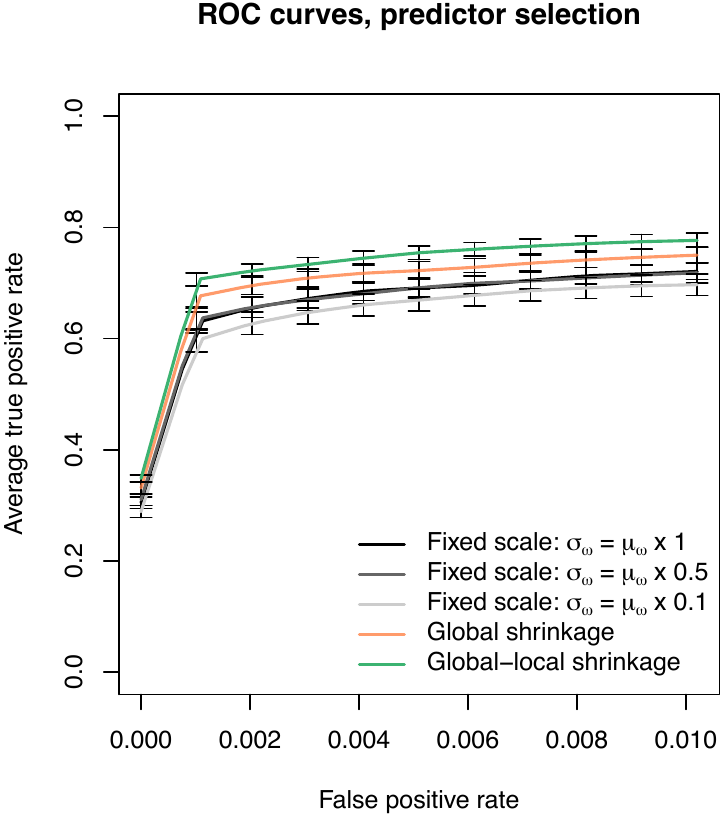}%
\qquad \qquad
\includegraphics[scale=0.5]{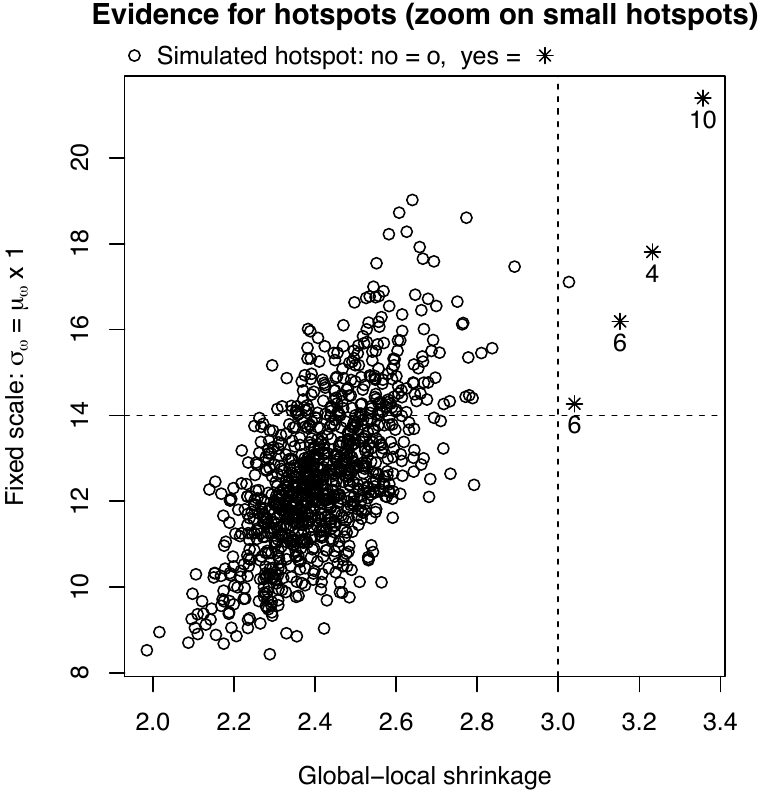}
\end{center}
\caption{\small Performance of five hotspot modelling approaches. %
Left: truncated average receiver operating characteristic curves for hotspot selection with 95\% confidence intervals obtained from $64$ replications. 
Right: evidence for hotspots computed as, for each candidate predictor, the sum of its posterior probabilities of associations with all responses; average over $16$ replications. Zoom on the noise level and four smallest hotspots, with simulated sizes $4$, $6$, $6$ and $10$. The dashed lines highlight how the global-local shrinkage proposal is better than the fixed-scale model with $\sigma_\omega = \mo$ in discriminating weak hotspot signals from the noise. The data comprise $p = 1,000$ simulated SNPs with $20$ hotspots, and $q = 20,000$ responses, of which $200$ are associated with at least one hotspot, leaving the rest of the responses unassociated. 
The block-autocorrelation coefficients for SNPs were drawn from the interval $(0.75, 0.95)$, and the residual block-equicorrelation coefficients for responses, from the interval $(0, 0.25)$. At most $25\%$ of each response variance is explained by the hotspots. For the fixed-variance models, we used a base-rate of $\mo = 0.002$, and scales of $\sigma_\omega = \mo \times \{1, 0.5, 0.1\}$.}\label{app_fig_s1_ann}
\end{figure}

\subsection{\,Simulation study 1: performance for a grid of correlation levels}\label{app_s1_corr}

Tables \ref{sm_tb_corr_grid} and \ref{sm_tb_corr_grid2} compare the pairwise and hotspot selection performance for the five models in Section \ref{sec_s1} by varying the correlation levels of the predictors and responses from the reference scenario. The performance decreases slightly as the correlation among variables increases, and more so for hotspot selection and when the response residual correlation is high. The global-local model continues to outperform the others, except for the setting $\rho_x \in (0.4, 0.6)$ - $\rho_y \in (0.6, 0.8)$, where it has an average hotspot selection performance slightly inferior to that of the ``global-scale-only'' model; yet the former is still largely superior to the latter in terms of pairwise selection.

We expect the dependence simulated for most scenarios reported in Tables  \ref{sm_tb_corr_grid} and \ref{sm_tb_corr_grid2} to be much higher than in real data. For instance, as shown in Figure \ref{sm_fig_corr_real_fairfax}, dependence in the raw transcript data of Section \ref{sec_application} is mostly negligible even if it can be substantial within transcript modules  (the $0.1\%-$ and $99.9\%-$quantiles are %
$-0.57$ and $0.66$). %
Likewise, for most SNPs sufficiently far appart will be uncorrelated, and the local correlation due to linkage disequilibrium seems mostly moderate (the $0.1\%$ and $99.9\%$ quantiles are %
$-0.42$ and $0.65$). 

\begin{table}[h!]
\begin{center}
\footnotesize
\begin{tabular}{rrlllllllll}
  \hline
 Model &$\mu_\omega \times 0.1$ & $\mu_\omega \times 0.5$ & $\mu_\omega \times 1$ &G & GL\\
  Predictor/response auto/equi-correlation&&&&\\
  \hline
$\rho_x \in (0, 0.2)$, $\rho_y \in (0.2, 0.4)$ & $44.4$ $(0.5)$ & $62.8$ $(0.9)$ & $78.4$ $(0.5)$ & $54.5$ $(0.8)$ & $\mathbf{90.3}$ $\mathbf{(0.4)}$\\ 
 $\rho_y \in (0.4, 0.6)$ & $44.4$ $(0.5)$ & $62.8$ $(0.8)$ & $76.7$ $(0.6)$ & $54.3$ $(0.8)$ & $\mathbf{90.0}$ $\mathbf{(0.4)}$ \\ 
 $\rho_y \in (0.6, 0.8)$ & $44.4$ $(0.5)$ & $62.7$ $(0.8)$ & $72.7$ $(0.7)$ & $54.1$ $(0.9)$ & $\mathbf{89.1}$ $\mathbf{(0.6)}$\\ 
$\rho_x \in (0.4, 0.6)$, $\rho_y \in (0.2, 0.4)$ & $44.2$ $(0.4)$ & $62.6$ $(0.6)$ & $77.8$ $(0.4)$ & $54.9$ $(1.0)$ & $\mathbf{90.0}$ $\mathbf{(0.5)}$\\ 
 $\rho_y \in (0.4, 0.6)$ & $44.3$ $(0.4)$ & $62.5$ $(0.6)$ & $76.4$ $(0.4)$ & $55.0$ $(0.9)$ & $\mathbf{89.5}$ $\mathbf{(0.4)}$\\ 
 $\rho_y \in (0.6, 0.8)$& $44.3$ $(0.4)$& $62.4$ $(0.6)$ & $72.6$ $(0.5)$ & $55.2$ $(0.9)$ & $\mathbf{88.3}$ $\mathbf{(0.6)}$ \\ 
$\rho_x \in (0.8, 1)$, $\rho_y \in (0.2, 0.4)$ & $44.3$ $(0.4)$ & $58.7$ $(0.6)$ & $77.9$ $(0.6)$ & $59.2$ $(0.5)$ & $\mathbf{88.3}$ $\mathbf{(0.6)}$ \\
 $\rho_y \in (0.4, 0.6)$ & $44.4$ $(0.4)$& $58.9$ $(0.6)$ & $76.6$  $(0.6)$& $59.0$ $(0.7)$ & $\mathbf{88.1}$ $\mathbf{(0.7)}$ \\
 $\rho_y \in (0.6, 0.8)$ & $44.4$ $(0.4)$ & $58.9$ $(0.6)$ & $73.7$ $(0.7)$ & $59.1$ $(0.6)$ & $\mathbf{87.6}$ $\mathbf{(0.7)}$ \\ 
   \hline
\end{tabular}
\caption{\small Predictor-response selection performance for a grid of correlation settings: average standardized partial areas under the curve $\times 100$ with false positive threshold $0.01$, each based on $16$ replicates; the remaining settings correspond to the ``reference'' case is displayed in Figure \ref{fig_s1_ann} of the paper. Standard errors are in parentheses and, for each scenario, the best average performance is in bold.}\label{sm_tb_corr_grid}
\end{center}
\end{table}

\begin{table}[h!]
\begin{center}
\small
\begin{tabular}{rrlllllllll}
  \hline
 Model & $ \mu_\omega \times 0.1$ & $ \mu_\omega \times 0.5$ & $\mu_\omega \times 1$& G & GL\\
 Predictor/response auto/equi-correlation&&&&\\
  \hline
$\rho_x \in (0, 0.2)$, $\rho_y \in (0.2, 0.4)$ & $65.5$ $(3.9)$ & $65.4$ $(3.8)$ & $62.9$ $(3.4)$ & $73.7$ $(2.3)$ & $\mathbf{80.5}$ $\mathbf{(1.9)}$\\ 
 $\rho_y \in (0.4, 0.6)$ & $55.2$ $(3.5)$& $55.2$ $(3.3)$ & $48.9$ $(3.1)$ & $70.5$ $(2.1)$ & $\mathbf{76.4}$ $\mathbf{(2.4)}$ \\ 
$\rho_y \in (0.6, 0.8)$ & $42.3$ $(3.3)$ & $41.1$ $(3.5)$ & $31.9$ $(4.1)$ & $\mathbf{57.6}$ $\mathbf{(3.3)}$ & $56.7$ $(4.3)$\\ 
$\rho_x \in (0.4, 0.6)$, $\rho_y \in (0.2, 0.4)$ & $63.3$ $(4.3)$ & $63.8$ $(4.0)$ & $61.6$ $(3.7)$& $74.7$ $(2.6)$ & $\mathbf{78.7}$ $\mathbf{(2.4)}$ 
 \\ 
 $\rho_y \in (0.4, 0.6)$ & $57.2$ $(4.2)$ & $57.2$ $(3.6)$ & $52.7$ $(3.1)$ & $70.9$ $(3.7)$ & $\mathbf{76.9}$ $\mathbf{(4.2)}$\\ 
 $\rho_y \in (0.6, 0.8)$& $45.1$ $(2.8)$ & $43.5$ $(3.1)$ & $33.6$ $(3.9)$ & $58.0$ $(3.4)$ & $\mathbf{59.9}$ $\mathbf{(3.1)}$\\
$\rho_x \in (0.8, 1)$, $\rho_y \in (0.2, 0.4)$ & $59.8$ $(4.0)$ & $63.5$ $(3.9)$& $64.6$ $(2.5)$& $73.7$ $(2.8)$ & $\mathbf{76.4}$ $\mathbf{(2.8)}$\\   
 $\rho_y \in (0.4, 0.6)$ & $51.9$ $(4.8)$ & $55.2$ $(5.1)$ & $53.6$ $(3.4)$& $71.2$ $(3.3)$ & $\mathbf{73.8}$ $\mathbf{(2.6)}$ \\ 
 $\rho_y \in (0.6, 0.8)$ & $41.8$ $(4.3)$ & $42.7$ $(4.0)$& $36.6$ $(3.2)$ & $56.7$ $(4.2)$& $\mathbf{60.0}$ $\mathbf{(3.3)}$ \\ 
   \hline
\end{tabular}
\caption{\small Predictor (hotspot) selection for a grid of correlation settings; see caption of Table \ref{sm_tb_corr_grid} for the remaining settings.}\label{sm_tb_corr_grid2}
\end{center}
\end{table}

\begin{figure}[h!]
\small
\centering
\includegraphics[scale=0.38]{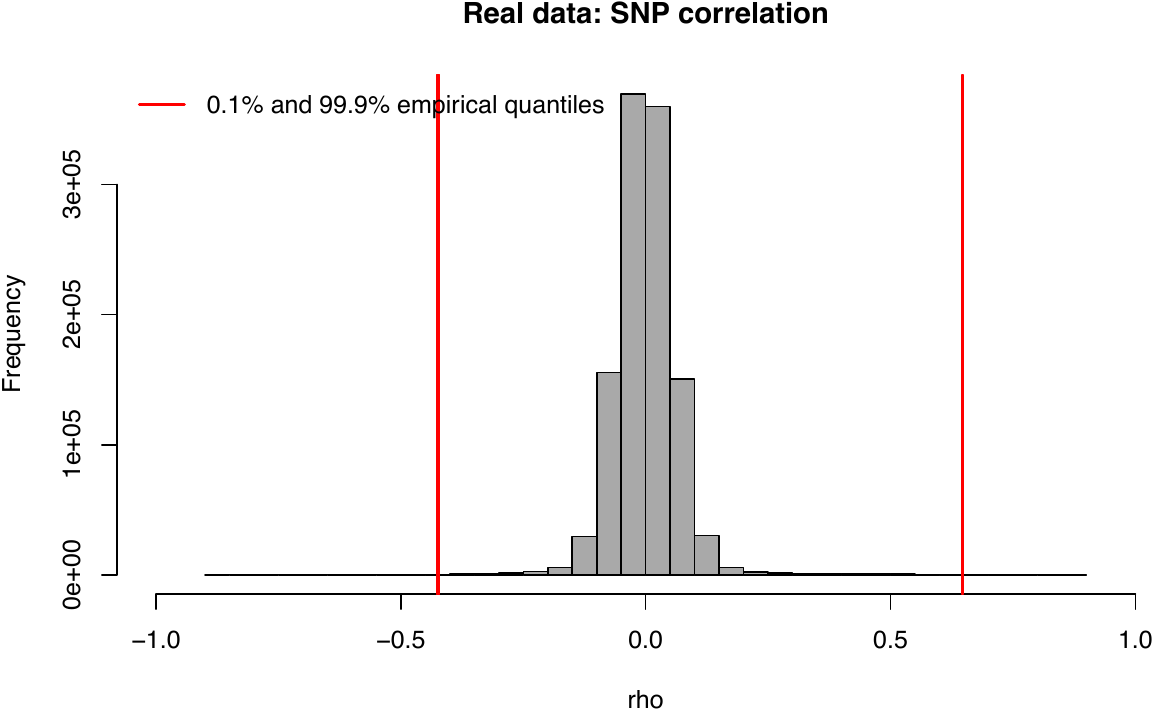}
\includegraphics[scale=0.38]{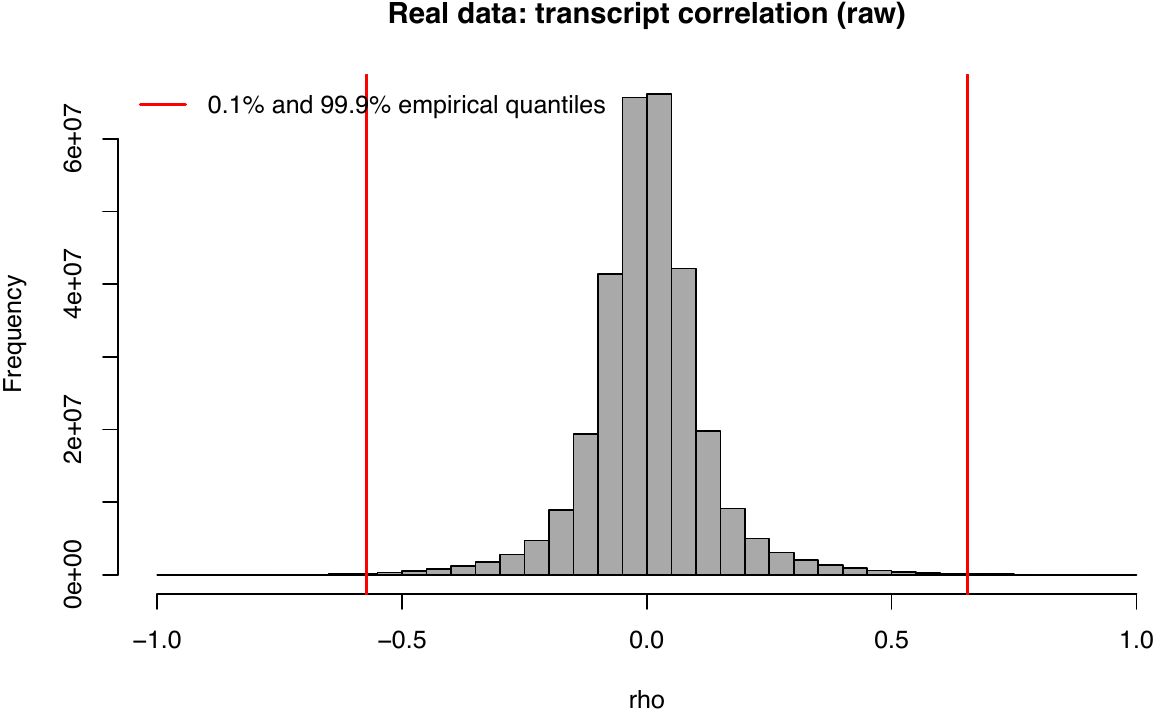}
\caption{\small Empirical correlation for the SNPs (left) and transcript expression levels (right) of the data used in Section \ref{sec_application}.
}\label{sm_fig_corr_real_fairfax}
\end{figure}

\newpage
\subsection{\,Simulation study 2: performance with and without simulated annealing}

Figure \ref{sm_fig_s2_ann} compares the performance of classical and annealed variational inferences on the data of Section~\ref{sec_s2}. 
\begin{figure}[h!]
\small
\centering
\includegraphics[scale=0.55]{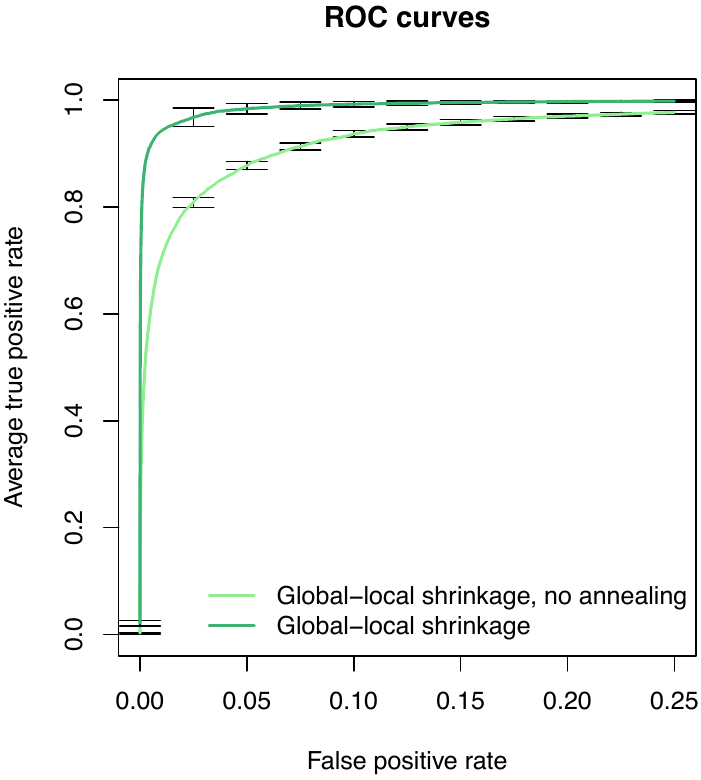}%
\caption{\small  Pairwise selection performance by classical variational algorithm and annealed variational algorithm. Truncated average receiver operating characteristic curves for hotspot selection with 95\% confidence intervals obtained from $16$ replications.}\label{sm_fig_s2_ann}
\end{figure}

\section{\,Stimulated eQTL analysis: overlap of transcripts associated with hotspot 
rs$6581889$ across conditions
}\label{app_real}

Figure \ref{app_venn} displays the overlap of transcripts found associated with hotspot rs$6581889$ in the application to monocyte eQTL data presented in Section \ref{sec_application}. Most associations are shared between the unstimulated and the IFN-gamma conditions, although $107$ associations are specific to the latter condition. Stimulations by LPS of 2 hours and 24 hours have no triggering effect for rs$6581889$. 
\begin{figure}[h!]
\small
\centering
\includegraphics[scale=0.35]{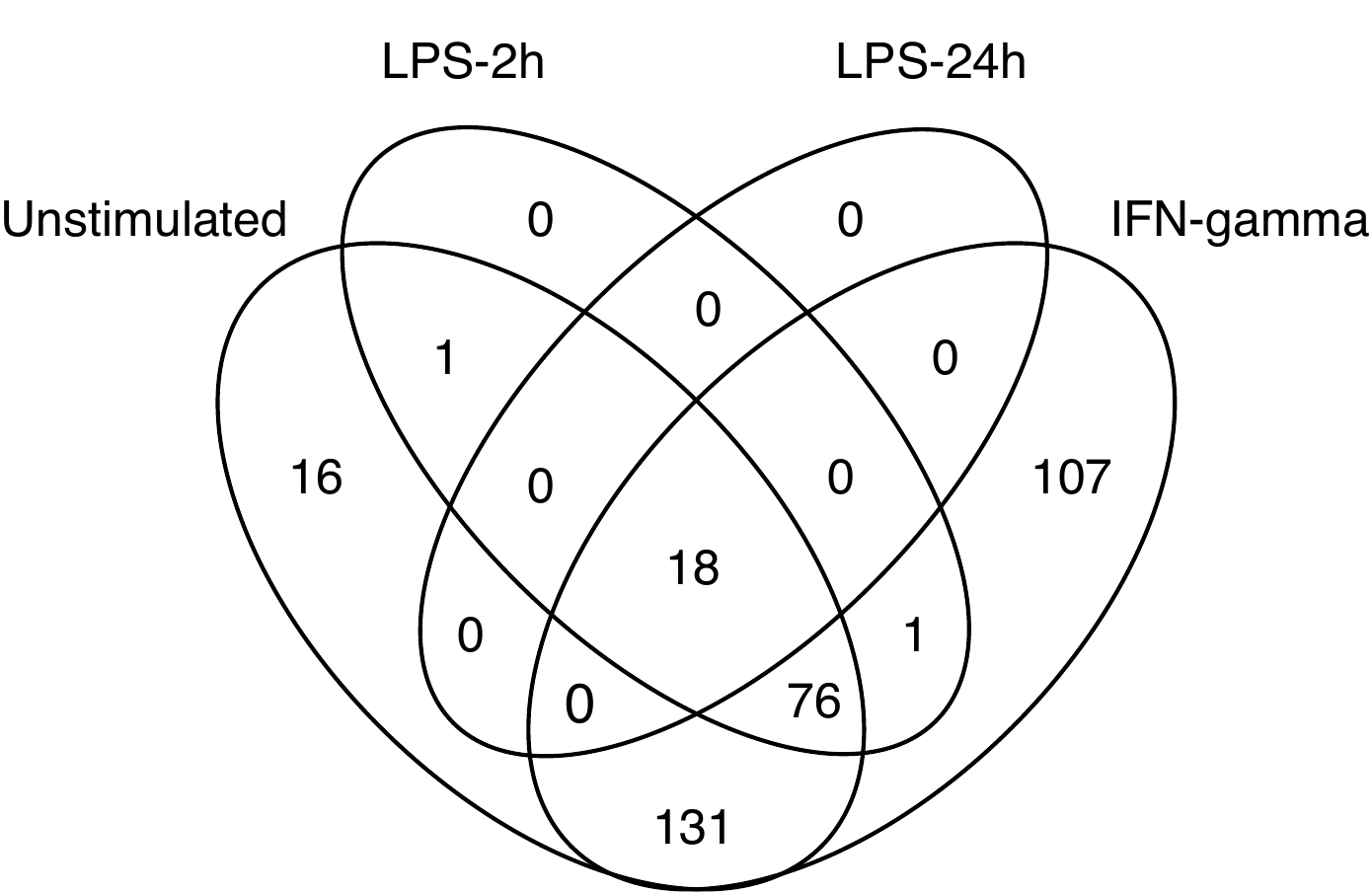}
\caption{\small Venn diagram for transcripts associated with hotspot rs$6581889$ across conditions.}\label{app_venn}
\end{figure}

\section{\,Glossary}\label{sm_gloss}
\begin{itemize}
\item {\bf Bayesian prior odds ratio}: for two hypotheses $H_1$ and $H_2$, the Bayesian prior odds ratio is $\pr(H_1) / \pr(H_2)$.\\
\item {\bf \emph{cis}- and \emph{trans}-acting}: genetic variants that act locally, affecting the levels of a nearby gene product, are said to be \emph{cis}-acting; genetic variants altering the levels of remote gene products are said to be \emph{trans}-acting.\\
\item {\bf global-local shrinkage factor}: parameter $\kappa_s$ which represents how much the hotspot parameter $\theta_s$ is shrunk towards the prior mean from the data mean, through the action of the horseshoe prior; see Lemma \ref{sm_lemma_mean}.\\
\item {\bf hotspot}: predictor associated with several response variables; in a molecular QTL setting, genetic variants regulating several molecular/clinical traits, which may be responsible for important functional mechanisms underlying complex traits.\\
\item {\bf hotspot propensity}: degree of pleiotropy of a hotspot, i.e., its propensity to influence multiple traits at once; hotspot propensity parameter: $\theta_s$, see model specification (\ref{eq_all}).\\
\item {\bf hotspot size}: number of responses/traits associated with a given hotspot.\\
\item {\bf linkage disequilibrium}: block dependence structures among variants along the genome resulting from the nonrandom assortment of alleles at two or more polymorphisms on a chromosome.\\
\item {\bf molecular QTL study  (eQTL, pQTL, mQTL)}: molecular quantitative trait locus study aims to uncover associations between genetic variants and molecular levels such as gene expression (eQTL), protein expression (pQTL) or metabolite levels (mQTL).\\
\item {\bf monocyte conditions}: the different types of stimulations performed on the monocytes (see Section \ref{sec_application}), i.e.,  they were exposed to the inflammation proxies interferon-$\gamma$ (IFN-$\gamma$), to differing durations of lipopolysaccharide (LPS 2h or LPS 24h), or they were left unstimulated.\\
\item {\bf pile-up}: artifactual hotspot effect caused by the lack of adjustment for the response dimension in flat likelihood cases; see Section \ref{sec_ps}. \\
\item {\bf pleiotropy}: effect of a hotspot genetic variant, i.e., regulation of several molecular/clinical traits by a single variant. \\
\item {\bf SNP}: single nucleotide polymorphism, variation in the nucleotide, A, T, G, or C, that is present to some appreciable extent in a population (the minor allele of this variation has frequency $> 1\%$ or $>5\%$).\\
\end{itemize}

\bibliographystyle{plainnat}
\bibliography{literature_ref}  

\end{document}